\newtcbox{\mymath}[1][]{
    nobeforeafter, math upper, tcbox raise base,
    enhanced, colframe=black!30!black,
    colback=green!30, boxrule=1pt,
    #1}
\numberwithin{equation}{section}
\newtheorem{theorem}{Theorem}[section]
\newtheorem{corollary}[theorem]{Corollary}
\newtheorem{lemma}[theorem]{Lemma}
\newtheorem{definition}{Definition}[section]
\newtheorem{remark}{Remark}[section]
\newtheorem{assumption}{Assumption}[section]
\theoremstyle{definition} 
\title{Pattern formation in a Swift-Hohenberg equation with\\spatially periodic coefficients}
\author{Jolien Kamphuis, Martina Chirilus-Bruckner}
\affil{\textit{Mathematical Institute, Leiden University, The Netherlands}}
\date{}
\begin{document}

\maketitle

\begin{abstract}
We study the Swift–Hohenberg equation -- a paradigm model for pattern formation -- with ``large" spatially periodic coefficients and find a Turing bifurcation that generates patterns whose leading‑order form is a Bloch wave modulated by solutions of a Ginzburg–Landau type equation. Since the interplay between forcing wavenumber and intrinsic wavenumber crucially shapes the spectrum and emerging patterns, we distinguish between resonant and non‑resonant regimes. Extending earlier work that assumed asymptotically small coefficients, we tackle the more involved onset analysis produced by $\mathcal{O}(1)$ forcing and work directly in Bloch space, where the richer structure of the bifurcating solutions becomes apparent. This abstract framework is readily transferable to more complex systems, such as reaction–diffusion equations arising as dryland vegetation models, where topography induces spatial heterogeneity.
\end{abstract}

\tableofcontents

\newpage

\section{Introduction}
Consider the Swift-Hohenberg equation
\begin{equation}\label{EQ:SH_periodic_coefficients}
    \partial_t u = \left[- (\partial_x^2+k_0^2)^2 + p(x) \right] u - \rho(x)u^3 \, ,
\end{equation}
for the unknown $u=u(x,t) \in\mathbb{R}$ depending on $x\in\mathbb{R}, t\geq 0,$ and given $k_0 > 0 $ and periodic coefficients
\begin{align}\label{EQ:periodic_ceofficients}
p\left(x+\frac{2\pi}{k_f}\right) = p(x) \, , \quad \rho\left(x+\frac{2\pi}{k_f}\right) = \rho(x) \, , \quad x \in \mathbb{R} \,,
\end{align}
with period $\frac{2\pi}{k_f}$ for some frequency $k_f > 0$.

\medskip

The classically studied Swift-Hohenberg equation (see \cite{Swift_Hohenberg}) can be recovered by setting $p(x) = \varepsilon^2 r$ for some $0 < \varepsilon \ll 1, r \in \mathbb{R}, r = \mathcal{O}(1)$, and $\rho(x) = 1$, $x \in \mathbb{R}$. For this choice the constant background solution $u_* = 0$ undergoes a Turing bifurcation at $r = 0$ giving rise to stable spatially periodic solutions around the wavenumber $k_0 > 0$ for $r > 0$. These are often referred to as {\it patterns}, since they are also directly linked to patterns in the two-dimensional case \cite{Hoyle2006Pattern}. The dynamics of the amplitude of Turing patterns is known to be governed by the Ginzburg-Landau (GL) equation (see, e.g. \cite{diprima_eckhaus_segel_1971})
\begin{equation}
    \partial_T A = r A + 4k_0^2\partial_X^2A-3|A|^2A,
\end{equation}
via the relation to the Swift-Hohenberg given by
\begin{equation}\label{eq:ansatz_cc}
    u(x,t) = u_* + \varepsilon A(X,T) e^{ik_0x} + \rm{c.c.} + \rm{h.o.t.},
\end{equation}
where $X:=\varepsilon x$, $T:=\varepsilon^2 t$ and $\rm{h.o.t.}$ are higher order terms w.r.t. $\varepsilon$. 

\medskip

{\bf Main objective.} The present work studies pattern formation for the case of the Swift-Hohenberg equation with multiplicative spatially periodic forcing as given in \eqref{EQ:SH_periodic_coefficients} by rigorous derivation and justification of the Ginzburg-Landau equation.

\medskip

The evolution of patterns in systems with a slightly varying geometry has been previously studied by deriving the form of a Ginzburg-Landau equation for a class of reaction-diffusion equations with quadratic nonlinearity and small varying terrain, written as $\nu p(k_fx)$, with $\nu=\mathcal{O}(\varepsilon^2)$,  $k_f=\varepsilon \sigma_0$, and $\sigma_0=\mathcal{O}(1)$ in \cite{Eckhaus_Kuske}, and with $\nu=\mathcal{O}(\varepsilon)$ and $k_f=\mathcal{O}(1)$ in \cite{Doelman_Schielen}. These works were motivated by generalizing sedimentation in straight rivers, treated in \cite{schielen_doelman_deswart}, to slightly meandering rivers. Ginzburg-Landau equations have also been derived for the Swift-Hohenberg model with small spatial forcing $p(x) = \varepsilon^2 r + \gamma \cos(k_f x)$ in \cite{Ehud_2008, Ehud_2012, Ehud_spatialperforcing, Ehud_2014, Ehud_2015_1, uecker2001}, where the forcing strength is taken of order $\gamma = \mathcal{O}(\varepsilon^k)$ for $k=1,2$. Furthermore, \cite{Ehud_2015_2, Ehud_2014_CDIMA} considers a spatial forcing in the Lengyel-Epstein model. We will elaborate more on related work in Section~\ref{sec:related_work}.

\medskip

{\bf Main novelty.} Unlike previous studies, we allow forcing of order one, $p(x)=\mathcal{O}(1)$, whose amplitude and wavelength are neither small nor suited to classical averaging techniques. This regime introduces subtler bifurcation analysis due to the necessity of Bloch analysis at bifurcation onset, but it better reflects many applications. In particular, in the context of vegetation patterns, an application that partially motivated our work, $p$ and $\rho$ capture topographic effects. The abstract framework used here is readily transferable to more complicated settings relevant to applications and, in particular, lay groundwork for studying more complex PDEs such as ecologically motivated reaction–diffusion systems (cf. e.g. \cite{Bastiaansen_CH4}) and delivers the $p(x)=\mathcal{O}(1)$ result that was not treated in \cite{Ehud_spatialperforcing}. Moreover, we give a validity proof and error estimates. We focus on the interaction between the system’s intrinsic wavenumber $k_0$ and the forcing wavenumber $k_f$ of the periodic coefficients $p$ and $\rho$. The interplay between forcing strength, wavenumber $k_f$ and intrinsic wavenumber $k_0$ will give rise to {\it resonant} and {\it non‑resonant} parameter regimes.

\subsection{Linear analysis}
In the classic case $p(x) = \varepsilon^2 r$ for some $0 < \varepsilon \ll 1, r \in \mathbb{R}, r = \mathcal{O}(1)$ and $\rho(x) = 1, x \in \mathbb{R}$, a Turing instability is easily anticipated from the fact that all spatially bounded solutions of the linearized Swift-Hohenberg equation can be built from $v(x,t)  = e^{ikx}e^{\lambda(k)t}$, $\lambda(k) = -(k^2 - k_0^2)^2 + \varepsilon^2 r$, where we depicted the so-called dispersion relation $(k, \lambda(k))$ where $\lambda(k) = \lambda(k;k_0,r)$ in Figure~\ref{fig:SH_cc_spec}. This shows that $u_*=0$ is linearly stable for $r<0$ and linearly unstable for $r>0$. Upon adding nonlinear terms, this instability creates stable periodic solutions with wavenumber close to $k_0$.  Note that, upon setting $p(x) = \varepsilon^2 r + p_*(x)$ with $p_*(x) = \mathcal{O}(\varepsilon^m), m > 0,$ and $k_f$ non-resonant, the linearized problem at onset is to leading order in $\varepsilon$ still the same as above, which makes the derivation of the GL equation similar to the constant-coefficient case.

\begin{figure}[ht]
\centering
\begin{subfigure}[]{0.32\textwidth}
    \includegraphics[width=\textwidth]{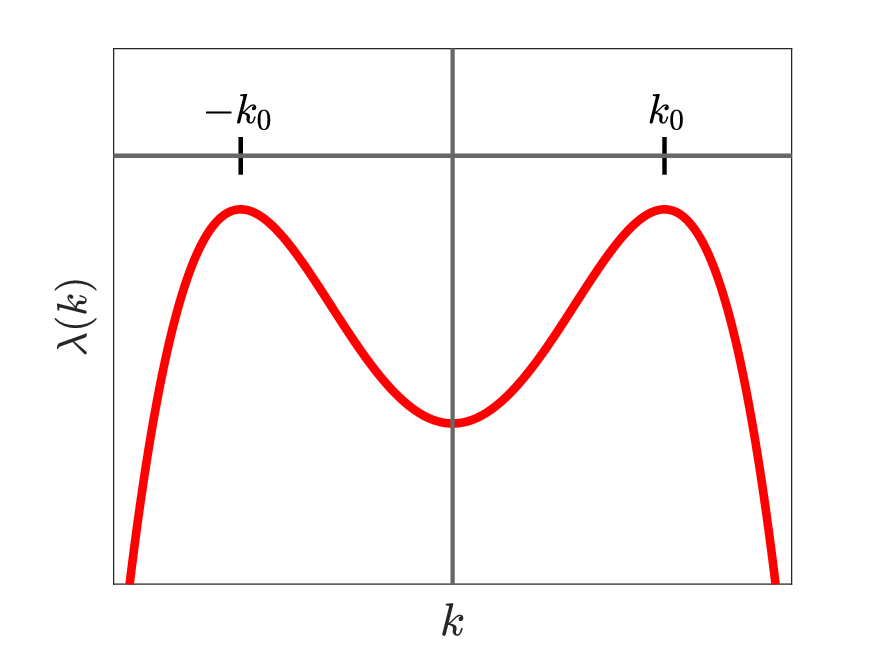}
    \caption{Spectrum before onset}
  \end{subfigure}
    \begin{subfigure}[]{0.32\textwidth}
    \includegraphics[width=\textwidth]{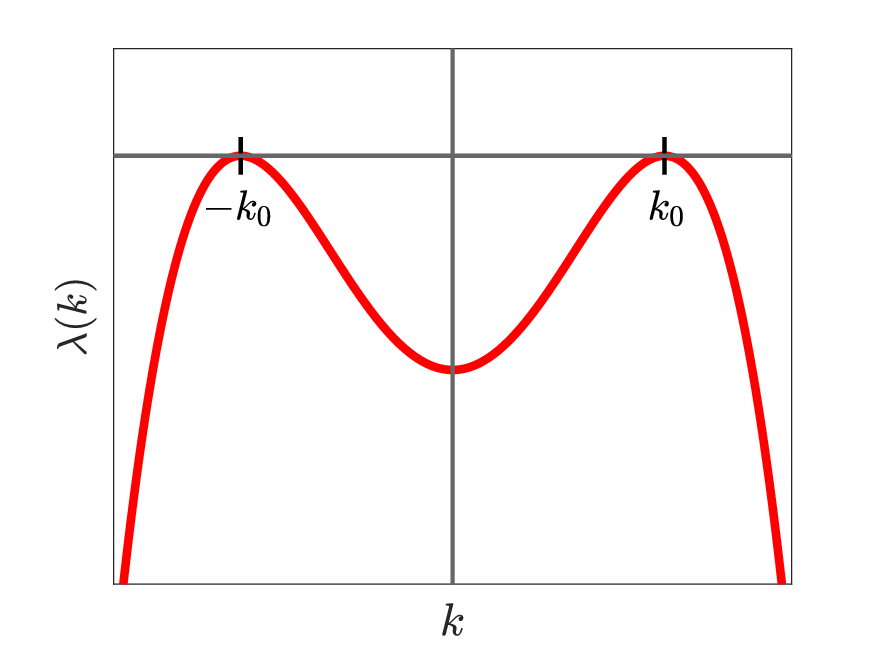}
    \caption{Spectrum at onset}
    \end{subfigure}
    \begin{subfigure}[]{0.32\textwidth}
    \includegraphics[width=\textwidth]{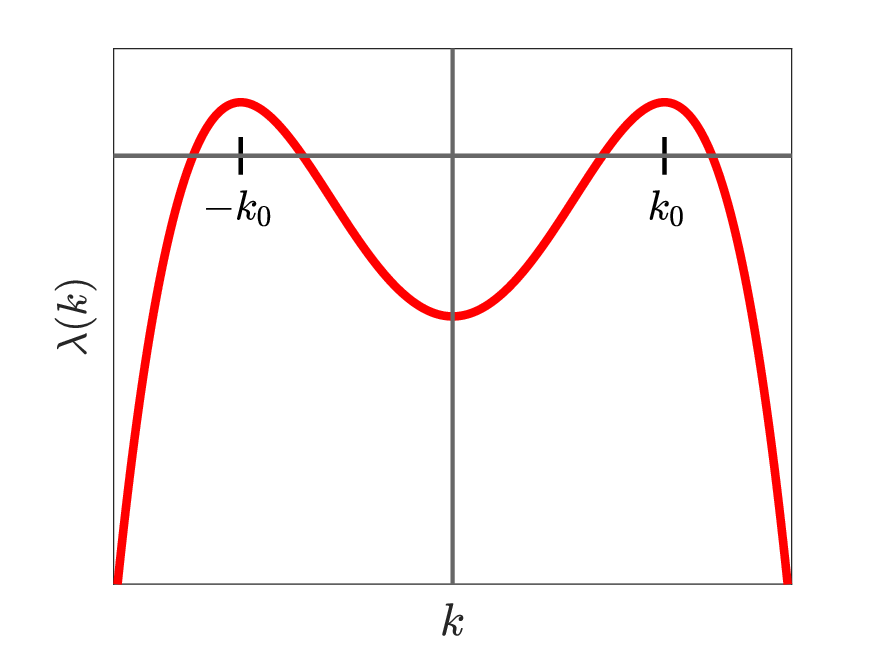}
    \caption{Spectrum after onset}
    \end{subfigure}
  \caption{Spectrum of the linear operator of the constant coefficient Swift-Hohenberg system around $u_*=0$. Before onset, $r=-1$ (left), at onset, $r=0$ (middle) and after bifurcation, $r=1$ (right).}
  \label{fig:SH_cc_spec}
\end{figure}

In stark contrast to this, the case of $p(x) = \varepsilon^2 r + p_*(x)$ with $p_*(x) = \mathcal{O}(1)$ is more complicated, since spatially bounded solutions of the linearized PDE involving
\begin{align}\label{EQ:linear_operator}
    \mathcal{L}_p:= -\left(\partial_x^2 - k_0^2\right)^2 + p(x)
\end{align}
are built from Floquet-Bloch solutions (cf. \cite{Simon_Reed})
\begin{align}\label{EQ:Floquet-Bloch_solutions}
 v(x,t)  = e^{ilx}\psi_n(\ell, x)e^{\lambda_n(\ell)t}\,, \quad n \in \mathbb{N} \, ,
\end{align}
with some function $\psi_n\left(\ell, x + \frac{2\pi}{k_f}\right) = \psi_n(\ell, x)$ that has the same period as $p, \rho$ and so-called band structure
\begin{align}\label{EQ:bands_periodic_coefficient}
(\ell, \lambda_n(\ell)) \, , \quad n \in \mathbb{N}, -k_f/2 \leq \ell < k_f/2\, ,
\end{align}
with $\lambda_n(\ell) = \lambda_n(\ell;k_0, p)$ which, in general, cannot be computed explicitly (see Section~\ref{sec:piecewise} and \cite{MCB_Breather_2011} for an exception where it can be specified in terms of elementary functions) making numerical explorations indispensable. While the occurrence of Floquet-Bloch solutions is fairly generic and is backed by abstract theory, them having special features as need for a Turing instability is less explored analytically. We refrain from giving a full theoretical characterization of classes of forcing functions $p$ (leaving it to future work) and rather work by assumptions, which also have the benefit of preparing further generalizations of the result. Hence, we will now collect a number of conditions on $p$ that will give rise to a band structure that causes a Turing instability similar to the classic constant-coefficient SH (see Figure~\ref{fig:SH_spec_cartoon} for a cartoon).

\begin{figure}[ht]
\centering
    \begin{subfigure}[]{0.32\textwidth}
    \includegraphics[width=\textwidth]{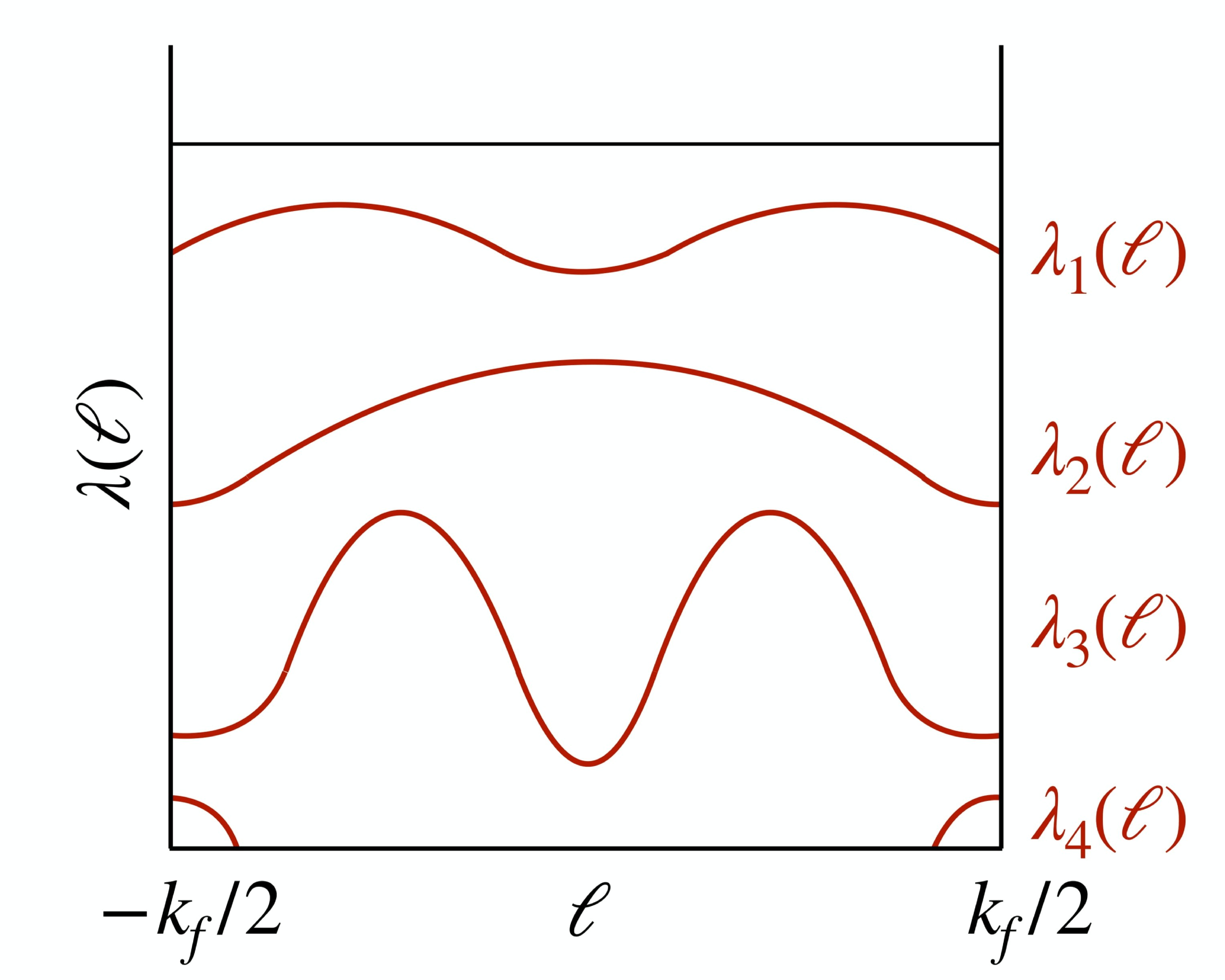}
    \caption{Spectrum before onset}
  \end{subfigure}
    \begin{subfigure}[]{0.32\textwidth}
    \includegraphics[width=\textwidth]{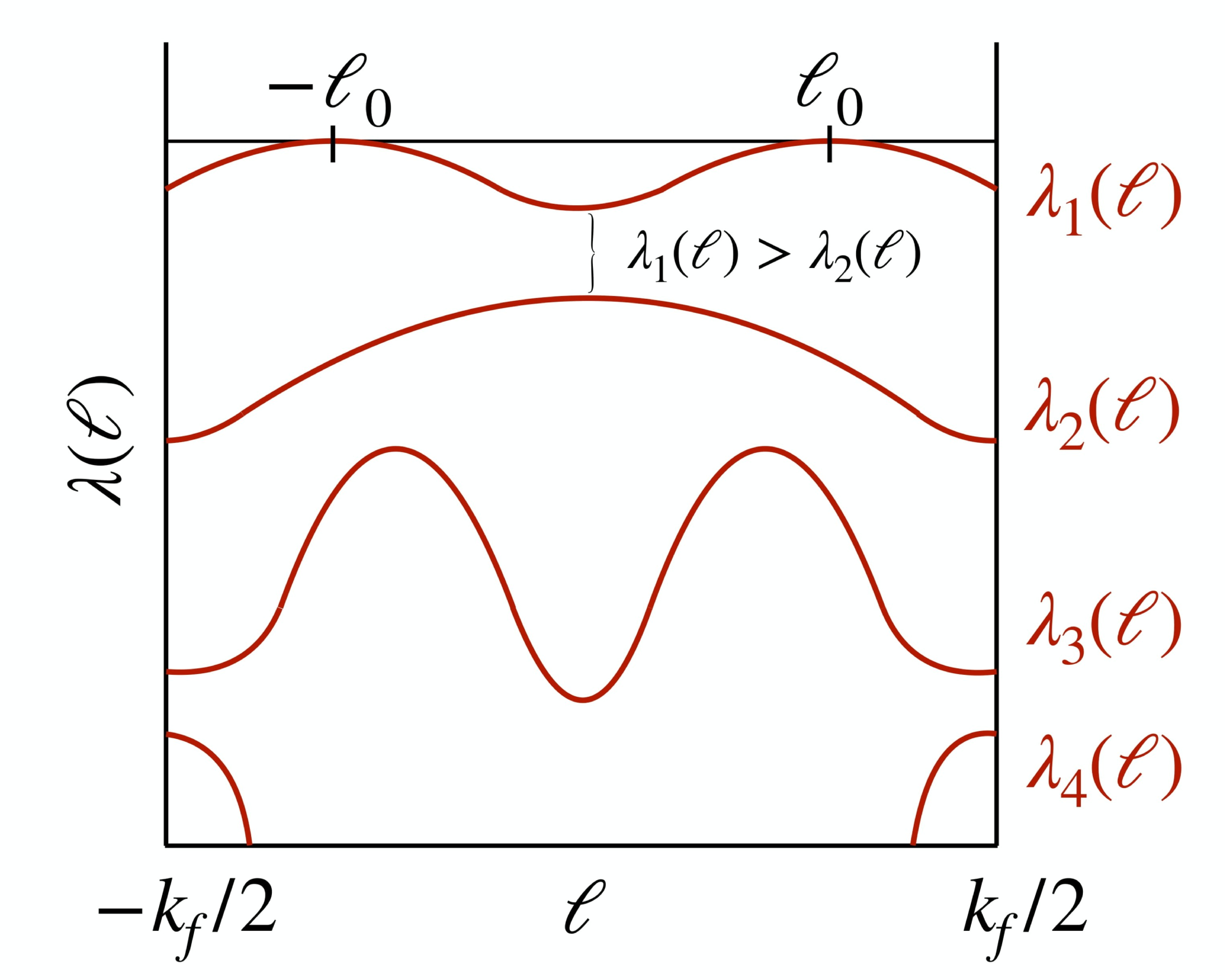}
    \caption{Spectrum at onset}
  \end{subfigure}
    \begin{subfigure}[]{0.32\textwidth}
    \includegraphics[width=\textwidth]{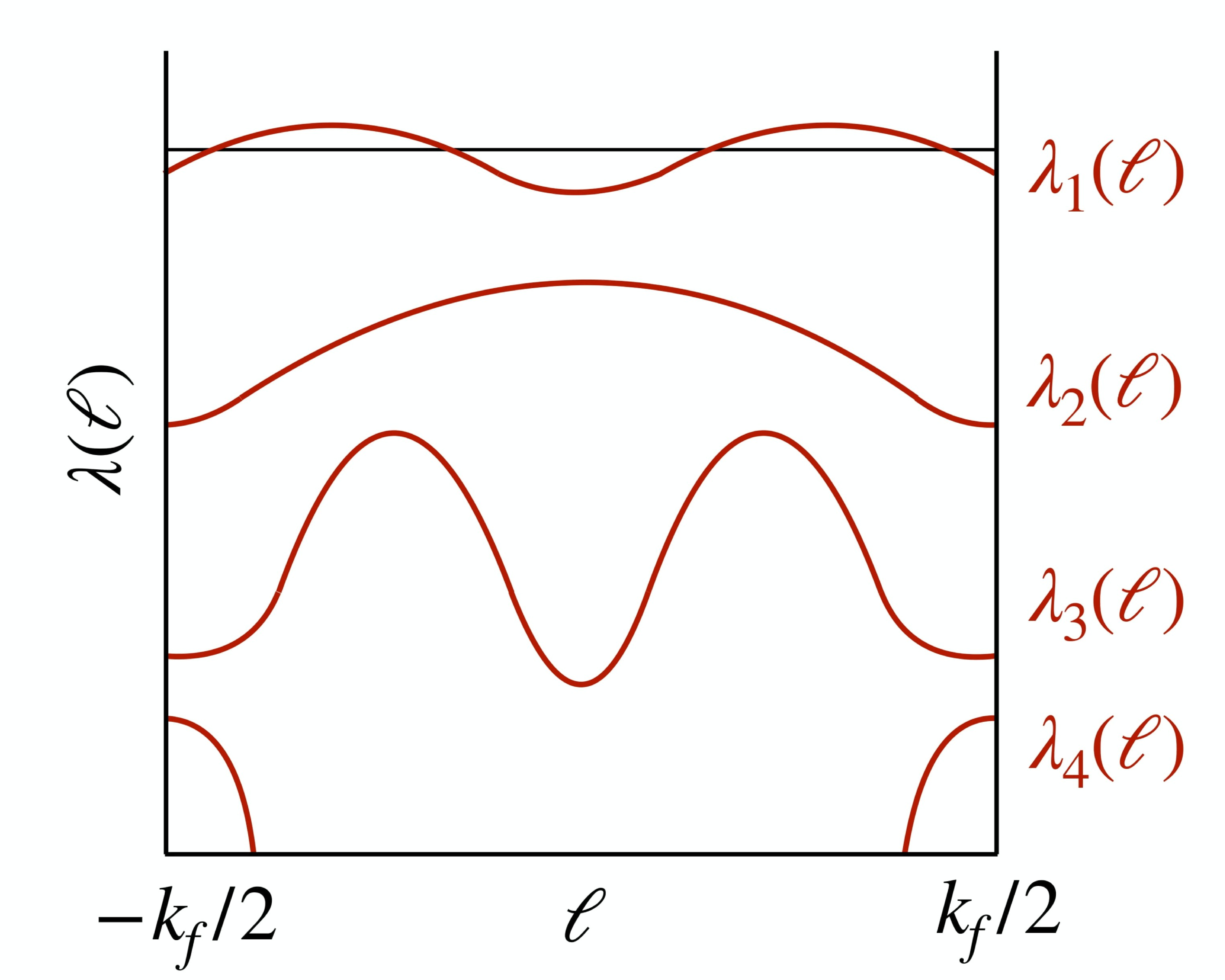}
    \caption{Spectrum after onset}
  \end{subfigure}
\caption{Cartoon of a band structure $(\ell, \lambda_n(\ell))$, satisfying the Turing instability Assumption~\ref{ASSUMP:Turing_instability}: $r < 0, r = 0, r >0$.}
\label{fig:SH_spec_cartoon}
\end{figure}

\subsection{Main result}
Before stating the main result we compile a list of assumptions that it relies on.

\begin{assumption}[Turing instability]\label{ASSUMP:Turing_instability}
 Consider the linear operator $\mathcal{L}_p$ in \eqref{EQ:linear_operator}. We assume that
 \begin{align}
     p(x) = p_*(x) + \varepsilon^2 r \, , \quad r \in \mathbb{R} \, ,
 \end{align}
 for $0 < \varepsilon \ll 1$ and some periodic function $p_*(x) = \mathcal{O}(1)$
 is chosen such that the band structure \eqref{EQ:bands_periodic_coefficient} is given by $(\ell, \lambda_n(\ell))$ with $\lambda_n(\ell) = \lambda_n(\ell;k_0, p_*, \varepsilon, r)$ and the following holds true for the Floquet-Bloch solutions from \eqref{EQ:Floquet-Bloch_solutions}.
 \begin{itemize}
     \item[(a)] {\bf (ONB)} The sequence of functions $\{\psi_n(\ell,\cdot)\}_{n\in\mathbb{N}}$ forms an orthonormal basis of $L^2\left(\left[0, \frac{2\pi}{k_f}\right]\right)$ for every $ -k_f/2 \leq \ell < k_f/2$.
     \item[(b)] {\bf (Ordering of band structure)} We have $\lambda_{n}(\ell) \geq \lambda_{n+1}(\ell), n \in \mathbb{N}, \lambda_{n}(\ell) \rightarrow -\infty, n \rightarrow \infty$ for every $ -k_f/2 \leq \ell < k_f/2 $.
     \item[(c)] {\bf (Regularity properties of spectral data)} The $\psi_n, \lambda_n$ are at least three times differentiable w.r.t. $\ell$, with $\partial_\ell^k \psi_n(\ell,x)\in L^2([0, 2\pi/k_f])$ for $0\leq k\leq 3$ and satisfy $\lambda_n(-\ell) = \lambda_n(\ell)$, $\psi_n(-\ell, x) = \overline{\psi_n(\ell,x)}$ for every $ -k_f/2 \leq \ell < k_f/2 $.
     \item[(d)] {\bf (Bifurcation w.r.t. $\bm{r}$)} The function $p_*$ is such that $ \mathrm{sign}(r) = \mathrm{sign}(\max_{-k_f/2 \leq \ell < k_f/2}(\lambda_{1}(\ell)))$ and, in particular, for $r = 0$ we have that there exists a unique $\ell_0$ with $0 = \lambda_1(\ell_0) = \max_{0 \leq \ell < k_f/2}\lambda_{1}(\ell)$.
     \item[(e)] {\bf (Parabolic band structure at onset)} At bifurcation onset, $r = 0$, we have that
\begin{equation}
    \lambda_1(\ell_0) = 0, \quad \partial_\ell \lambda_1(\ell_0)=0, \quad \partial_\ell^2 \lambda_1(\ell_0)<0.
\end{equation}
    for $l_0 \in [0, k_f/2)$ and $\lambda_1(\ell)$ is of parabola type around $\ell_0$, i.e., $\lambda_1(\ell) \leq -C(\ell-\ell_0)^2$ for some $C>0$ and for $\ell\in B_\varepsilon (\ell_0)$, with $B_\varepsilon(\ell_0)$ an $\varepsilon$-neighborhood of $\ell_0$.
    \item[(f)] {\bf (Spectral gap)} At bifurcation onset, $r = 0$, there is a spectral gap between the first two spectral curves, $\min_\ell \lambda_1(\ell)>\max_\ell \lambda_2(\ell)$.
    \item[(g)] {\bf (Non-resonance conditions)} We have that $\lambda_1(3\ell_0)\neq 0$, $\lambda_1(3\ell_0)= \mathcal{O}(1)$ and $\lambda_2(\ell_0)=\mathcal{O}(1), \lambda_2(3\ell_0)=\mathcal{O}(1)$ with respect to $\varepsilon$.
      \end{itemize}
\end{assumption}

We now have all ingredients to state the main theorem.

\begin{theorem}\label{thm:SH_approx}\emph{\textbf{(Approximation Theorem - Ginzburg-Landau for the non-resonant case)}}
    Consider the Swift-Hohenberg equation \eqref{EQ:SH_periodic_coefficients} with periodic functions $p, \rho$ as in \eqref{EQ:periodic_ceofficients} such that Assumption~\ref{ASSUMP:Turing_instability} holds and $\rho$ is continuous. Let $T_0\in\mathbb{R}$, $T_0>0$ and $T_0=\mathcal{O}(1)$ and let $Y$ be a suitably chosen function space. Then there exists an $\varepsilon_0>0$ and a $C>0$ such that for all $\varepsilon\in(0,\varepsilon_0)$ there are solutions $u$ of the Swift-Hohenberg equation \eqref{EQ:SH_periodic_coefficients} with
    \begin{equation}\label{eq:thm_error_est}
        \sup_{t\in [0,~T_0/\varepsilon^2]} \|u(\cdot,t)-u_{\rm Ans}(\cdot, t)\|_{Y} \leq C\varepsilon^2,
    \end{equation}
    where $u_{\rm Ans} = u_{\rm GL} + \mathcal{O}(\varepsilon^2)$ is to leading order given by
    \begin{equation}\label{eq:u_GL}
        u_{\rm GL}(x,t) = \varepsilon A_1(X,T)\psi_1(\ell_0,x) e^{i\ell_0x}+\varepsilon \overline{A_1(X,T)} \overline{\psi_1(\ell_0,x)} e^{-i\ell_0x},
    \end{equation}
    with $X:=\varepsilon x, ~T:=\varepsilon^2 t$, and where $A_1$ solves the Ginzburg-Landau equation
    \begin{equation}\label{eq:GL_forced_SH}
        \begin{split}
    \partial_T A_1 = & ~r A_1 + \frac{1}{2} \partial_\ell^2 \lambda_1 (\ell_0) \partial_X^2 A_1  - 3 \left(\int_0^{2\pi/k_f} |\psi_1(\ell_0,x)|^4 \rho(x) ~dx \right)~ |A_1|^2 A_1 \, ,
    \end{split}
    \end{equation}
    for non-resonant wavenumbers $k_f$. 
\end{theorem}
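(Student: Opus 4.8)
The plan is to follow the by-now-standard program for justifying amplitude equations (in the spirit of the Kirrmann--Schneider--Mielke and Schneider approach to the classical Swift--Hohenberg equation), but carried out entirely in Bloch space so as to accommodate the $\mathcal{O}(1)$ periodic coefficients. First I would pass to the Bloch transform $\tilde u(\ell,x)$, which by the ONB property (a) of Assumption~\ref{ASSUMP:Turing_instability} can be expanded in the eigenbasis as $\tilde u(\ell,x)=\sum_n \tilde u_n(\ell)\,\psi_n(\ell,x)$. In these coordinates the linear part $\mathcal{L}_p$ becomes multiplication by the band functions $\lambda_n(\ell)$, so the linear flow is diagonal and entirely governed by the band structure, while the cubic term $\rho(x)u^3$ turns into a triple convolution in $\ell$ together with multiplication by the periodic profile; this is where the mode interactions are encoded. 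The function space $Y$ is chosen so that its Bloch transform sits in an $L^2$-based Sobolev space in $\ell$ that forms an algebra, so that these convolution products are controlled.

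Next I would construct the approximation by a formal multiple-scales expansion $u_{\rm Ans}=\varepsilon u_1+\varepsilon^2 u_2+\varepsilon^3 u_3+\cdots$ in which $u_1$ is the leading term \eqref{eq:u_GL}, i.e.\ a Bloch mode concentrated at $\ell=\pm\ell_0$ in the first band with profile $\psi_1(\ell_0,\cdot)$ and slowly varying amplitude $A_1(X,T)$. Solving order by order, the linear operator at each order is invertible on the complement of the critical mode thanks to the spectral gap (f) and the strict parabolic maximum (e); the Fredholm solvability condition at order $\varepsilon^3$, obtained by projecting onto $\psi_1(\ell_0,\cdot)$, yields precisely the Ginzburg--Landau equation \eqref{eq:GL_forced_SH}. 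Here the linear term $rA_1$ comes from the $\varepsilon^2 r$ part of $p$, the diffusion coefficient $\tfrac12\partial_\ell^2\lambda_1(\ell_0)$ from Taylor-expanding the band about $\ell_0$ (so that $\ell-\ell_0=\mathcal{O}(\varepsilon)$ matches the long scale $X=\varepsilon x$), and the cubic coefficient $3\int_0^{2\pi/k_f}|\psi_1(\ell_0,x)|^4\rho(x)\,dx$ from projecting the self-interaction of the critical mode. The non-resonance conditions (g) guarantee that the harmonics generated at $\pm\ell_0,\pm3\ell_0$ in bands $\lambda_1,\lambda_2$ stay $\mathcal{O}(1)$ away from the imaginary axis, so that no secular terms arise and $u_2,u_3$ are genuinely higher order. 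This produces a residual that is formally of high order in $\varepsilon$ (typically $\mathcal{O}(\varepsilon^{7/2})$ in the $L^2$-based Bloch norm once the $\mathcal{O}(\varepsilon)$ concentration width is accounted for).

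With the approximation in hand, I would set $u=u_{\rm Ans}+\varepsilon^{\beta}R$ for a suitable $\beta$ and derive the error equation for $R$, again in Bloch space. The heart of the validity proof is then an energy estimate: the stable part of the spectrum (all $n\ge 2$, and the first band away from $\ell_0$) contracts exponentially because of the spectral gap (f) and the parabolic decay (e), while the single critical direction is controlled using that $u_{\rm Ans}$ already captures the dynamics to leading order. The residual enters as a forcing of the small size produced above, and the nonlinear terms are estimated using the algebra structure of $Y$. A Gronwall argument over the $\mathcal{O}(1/\varepsilon^2)$ time interval then yields the $\mathcal{O}(\varepsilon^2)$ bound in \eqref{eq:thm_error_est}.

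I expect the main obstacle to be the energy estimate for the error on the long time scale, and specifically the control of the nonlinear coupling in Bloch space. Because the nonlinearity is a convolution that redistributes energy across bands and across $\ell$, one must show that the interaction of the critical mode with the error does not grow faster than the linear contraction can absorb. Making the mode filtering rigorous --- cleanly separating the critical direction near $\ell_0$ from the exponentially damped remainder while tracking the $\varepsilon$-dependent concentration $\ell-\ell_0=\mathcal{O}(\varepsilon)$ --- and choosing $Y$ and $\beta$ so that the residual, the linear damping, and the nonlinear estimates all balance is the crux of the argument.
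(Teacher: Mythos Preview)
Your proposal is correct and takes essentially the same approach as the paper: construct the refined approximation in Bloch space, split the error into the critical mode and the exponentially damped remainder via the spectral gap, and close with a Gronwall argument on the $\mathcal{O}(1/\varepsilon^2)$ time scale. The only notable differences are that the paper works specifically in the uniformly local space $Y=H_{ul}^\vartheta$ and uses a mild-solution/semigroup formulation (with the uniform semigroup bound on $[0,T_0/\varepsilon^2]$, obtained from the Bloch-space splitting $\widetilde P_c,\widetilde P_s$, as the key lemma) rather than direct energy estimates --- so the nonlinear coupling you flag as the crux is in fact handled routinely via the algebra property of $H_{ul}^\vartheta$, and the real technical weight sits in the semigroup bound.
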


When $\ell_0=3\ell_0 \pmod{k_f}$, Assumption~\ref{ASSUMP:Turing_instability}(g) can not be satisfied. In these cases, either $\ell_0=0$ or $\ell_0=\frac{k_f}{2}$, leading to resonances.

\begin{definition}\label{def:resonances}\emph{\textbf{(Resonances)}}
If the spectrum admits a spectral gap at onset between $\lambda_1(\ell)$ and $\lambda_2(\ell)$, the combination of parameters such that $\ell_0 = \frac{k_f}{2}$ is called \emph{resonant}. Combinations of parameters that yield to $\ell_0\neq 3\ell_0\pmod{k_f}$ are referred to as \emph{non-resonant}. 
\end{definition}

The proof of Theorem \ref{thm:SH_approx} yields the following corollaries for values of $k_f$ such that $\ell_0=3\ell_0\pmod{k_f}$. Here, we need an adaptation of Assumption~\ref{ASSUMP:Turing_instability}(g):

\newcommand{\mainTheoremNumber}{\arabic{section}.\arabic{assumption}}
\setcounter{assumption}{\value{assumption}-1}
\begin{assumption}
    \begin{itemize}
        \item [(g')] We have $\lambda_2(\ell_0)=\mathcal{O}(1)$ with respect to $\varepsilon$.
    \end{itemize}
\end{assumption}

\begin{corollary}[\textbf{Allen-Cahn approximation for the $\ell_0=0$ case}]\label{cor:SH_approx_l00}
    Consider parameter regimes such that $\ell_0=0$ and the same assumptions as Theorem \ref{thm:SH_approx} hold, with Assumption~\ref{ASSUMP:Turing_instability}(g) replaced by Assumption~\ref{ASSUMP:Turing_instability}(g'). Let $T_0\in\mathbb{R}$, $T_0>0$ and $T_0=\mathcal{O}(1)$ and let $Y$ be a suitably chosen function space. Then there exists an $\varepsilon_0>0$ and a $C>0$ such that for all $\varepsilon\in(0,\varepsilon_0)$ there are solutions $u$ of the Swift-Hohenberg equation \eqref{EQ:SH_periodic_coefficients} such that \eqref{eq:thm_error_est} holds, where $u_{\rm Ans} = u_{\rm AC} + \mathcal{O}(\varepsilon^2)$ is to leading order given by
    \begin{equation}\label{eq:u_AC}
        u_{\rm AC}(x,t) = \varepsilon A_1(X,T)\psi_1(0,x)\,,
    \end{equation}
    with $X:=\varepsilon x, ~T:=\varepsilon^2 t$, $A_1$, $\psi_1(0,\cdot) \in \mathbb{R}$ and where $A_1$ solves the Allen-Cahn equation
    \begin{equation}\label{eq:AC_forced_SH}
         \begin{split}
     \partial_T {A}_{1} = & ~ rA_1 +  \frac12 \partial_{\ell}^2\lambda_1(0) \partial_X^2A_1 -\left(\int_{0}^{\frac{2 \pi}{k_f}}\psi_1(0,x)^4\rho(x)~ dx\right) A_1^3\,.
     \end{split}
    \end{equation}
\end{corollary}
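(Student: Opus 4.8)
The plan is to run the same Bloch-space multiscale argument that proves Theorem~\ref{thm:SH_approx}, specialising every step to the band-edge geometry $\ell_0 = 0$. The single structural change that drives everything is that the neutral mode now sits at $\ell = 0$: the symmetry relation of Assumption~\ref{ASSUMP:Turing_instability}(c), $\psi_1(-\ell,x)=\overline{\psi_1(\ell,x)}$, forces $\psi_1(0,\cdot)$ to be real, while the evenness $\lambda_1(-\ell)=\lambda_1(\ell)$ makes $\partial_\ell\lambda_1(0)=0$ automatic and consistent with Assumption~\ref{ASSUMP:Turing_instability}(e). Hence the complex-conjugate carrier pair $e^{\pm i\ell_0 x}$ of the generic case merges into the single real mode of \eqref{eq:u_AC}, and the reality constraint $\widehat{A}_1(-\xi)=\overline{\widehat{A}_1(\xi)}$ inherited from demanding $u\in\mathbb{R}$ forces $A_1$ to be real. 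This collapse of the complex structure is exactly why the Ginzburg-Landau equation \eqref{eq:GL_forced_SH} degenerates into the real Allen-Cahn equation \eqref{eq:AC_forced_SH}.

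Concretely, I would substitute $u = \varepsilon A_1(X,T)\psi_1(0,x) + \varepsilon^2 u_2 + \varepsilon^3 u_3 + \cdots$ with $X=\varepsilon x$, $T=\varepsilon^2 t$ into \eqref{EQ:SH_periodic_coefficients} and sort by powers of $\varepsilon$, using the Taylor expansion $\lambda_1(\varepsilon\xi)=\tfrac12\partial_\ell^2\lambda_1(0)(\varepsilon\xi)^2+\mathcal{O}(\varepsilon^3)$ from Assumption~\ref{ASSUMP:Turing_instability}(e). At $\mathcal{O}(\varepsilon)$ one recovers that $\psi_1(0,\cdot)$ is the neutral mode, since $\lambda_1(0)=0$; the higher corrections $u_2,u_3$ are then determined order by order by inverting the linearisation on the complement $\mathrm{span}\{\psi_n(0,\cdot)\}_{n\geq 2}$, which is possible precisely because Assumption~\ref{ASSUMP:Turing_instability}(f) and (g') keep $\lambda_2(0)=\mathcal{O}(1)$ bounded away from zero. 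The amplitude equation appears at $\mathcal{O}(\varepsilon^3)$ as the Fredholm solvability condition: projecting onto $\psi_1(0,\cdot)$ in $L^2([0,2\pi/k_f])$ via the orthonormal basis of Assumption~\ref{ASSUMP:Turing_instability}(a), the linear part contributes $\partial_T A_1 - rA_1 - \tfrac12\partial_\ell^2\lambda_1(0)\partial_X^2 A_1$, and the cubic $-\rho(x)\bigl(\varepsilon A_1\psi_1(0,x)\bigr)^3$ contributes $\bigl(\int_0^{2\pi/k_f}\psi_1(0,x)^4\rho(x)\,dx\bigr)A_1^3$. The decisive bookkeeping difference from Theorem~\ref{thm:SH_approx} is that, since $3\ell_0=\ell_0=0$, there is no separate non-resonant mode at $3\ell_0$ to absorb part of the self-interaction; the whole cubic feeds back into the critical band, the resonant projection loses the combinatorial factor $3$ that the conjugate-pair expansion $(z+\bar z)^3$ would have produced, and the conditions on $\lambda_1(3\ell_0)$ in Assumption~\ref{ASSUMP:Turing_instability}(g) become vacuous, leaving only (g').

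Finally I would close with the residual-and-Gronwall estimate of the parent theorem. After including enough correction terms so that the residual $\mathrm{Res}(u_{\rm Ans}):=-\partial_t u_{\rm Ans}+\mathcal{L}_p u_{\rm Ans}-\rho\,u_{\rm Ans}^3$ is of sufficiently high order in $\varepsilon$ uniformly on $[0,T_0/\varepsilon^2]$, I would write the error equation for the rescaled difference $R:=\varepsilon^{-\beta}(u-u_{\rm Ans})$ and estimate it in $Y$ using the semigroup of the Bloch-diagonalised linearisation. The spectral gap of Assumption~\ref{ASSUMP:Turing_instability}(f) confines all marginal dynamics to the first band, where the parabolic profile of Assumption~\ref{ASSUMP:Turing_instability}(e) supplies decay away from $\ell=0$, and a Gronwall inequality on the slow time scale then yields \eqref{eq:thm_error_est}. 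The main obstacle is the same as in Theorem~\ref{thm:SH_approx}: controlling the error over the long $\mathcal{O}(1/\varepsilon^2)$ horizon, which rests on the gap and the parabolic band estimate rather than on anything peculiar to $\ell_0=0$. The genuinely new content is the comparatively light verification that the band-edge resonance folds the cubic self-interaction into the critical mode, converting Ginzburg-Landau into the real Allen-Cahn equation.
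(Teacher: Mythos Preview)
Your proposal is correct and follows essentially the same route as the paper: collapse the two conjugate modes at $\pm\ell_0$ into the single real mode at $\ell=0$, derive the Allen--Cahn equation as the solvability condition (with the factor $3$ disappearing because there is no separate $3\ell_0$ mode), build the higher-order corrections by inverting on $\{\psi_n\}_{n\geq 2}$ using (g'), and then repeat the residual-plus-Gronwall justification of Theorem~\ref{thm:SH_approx} verbatim. The only cosmetic difference is that the paper phrases the formal derivation entirely in Bloch space (adjusting the ansatz \eqref{eq:ansatz_bloch_diag} to a single $\widetilde{A}_1$ centred at $\ell=0$), whereas you describe the equivalent physical-space multiscale expansion; the content is the same.
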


\begin{corollary}[\textbf{Non-phase invariant GL approximation for a resonant case}]\label{cor:SH_approx_res}
    Consider $k_f$ resonant and let the same conditions as Theorem \ref{thm:SH_approx} hold true, with Assumption~\ref{ASSUMP:Turing_instability}(g) replaced by Assumption~\ref{ASSUMP:Turing_instability}(g'). Let $T_0\in\mathbb{R}$, $T_0>0$ and $T_0=\mathcal{O}(1)$ and let $Y$ be a suitably chosen function space.
    Then there exists an $\varepsilon_0>0$ and a $C>0$ such that for all $\varepsilon\in(0,\varepsilon_0)$ there are solutions $u$ of the Swift-Hohenberg equation \eqref{EQ:SH_periodic_coefficients} such that \eqref{eq:thm_error_est} and \eqref{eq:u_GL} hold, where $A_1$ solves the non-phase invariant Ginzburg-Landau equation
    \begin{equation}\label{eq:GL_forced_SH_res}
         \begin{split}
     \partial_T {A}_{1} = & ~ rA_1 +  \frac12 \partial_{\ell}^2\lambda_1(\ell_0) \partial_X^2A_1 - 3 \left(\int_{0}^{\frac{2 \pi}{k_f}} |\psi_1(\ell_0,x)|^4\rho(x)~dx\right) |A_1|^2 A_1  \\ 
     &\quad -\left(\int_{0}^{\frac{2 \pi}{k_f}}\psi_1(\ell_0,x)^3 \overline{\psi_1( \ell_0,x)}\rho(x)~ dx\right) A_1^3\,.
     \end{split}
    \end{equation}
\end{corollary}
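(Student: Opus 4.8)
The plan is to run the proof of Theorem~\ref{thm:SH_approx} essentially verbatim, working in Bloch space, and to isolate the single structural change forced by the band-edge resonance $\ell_0 = k_f/2$. Recall that in that proof one applies the Bloch transform, writes the solution fibrewise, and makes the multiscale ansatz whose leading term is the critical mode $\phi(x) := \psi_1(\ell_0,x)e^{i\ell_0 x}$ carried by the slowly varying amplitude $A_1(X,T)$, together with its complex conjugate. The amplitude equation is then obtained as the solvability (non-secularity) condition at order $\varepsilon^3$, i.e. by projecting the cubic interaction $\rho\,u_{\rm GL}^3$ onto the critical eigendirection $\psi_1(\ell_0,\cdot)$ inside the $\ell_0$-fibre. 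The linear contributions $rA_1$ and $\tfrac12\partial_\ell^2\lambda_1(\ell_0)\partial_X^2 A_1$ are identical to the non-resonant case, since they only use Assumption~\ref{ASSUMP:Turing_instability}(e); the whole difference lies in the cubic terms.

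First I would identify the resonances. Writing the leading ansatz as $w = \varepsilon(A_1\phi + \overline{A_1}\,\overline{\phi})$, the cube expands into the four monomials $\phi^3,\ \phi^2\overline{\phi},\ \phi\,\overline{\phi}^2,\ \overline{\phi}^3$ carrying $A_1^3,\ |A_1|^2A_1,\ |A_1|^2\overline{A_1},\ \overline{A_1}^3$. Since multiplication acts as convolution in the Bloch wavenumber, these monomials sit at fibres $3\ell_0,\ \ell_0,\ -\ell_0,\ -3\ell_0$. In the non-resonant proof only $\phi^2\overline{\phi}$ (fibre $\ell_0$) is secular against $\phi$, producing $-3\big(\int|\psi_1|^4\rho\big)|A_1|^2A_1$, while the $3\ell_0$-content is pushed into a bounded higher-order correction using Assumption~\ref{ASSUMP:Turing_instability}(g). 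At the band edge $\ell_0=k_f/2$ one instead has $3\ell_0 \equiv \ell_0 \pmod{k_f}$ (indeed $\pm\ell_0,\pm3\ell_0$ all coincide mod $k_f$), so $\phi^3$ now also lies in the $\ell_0$-fibre; its component along the critical eigendirection $\psi_1(\ell_0,\cdot)$ is therefore secular against $\phi$ and contributes the extra, phase-symmetry-breaking term $-\big(\int_0^{2\pi/k_f}\psi_1(\ell_0,x)^3\overline{\psi_1(\ell_0,x)}\rho(x)\,dx\big)A_1^3$. By contrast $\phi\,\overline{\phi}^2$ and $\overline{\phi}^3$ project onto $\overline{\phi}$ and thus feed the conjugate equation for $\overline{A_1}$, consistently via the symmetry $\psi_1(-\ell_0,x)=\overline{\psi_1(\ell_0,x)}$ of Assumption~\ref{ASSUMP:Turing_instability}(c). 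Collecting the secular contributions against $\phi$ and cancelling them yields exactly the non-phase-invariant Ginzburg--Landau equation~\eqref{eq:GL_forced_SH_res}.

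The non-critical part of the $\ell_0$-fibre, namely the components of $\phi^3$ along $\psi_n(\ell_0,\cdot)$ for $n\geq2$, must still be solved for as a higher-order correction to the ansatz, and this is precisely where the full Assumption~\ref{ASSUMP:Turing_instability}(g) is unnecessary and the weaker Assumption~\ref{ASSUMP:Turing_instability}(g') suffices: $\lambda_2(\ell_0)=\mathcal{O}(1)$, together with the spectral gap~(f) and the ordering~(b), makes the Bloch fibre operator at $\ell_0$ boundedly invertible on the orthogonal complement of $\psi_1(\ell_0,\cdot)$, so the correction is $\mathcal{O}(\varepsilon^2)$ in the relevant Bloch norm. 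With the improved ansatz in hand, the residual has the same order as in Theorem~\ref{thm:SH_approx}, and the error estimate~\eqref{eq:thm_error_est} over $t\in[0,T_0/\varepsilon^2]$ follows from the identical energy/Gronwall argument in Bloch space; the only object that changed, the amplitude equation, enters the estimate solely through its (uniformly bounded on $[0,T_0]$) solution, so no new smallness is required.

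I expect the main obstacle to be the careful bookkeeping of the folding at the band edge rather than the analytic estimates. One has to verify that the $3\ell_0$-mode folds onto precisely the critical eigendirection with the stated coefficient, keeping track of the Bloch-periodicity phase $e^{ik_f x}$ that relates the fibres at $3\ell_0$ and $\ell_0$, and that the cross terms $|A_1|^2\overline{A_1}$ and $\overline{A_1}^3$ genuinely drop out of the $A_1$-equation through the conjugation symmetry instead of producing additional couplings. A secondary point to check is that the non-phase-invariant equation~\eqref{eq:GL_forced_SH_res} is locally well posed in the chosen space $Y$ on the $\mathcal{O}(1)$ interval $[0,T_0]$, so that the modulation amplitude supplying the estimate exists; the added $A_1^3$ term does not affect the parabolic character, so this is standard and compatible with the restriction $T_0=\mathcal{O}(1)$.
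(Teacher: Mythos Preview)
Your proposal is correct and follows essentially the same route as the paper: work in Bloch space with the same ansatz as in Theorem~\ref{thm:SH_approx}, observe that at $\ell_0=k_f/2$ the term formerly called $\widetilde{B}_{1,3}$ can no longer be removed by dividing by $\lambda_1(3\ell_0)$ and instead becomes secular, incorporate it into the amplitude equation to obtain the extra $A_1^3$ term with coefficient $\int\psi_1^3\overline{\psi_1}\rho$, use Assumption~\ref{ASSUMP:Turing_instability}(g') to solve for the $n\geq 2$ corrections, and then run the identical semigroup/variation-of-constants/Gronwall justification of Section~\ref{sec:justification}. Your remark about tracking the Bloch-periodicity phase when folding $3\ell_0$ back to $\ell_0$ is exactly the bookkeeping the paper leaves implicit.
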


Solutions $u$ of the Swift-Hohenberg \eqref{EQ:SH_periodic_coefficients} can be represented as $u(x,t) = u_{\rm Ans}(x,t) + \check{S}(x,t)$ where $\check{S}$ represents the error between the real solution and the approximation $u_{\rm Ans}(x,t)$. The PDE for the error $\check{S}$ is given by
\begin{equation}\label{eq:res_def}
\begin{split}
    \partial_t \check{S} = \mathcal{L}_p \check{S} -\check{S}^3 -3 \check{S}^2 u_{\rm Ans} - 3\check{S} u_{\rm Ans}^2-\text{Res}(u_{\rm Ans}) \, , \quad \text{Res}(u_{\rm Ans}) = -\partial_t u_{\rm Ans} + \mathcal{L}_pu_{\rm Ans} -u_{\rm Ans}^3 \, .
\end{split}
\end{equation}
The approximation $u_{\rm Ans}$ is constructed as a small amplitude solution that features $\varepsilon$ both in front of the amplitude and in the space and time scaling. A common technique is to make the residual small by matching the ansatz. However, to establish the validity of the approximation, it is important to control the entire error $\check{S}$.  Special situations are known in which the amplitude equation, although correctly derived, fails to make a correct prediction \cite{phd_thesis_failure_ampl, Failure_Nwave_Schneider_Haas, Failure_mod_eq_Schneider_Rijk_Haas, failure_slow_dyn_Schneider_BSZ}. This underscores the importance of validating the derived amplitude equation. The novelty of this paper lies in justifying the Ginzburg-Landau equation for a dissipative system with spatially varying coefficients that are not restricted in magnitude.

\begin{remark}[\bf Periodic solutions of GL]
The complex Ginzburg-Landau equation with constant coefficients,
\begin{equation}
    \partial_T A= \alpha_1 A + \alpha_2\partial_X^2A+\alpha_3|A|^2A\,,
\end{equation}
$\alpha_1, \alpha_2, \alpha_3\in \mathbb{C}$, admits stationary solutions that are spatially periodic of the form 
\begin{equation}
    A_{\rm per}(X,T) = Be^{i\kappa X}\,, \quad  |B|^2 = \frac{\kappa^2\alpha_2-\alpha_1}{\alpha_3}\, . 
\end{equation}
Upon taking $\alpha_1, ~\alpha_2$ and $\alpha_3$ to agree with \eqref{eq:GL_forced_SH} , these periodic solutions $A_{\rm per}$ lead to solutions of \eqref{EQ:SH_periodic_coefficients} featuring a band of possible wavenumbers close to $\ell_0$, given by 
\begin{equation}
    u_{\rm GL}(x,t) =  \varepsilon B\psi_1(\ell_0,x)e^{i(\ell_0+\varepsilon\kappa )x} +{\rm c.c.}.
\end{equation}
Solutions $A_{\rm per}$ are stable for a range of wavenumbers $\kappa$ (see e.g. \cite{Eckhaus1965, Cross_Hohenberg_1993, Schneider_Diffusive_stab, eckmann1997geometric, haragus2011local}). This family of periodic solutions of the Ginzburg-Landau equation explains the occurrence of an entire branch of stable periodic orbits for the Swift-Hohenberg equation. In more complicated systems, computing these solutions and determining their stability enables us, in particular, to compute the tip of the Busse balloon explicitly (see \cite{GKGS_SDHR}).
\end{remark}

\begin{remark}[\bf Global results]
For the classic constant-coefficient case the (above local) approximation result (based on \cite{SH_GL_val_cubic}) is complemented by an attractivity results and a global approximation result (see \cite{Bollerman1995},\cite{Eckhaus1993GLAttractor} , \cite{Schneider1994Global}, \cite{Schneider1995Analyticity}, \cite{Schneider2001}). We expect similar results to hold true for \eqref{EQ:SH_periodic_coefficients}, but leave this for future work. 
\end{remark}

\begin{remark}[$\bm{p(x) = p_*(x) + \varepsilon^2 \widetilde{p}(x)}$]\label{rem:p_tilde}
For $p(x) = p_*(x) + \varepsilon^2 \widetilde{p}(x)$ we expect a similar result with the modification that the GL equation will feature a term containing the change of the dispersion relation at $\ell = \ell_0$ in the direction $\widetilde{p}$. Extensions of the result of this form are left for future work.
\end{remark}

\begin{remark}[\bf $\bm{k_f}$ small]\label{rem:kf_small}
    From an ecological perspective, forcing with small wavenumbers, i.e. $k_f = o(1)$ with respect to $k_0$, are particularly interesting, as the wavenumber of mountains is much smaller than that of the observed wavenumber of vegetation patterns. From a mathematical point of view, forcing with small wavenumbers is of interest as for small forcing, the system admits space-intermittent solutions \cite{Doelman_Schielen}. Numerical experiments show that with large forcing, these space-intermittent solutions persist. Although often treated as a separate case, numerical study shows that for $k_f$ small and $\gamma$ large enough, the spectrum can satisfy Assumption~\ref{ASSUMP:Turing_instability} and hence the same results hold as for $k_f=\mathcal{O}(1)$. However, the numerical algorithm is more computationally expensive, hence numerical validation of Assumption~\ref{ASSUMP:Turing_instability} is less straightforward. Since the first two curves strongly overlap with small forcing, one has to carefully check Assumption~\ref{ASSUMP:Turing_instability}(f). 
\end{remark}

\subsection{Numerical Results}\label{sec:num_res}
In this section we present numerical approximations for a setting in which Assumption~\ref{ASSUMP:Turing_instability} appears to hold in the case $p(x) = p_*(x) + \varepsilon^2 r = \mu_* + \gamma_* \cos(k_f x)+ \varepsilon^2 r$. For parameter values right after onset, a perturbation to the zero state $u_*=0$ leads to a new pattern solution. These patterns are approximated for $k_f$ non-resonant as in Theorem \ref{thm:SH_approx}. If $A_1$ is real, a stationary solution of (\ref{eq:GL_forced_SH}) is the constant solution
\begin{equation}
    A_1 = \pm \sqrt{\frac{r}{3\int_0^{\frac{2\pi}{k_f}}|\psi_1(\ell_0,x)|^4\rho(x)~dx}}.
\end{equation}

 Following the procedure described in Section~\ref{SEC:Floquet-Bloch}, we can compute $\psi_1$ numerically. Using the constant solution for $A_1$, we find the approximations for the converged pattern solutions of \eqref{EQ:SH_periodic_coefficients} as shown in Figure \ref{fig:SH_approx_nonres}. This confirms that \eqref{eq:u_GL} provides a very good approximation for the solutions of \eqref{EQ:SH_periodic_coefficients}.

\begin{figure}[ht]
\centering
    \begin{subfigure}[b]{0.4\textwidth}
    \includegraphics[width=\textwidth]{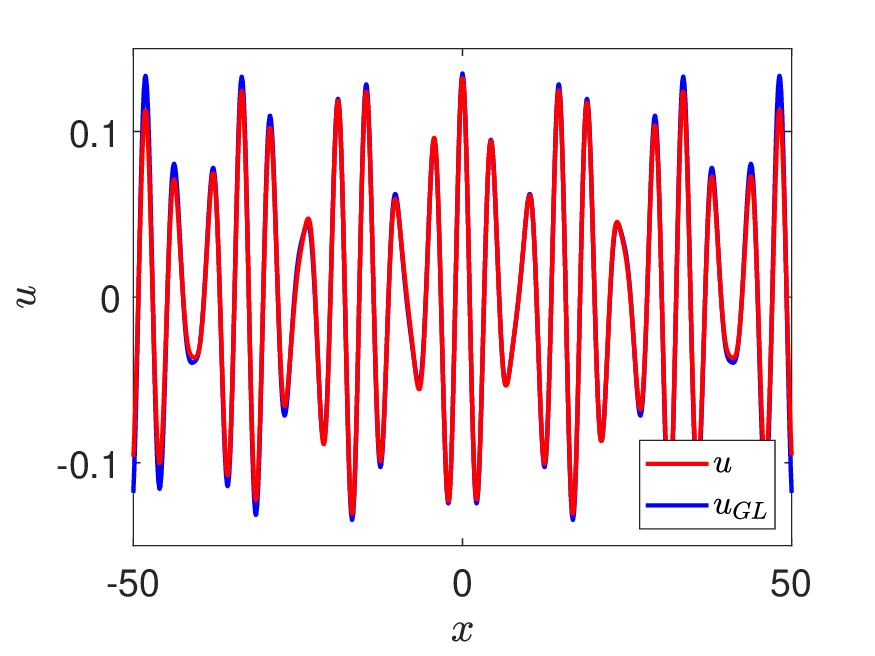}
    \caption{$k_f=3$}
  \end{subfigure}
    \hspace{.5cm}
    \begin{subfigure}[b]{0.4\textwidth}
    \includegraphics[width=\textwidth]{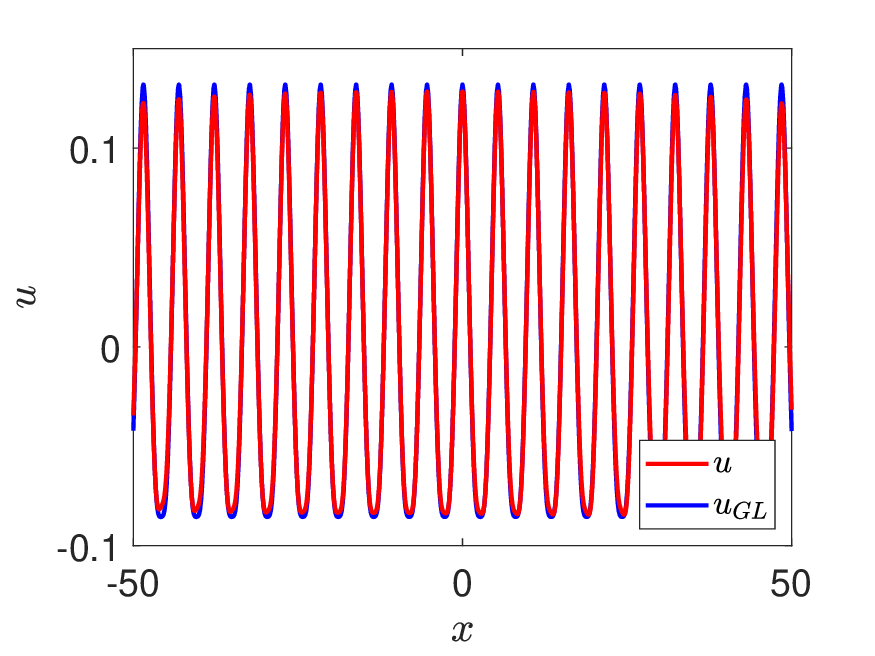}
    \caption{$k_f=3.5$}
  \end{subfigure}
  \\
      \begin{subfigure}[b]{0.4\textwidth}
    \includegraphics[width=\textwidth]{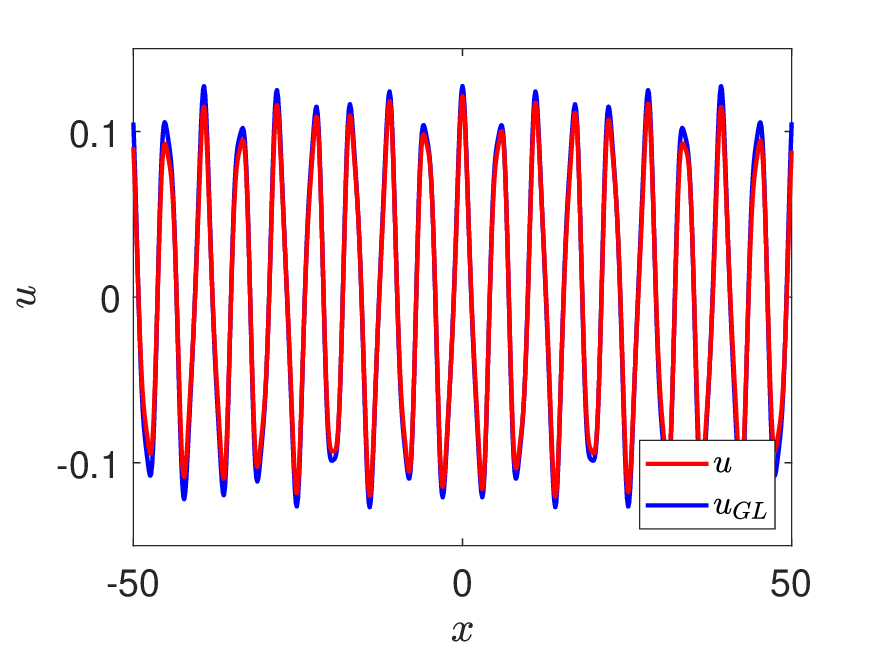}
    \caption{$k_f=4$}
  \end{subfigure}
    \hspace{.5cm}
    \begin{subfigure}[b]{0.4\textwidth}
    \includegraphics[width=\textwidth]{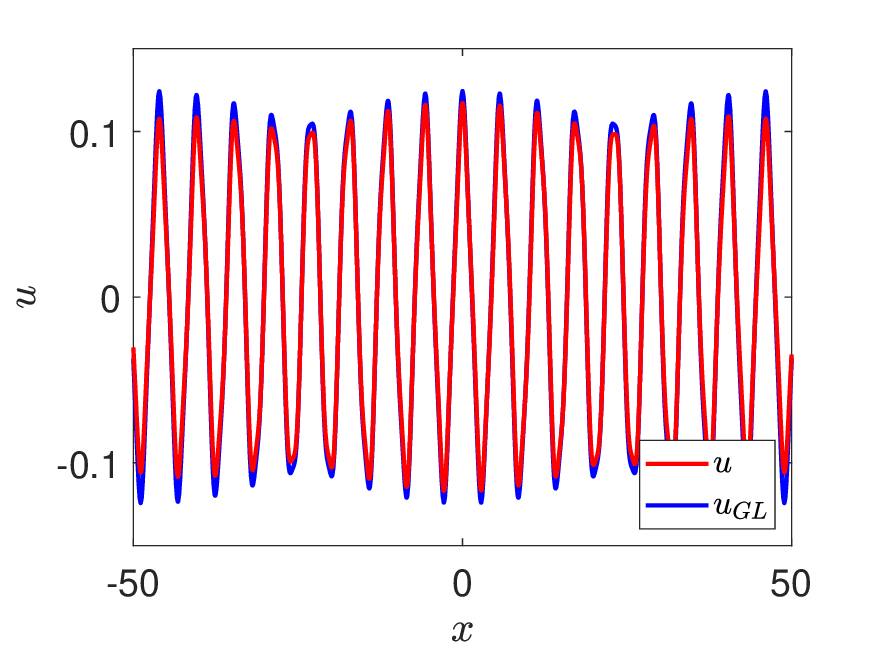}
    \caption{$k_f=4.5$}
  \end{subfigure}
\caption{Stationary solution $u$ of SH in red and approximation $u_{\rm GL}$ in blue, for $p_*= \mu_*+\gamma_*\cos(k_fx)$, ~$\rho(x)=1$, with different non-resonant values of $k_f$. For all figures, we have $\varepsilon=0.1$, $k_0=1$, $\mu_*=-0.76$, $\gamma=\gamma_*(k_f, \mu_*)$ and $r=1$.}
\label{fig:SH_approx_nonres}
\end{figure}

 For wavenumbers where $\ell_0=0$, a constant solution of \eqref{eq:AC_forced_SH} is given by
 \begin{equation}
     A_1=\pm \sqrt{\frac{r}{\int_0^{\frac{2\pi}{k_f}}\psi_1(\ell_0,x)^4\rho(x)~dx}},
 \end{equation}
 while for resonant wavenumbers $k_f$ and $A_1$ real, a constant solutions of (\ref{eq:GL_forced_SH_res}) is
\begin{equation}
     \quad A_1=\pm \sqrt{\frac{r}{3\int_0^{\frac{2\pi}{k_f}}|\psi_1(\ell_0,x)|^4\rho(x)~dx+\int_0^{\frac{2\pi}{k_f}}\psi_1(\ell_0,x)^3\overline{\psi_1(\ell_0,x)}\rho(x)~dx}}\, .
\end{equation}
Numerical results for these cases are shown in Figure \ref{fig:SH_approx_kfk0_kf2k0}, confirming Corollary \ref{cor:SH_approx_l00} and \ref{cor:SH_approx_res}.

    \begin{figure}[t]
\centering
    \begin{subfigure}[]{0.4\textwidth}
    \includegraphics[width=\textwidth]{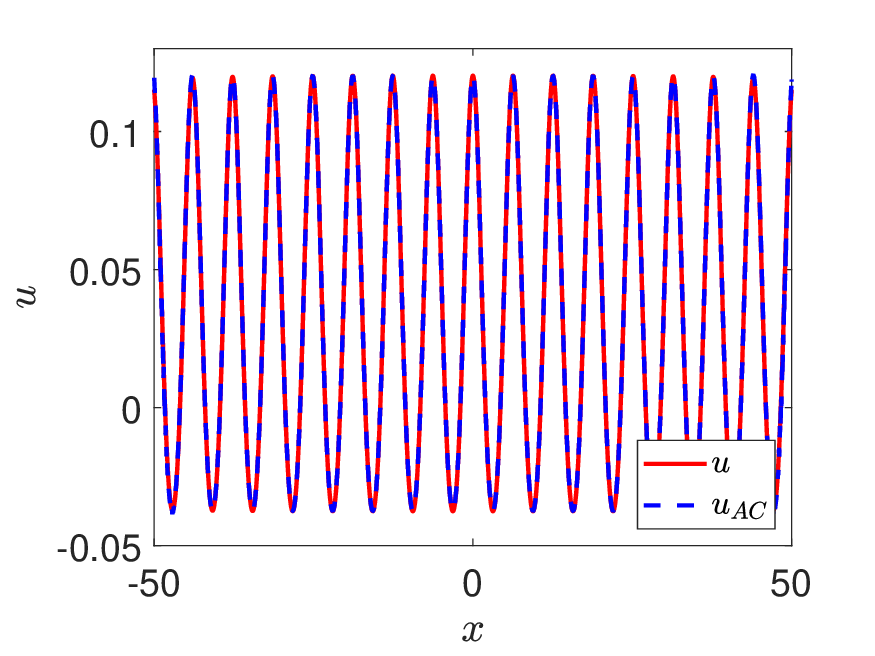}
    \caption{$k_f=1$}
  \end{subfigure}
    \begin{subfigure}[]{0.4\textwidth}
    \includegraphics[width=\textwidth]{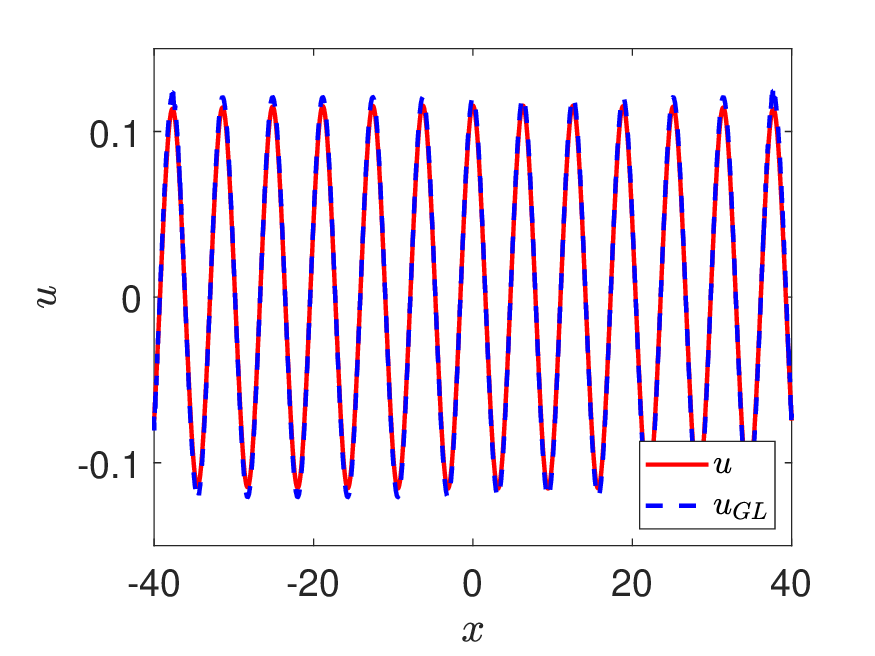}
    \caption{$k_f=2$}
  \end{subfigure}
\caption{Stationary solution $u$ of SH in red and approximation $u_{\rm AC}$ or $u_{\rm GL}$ in blue, for $p_*= \mu_*+\gamma_*\cos(k_f x)$, $\rho(x)=1$, with different resonant values $k_f$. For all figures, we have $\varepsilon=0.1$, $k_0=1$, $\mu_*=-0.76$, $\gamma=\gamma_*(k_f, \mu_*)$ and $r=1$.}
\label{fig:SH_approx_kfk0_kf2k0}
\end{figure}

\subsection{Plan of the paper} The rest of this article is organized as follows. In Section~\ref{SEC:derivation} we discuss the linear analysis and corresponding numerical explorations of the band structure in more detail and give the derivation of the GL equation in Bloch space. In Section~\ref{sec:justification} we give a rigorous estimate for the residual followed by a rigorous estimate of the approximation error which proves our main result, Theorem~\ref{thm:SH_approx}. We close with an overview of related work in Section~\ref{sec:related_work} and a discussion and outlook in Section~\ref{sec:discussion}.

\newpage
\subsection{Notation}
We follow the notation and definitions in \cite{SchneiderUecker}.

\subsubsection{Bloch transform}
The Bloch transform is an adaption of the Fourier transform in the spatially homogeneous case to the spatially periodic case \cite{Simon_Reed}. It is defined using the Fourier transform and, similarly as to the Fourier transform, multiple definitions can be used. We use the symmetric Fourier transform
\begin{equation}
    \hat{u}(k)=(\mathcal{F}u)(k) =\frac{1}{\sqrt{2\pi}} \int_{-\infty}^\infty u(x)e^{-ikx} ~dx,
\end{equation}
with inverse Fourier transform
\begin{equation}
    u(x)=(\mathcal{F}^{-1}\hat{u})(k)=\frac{1}{\sqrt{2\pi}} \int_{-\infty}^\infty \hat{u}(k) e^{ikx}~dk.
\end{equation}
The following definition for the Bloch transform for $2\pi/k_f$-periodic functions can be derived
\begin{equation}\label{eq:Bloch_T}
    \widetilde{u}(\ell,x)=(\mathcal{B}u)(\ell, x) = \sum_{j\in\mathbb{Z}} \hat{u}\left(k_f j+\ell\right) e^{ik_f jx},
\end{equation}
with inverse Bloch transform
\begin{equation}\label{eq:Bloch_inv_T}
    u(x) = (\mathcal{B}^{-1}\widetilde{u})(x) = \frac{1}{\sqrt{2\pi}} \int_{-\frac{k_f}{2}}^\frac{k_f}{2}\widetilde{u}(\ell,x) e^{i\ell x} ~d\ell.
\end{equation}
This Bloch transform satisfies the following properties
\begin{equation}
\begin{split}
    \widetilde{u}\left(\ell,x+\frac{2\pi}{k_f}\right) &= \widetilde{u}(\ell,x),\\
    \widetilde{u}\left(\ell+k_f,x\right) &= \widetilde{u}(\ell, x)e^{-ik_f x}.
    \end{split}
\end{equation}
The multiplication to convolution property becomes 
\begin{equation}\label{eq:Bloch_conv_def}
    \widetilde{fg}(\ell,x) = \int_{-k_f/2}^{k_f/2} \widetilde{f}(p) \widetilde{g}(\ell - p) ~dp = (\widetilde{f} \star \widetilde{g})(\ell,x).    
\end{equation}

\subsubsection{Function space}\label{subsec:function_spaces}

Following the reasoning in \cite{SchneiderUecker}, we want to allow solutions of the Ginzburg-Landau equation that do not decay for $|x|\rightarrow \infty$ and want to be able to use the Fourier transform and the Bloch transform. Hence we consider the Sobolev space $H_{ul}^\vartheta$, which is constructed in the following way.

\medskip

We take a positive weight function $q: \mathbb{R}\rightarrow (0,\infty)$ that is continuous, bounded and satisfies $\int_\mathbb{R} q(x) ~dx <\infty$. Furthermore, we impose $q\in C^2(\mathbb{R}, \mathbb{R})$ and $|q'(x)|$, $|q''(x)|\leq q(x)$ for all $x$. Introduce
\begin{equation}
    \widetilde{L_{ul}^2} = \left\{ u\in L_{loc}^2(\mathbb{R})~:~\|u\|_{L_{ul}^2} <\infty\right\}, \quad \text{ where } \quad \|u\|_{L_{ul}^2}^2 = \sup_{y\in\mathbb{R}} \int_\mathbb{R} q(y+x) u(x)^2~dx.
\end{equation}
The $L_{ul}^2$ norm is equivalent to the norm $\|\cdot\|_*$ defined as 
\begin{equation}
    \|u\|_* = \sup_{y\in\mathbb{R}} \left( \int_{y-1/2}^{y+1/2} |u(x)|^2 ~dx \right)^{1/2}.
\end{equation}
Introduce the translation operator $T_y~:~\widetilde{L_{ul}^2} \rightarrow \widetilde{L_{ul}^2}$, $(T_y u)(\cdot) \mapsto u(\cdot + y)$. The space of uniformly local $L^2$ functions is given by
\begin{equation}
    L_{ul}^2 = \left\{ u\in\widetilde{L_{ul}^2}~:~ \|T_y u - u\|_{L_{ul}^2} \rightarrow 0 \text{ as } y\rightarrow 0\right\}.  
\end{equation}
For $\vartheta\in\mathbb{N}$ we define the uniformly local Sobolev space $H_{ul}^\vartheta$ as
\begin{equation}
    H_{ul}^\vartheta = \left\{ u\in L_{ul}^2 ~:~ D^m u \in L_{ul}^2 \text{ for all } |m|<\vartheta \right\}.
\end{equation}

\newpage

\section{Derivation of the Ginzburg-Landau equation}\label{SEC:derivation}
The conventional way of deriving the Ginzburg-Landau equation is based on making an ``ansatz" for the solution. While this is sometimes presented as an educated guess based on observations, a dynamical systems viewpoint reveals that it is feasible to craft this ansatz along the spectral properties of the linearized PDE during a bifurcation. This section is, hence, structured as follows.

\medskip

{\bf Plan of the section.} We first briefly discuss the origin of the band structure for periodic problems in a way that will allow to give a numerical procedure to compute it in Section~\ref{SEC:Floquet-Bloch}. We will complement this by another viewpoint that directly gives a characterization of coefficients that give rise to an ONB, so that (a)-(c) in Assumption~\ref{ASSUMP:Turing_instability} are fulfilled. Motivated by the knowledge of the shape of solutions for the linearized equation we derive the Bloch transformed, diagonalized system which explicitly features the bands $\lambda_n$ in Section~\ref{SEC:GL_derivation}. This then allows to tailor an elaborate ansatz in Bloch space which, in particular, yields the (Bloch transformed) Ginzburg-Landau equation and matches the claimed ansatz in Theorem~\ref{thm:SH_approx}.

\subsection{Linear analysis via Floquet-Bloch theory}\label{SEC:Floquet-Bloch}
Consider the eigenvalue problem for the linear operator arising from linearizing the Swift-Hohenberg equation \eqref{EQ:SH_periodic_coefficients} around $u_* = 0$,
\begin{align}\label{EQ:EVP}
    \mathcal{L}_p v(x) = \lambda v(x) \,,
\end{align}
with $\mathcal{L}_p$ as in \eqref{EQ:linear_operator}. It is well-known (see, e.g. \cite{Eastham_old}) that this operator is closed and self-adjoint on $L^2(\mathbb{R})$ and that it only has essential spectrum in the form of a countably-infinite number of closed intervals that may or may not overlap. This structure can be deduced in various ways of which we will now briefly outline two.

\medskip

By Floquet theory, bounded fundamental solutions of \eqref{EQ:EVP} can be chosen of the form
\begin{align}\label{EQ:Floquet_FS}
    v_+(x, \lambda) = e^{i\ell(\lambda)x}\psi(\ell(\lambda), x) \,, \quad v_-(x, \lambda) = e^{-i\ell(\lambda)x}\psi(-\ell(\lambda), x)  \, , \quad  0 \leq \ell < k_f/2 \, ,
\end{align}
where $\psi(-\ell(\lambda), x) = \overline{\psi(\ell(\lambda), x)}$ and $\psi\left(\ell, x + \frac{2\pi}{k_f}\right) = \psi(\ell, x), x \in \mathbb{R}$. There are either none, one or two such pairs since \eqref{EQ:EVP} is a fourth order ODE. In other words, $\lambda \in \sigma_{\rm ess}(\mathcal{L}_p)$ if and only if there are purely imaginary Floquet exponents $\widetilde{\ell}(\lambda) = i \ell(\lambda) \in i \mathbb{R}$. Floquet theory gives explicit formulas for the Floquet exponents in terms of the canonical fundamental matrix $\Phi = \Phi(\lambda)$. Consequently, if $\Phi$ is known explicitly, so is the band structure $(\ell_{m(\lambda)}(\lambda), \lambda)$ with $m(\lambda) \in \{0, 2, 4\}$, that is, for any fixed $\lambda \in \mathbb{R}$ there are either $0, 2$ or $4$ purely imaginary Floquet exponents. Numerical implementation of these relations allow to explore the possible band structure, whose infinitely many bands can be deduced from the Floquet discriminant (see, e.g. \cite{Eastham_old}).

\medskip

The origin of the band structure sketched in Figure~\ref{fig:SH_spec_cartoon} can be predicted analytically by adopting the following viewpoint. Using \eqref{EQ:Floquet_FS} as coordinate transformation for the eigenvalue problem \eqref{EQ:EVP} gives
\begin{align}\label{EQ:EVP_Bloch}
    \widetilde{\mathcal{L}}_p \psi = \lambda \psi \, , \quad \psi\left(x + \frac{2\pi}{k_f}\right) = \psi(x) \, ,
\end{align}
for
\begin{align}\label{EQ:L_p_tilde}
    \widetilde{\mathcal{L}}_p : = -[(\partial_x + il)^2 + k_0^2]^2 + p \, .
\end{align}
For $p(x) = \varepsilon^2 r, r \in \mathbb{R},$ we have $\widetilde{\mathcal{L}}_0 : = r-[(\partial_x + il)^2 + k_0^2]^2$ for which $\psi_n(x) = \sqrt{\frac{k_f}{2\pi}} e^{\pm i k_fnx}, n \in \mathbb{N},$ and $\lambda_n(\ell) = -[k_0^2-(\pm k_f n + \ell)^2]^2+\varepsilon^2 r$ (see blue curves in Figure~\ref{fig:SH_def_spec}). Noting that $\widetilde{\mathcal{L}}_p$ can be viewed as a perturbation of $\widetilde{\mathcal{L}}_0$ directly gives the following (see, e.g. \cite{Schneider_Diffusive_stab}).

\begin{lemma}\label{LEMMA:bandstructure}
Let $p$ be a continuous periodic function on $[0,2\pi/k_f]$. Then for any fixed $\ell$ we have that $\widetilde{\mathcal{L}}_p$ from \eqref{EQ:L_p_tilde} is a closed, self-adjoint linear operator on $\chi:= L^2_{\rm per}([0,2\pi/k_f];\mathbb{C})$ with eigenvalues $\lambda_n(\ell) \in \mathbb{R}, n \in \mathbb{N},$ and corresponding eigenfunctions $\{\psi_n(\ell, \cdot)\}_{n \in \mathbb{N}}$ that form an orthonormal basis of $\chi$. Furthermore, $\lambda(-\ell) = \lambda(\ell)$ and $\psi(-\ell, \cdot) = \overline{\psi(\ell, \cdot)}$. In particular, a periodic extension of $p$ fulfills (a)-(c) in Assumption~\ref{ASSUMP:Turing_instability}.
\end{lemma}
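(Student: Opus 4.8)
The plan is to establish that $\widetilde{\mathcal{L}}_p$ is, for each fixed $\ell$, a relatively compact (indeed relatively bounded with bound zero) perturbation of the explicitly diagonalizable operator $\widetilde{\mathcal{L}}_0$, and then to invoke the spectral theorem for self-adjoint operators with compact resolvent on the Hilbert space $\chi = L^2_{\rm per}([0,2\pi/k_f];\mathbb{C})$. First I would record that $\widetilde{\mathcal{L}}_0 = -[(\partial_x + i\ell)^2 + k_0^2]^2$ is, on the dense domain $H^4_{\rm per}$, self-adjoint with the Fourier basis $\psi_n(x) = \sqrt{k_f/2\pi}\,e^{i k_f n x}$, $n \in \mathbb{Z}$, as eigenfunctions and eigenvalues $-[k_0^2 - (k_f n + \ell)^2]^2$, which are real, tend to $-\infty$, and thus give $\widetilde{\mathcal{L}}_0$ compact resolvent. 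Since $p$ is continuous and periodic it is a bounded multiplication operator on $\chi$, hence symmetric and $\widetilde{\mathcal{L}}_0$-bounded with relative bound $0$; by the Kato--Rellich theorem the sum $\widetilde{\mathcal{L}}_p = \widetilde{\mathcal{L}}_0 + p$ is self-adjoint on the same domain $H^4_{\rm per}$ and, because the resolvent difference is compact, also has compact resolvent.

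The spectral theorem for a self-adjoint operator with compact resolvent then yields a countable real spectrum consisting of eigenvalues accumulating only at $-\infty$ (the sign is fixed by the fact that $\widetilde{\mathcal{L}}_0 \le 0$ and $p$ is bounded), together with an orthonormal basis of eigenfunctions $\{\psi_n(\ell,\cdot)\}_{n\in\mathbb{N}}$ of $\chi$. After relabeling the eigenvalues in nonincreasing order this gives property (a), the ONB, and the decay $\lambda_n(\ell) \to -\infty$ contributing to (b); the ordering $\lambda_n(\ell) \ge \lambda_{n+1}(\ell)$ is then just a choice of enumeration. For the symmetry relations I would exploit the structure of the $\ell$-dependence: the map $\ell \mapsto -\ell$ together with complex conjugation $C$ satisfies $C\,\widetilde{\mathcal{L}}_{p,\ell}\,C = \widetilde{\mathcal{L}}_{p,-\ell}$ since $p$ is real-valued and conjugation sends $(\partial_x + i\ell)$ to $(\partial_x - i\ell)$. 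Consequently the eigenvalues coincide, $\lambda_n(-\ell) = \lambda_n(\ell)$, and the eigenfunctions may be chosen so that $\psi_n(-\ell,\cdot) = \overline{\psi_n(\ell,\cdot)}$, which establishes the claimed parity in (c).

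The main obstacle I anticipate is regularity of the spectral data, that is, the differentiability of $\ell \mapsto \lambda_n(\ell)$ and $\ell \mapsto \psi_n(\ell,\cdot)$ asserted in Assumption~\ref{ASSUMP:Turing_instability}(c). The abstract spectral theorem alone gives no smoothness in $\ell$, and eigenvalue branches can lose regularity precisely at crossings. The clean route is analytic perturbation theory in the sense of Kato: since $\widetilde{\mathcal{L}}_{p,\ell}$ depends analytically on $\ell$ (the dependence enters polynomially through $(\partial_x + i\ell)^2$, giving an analytic family of type (A)), each eigenvalue and a corresponding eigenprojection is, away from crossings, analytic in $\ell$, so in particular $C^3$. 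Here, though, I would be careful: the lemma as stated claims differentiability of the individual curves, which is exactly where the spectral gap hypotheses of the assumption become relevant for the \emph{first} band, while a fully general statement must either restrict to regions where the relevant eigenvalue is simple or appeal to the analyticity of the total projection onto a group of eigenvalues. Rather than asserting global $C^3$ smoothness of every branch, I would prove the differentiability locally near $\ell_0$ using simplicity of $\lambda_1$ there and note that this is all that the subsequent derivation in Section~\ref{SEC:GL_derivation} actually uses; the global assertion should be understood as holding on each open $\ell$-interval free of crossings. This caveat is the delicate point, and I would make the statement and its use precise rather than claim more regularity than the perturbation theory provides.
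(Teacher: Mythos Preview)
Your proposal is correct and follows essentially the same line as the paper: the paper does not give a detailed proof but simply observes that $\widetilde{\mathcal{L}}_p$ is a perturbation of the explicitly diagonalizable $\widetilde{\mathcal{L}}_0$ and defers to \cite{Schneider_Diffusive_stab}. Your Kato--Rellich plus compact-resolvent argument is the standard fleshing-out of exactly that remark, and your use of the conjugation symmetry $C\,\widetilde{\mathcal{L}}_{p,\ell}\,C = \widetilde{\mathcal{L}}_{p,-\ell}$ for the parity relations is the right mechanism. Your caveat about the $C^3$ regularity of eigenbranches at crossings is well taken and in fact more careful than the paper, which states (c) without qualification; your localization to simple eigenvalues via analytic perturbation theory of type (A) is the correct way to make this precise.
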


\begin{figure}[ht]
    \begin{subfigure}[]{0.32\textwidth}
    \includegraphics[width=\textwidth]{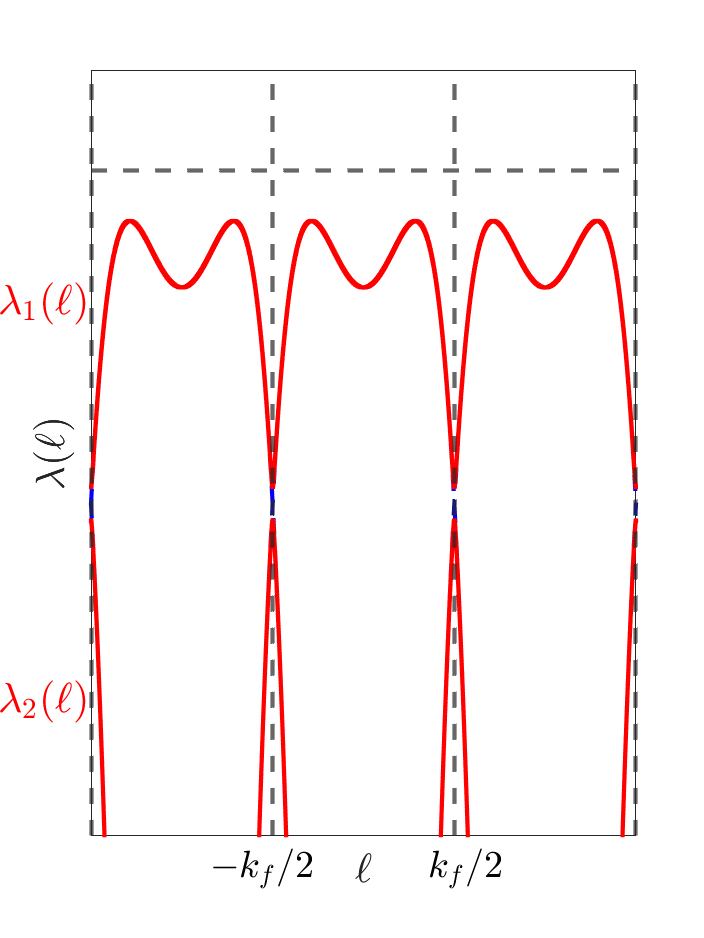}
    \caption{Spectrum for $\gamma=5\varepsilon$}
  \end{subfigure}
    \begin{subfigure}[]{0.32\textwidth}
    \includegraphics[width=\textwidth]{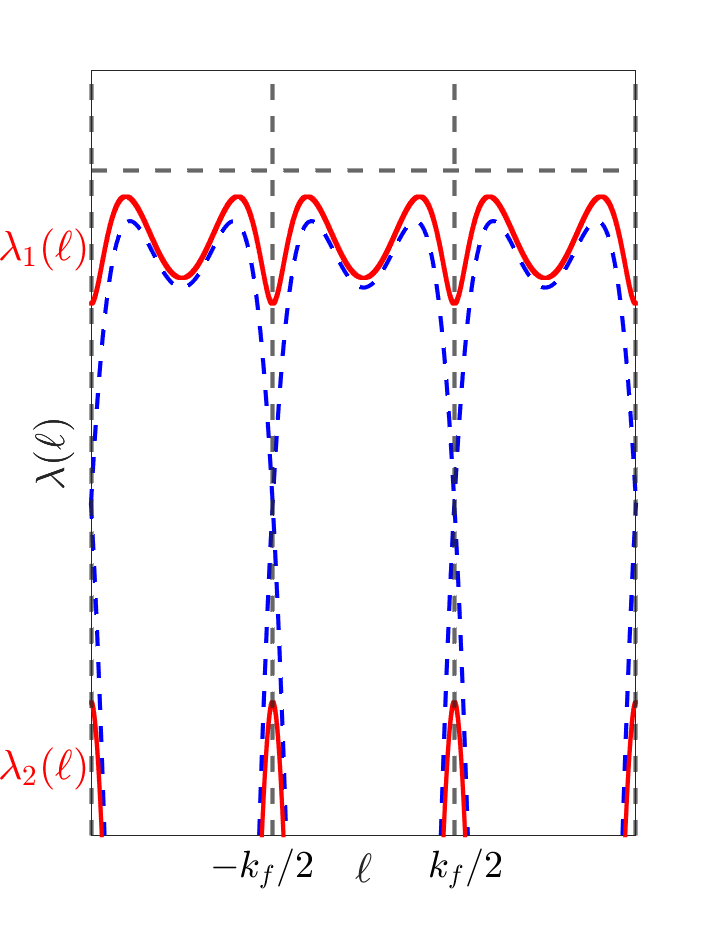}
    \caption{Spectrum for $\gamma=\mathcal{O}(1)$, $\gamma<\gamma_*$}
  \end{subfigure}
    \begin{subfigure}[]{0.32\textwidth}
    \includegraphics[width=\textwidth]{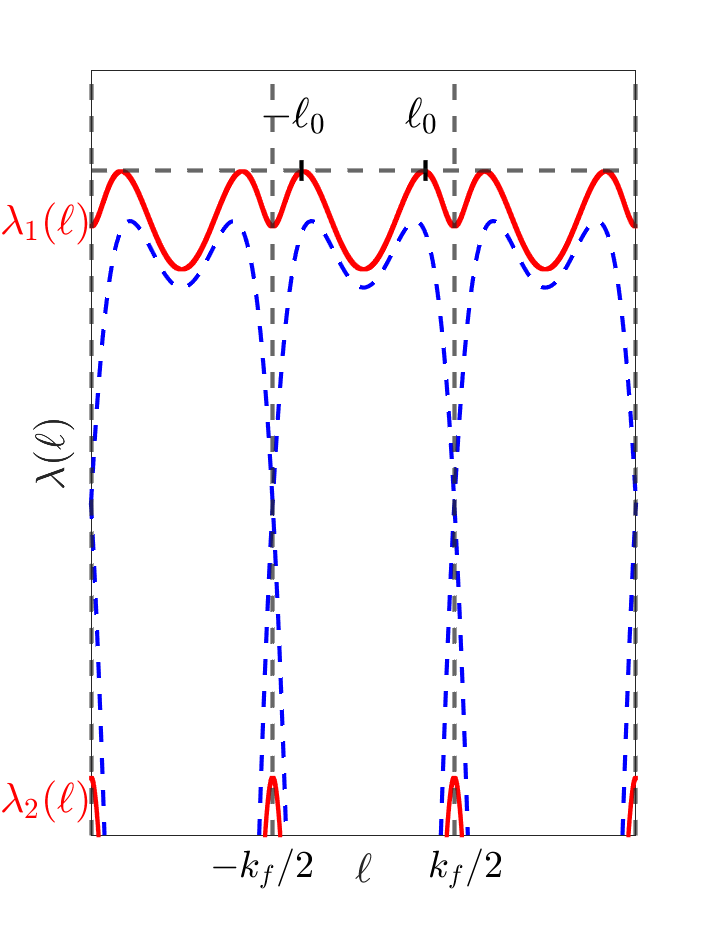}
    \caption{Spectrum for $\gamma=\gamma_*$}
  \end{subfigure}
\caption{Deformation of the spectral curves for the linear periodically forced Swift-Hohenberg system upon setting $p(x) = \mu_* + \gamma \cos(k_f x)$ and increasing $\gamma$. Parameter values are $\mu_* = -0.76$, $k_0=1$ and $k_f=3.5$. Blue dashed lines are artificial spectral curves for $\gamma=0$, the red lines are the spectral curves for $\gamma>0$ for different values of $\gamma$. The value $\gamma_*$ is such that the first band yields bifurcation onset.}
\label{fig:SH_def_spec}
\end{figure}

\subsubsection{Numerical computations of the band structure}
While Lemma~\ref{LEMMA:bandstructure} suggests that the occurrence of a band structure is rather generic, finding $p$ that generates a Turing bifurcation is rather difficult analytically. Nevertheless, there is a numerical recipe that allows to determine $p$ that fulfills Assumption~\ref{ASSUMP:Turing_instability}. This recipe is based on the knowledge of how the deformation of the constant-coefficient case band structure takes place upon adding periodic coefficients. For all the numerical results, we use the form
\begin{equation}
    p(x) = p_*(x)+\varepsilon^2r= \mu_* +\gamma_*\cos(k_fx)+\varepsilon^2r.
\end{equation}We illustrate in Figure~\ref{fig:SH_def_spec} the following procedure:
\begin{itemize}
    \item First we set $p(x) = \mu_*$ (artificially viewing this as a periodic function with period $2\pi/k_f$) for some $\mu_* < 0$, which, in particular, already sets up the first band $\lambda_1(\ell)$ to have a shape leading to a Turing bifurcation.
    \item Then we tune $\gamma_*$ such that $p(x) = p_*(x) = \mu_* + \gamma_* \cos(k_f x)$ fulfills (e)-(f) in Assumption~\ref{ASSUMP:Turing_instability}, so that it places the system at the onset of a bifurcation.
    \item Finally, we set $p(x) = p_*(x) + \varepsilon^2 r = \mu_* + \gamma_* \cos(k_f x)+ \varepsilon^2 r$, which creates a bifurcation upon merely shifting the band structure up and down via $r \in \mathbb{R}$ (cf. Figure~\ref{fig:SH_spec_bif_per}), fulfilling (d) in Assumption~\ref{ASSUMP:Turing_instability}.
\end{itemize}
Note that this further confirms, at least numerically, that Assumption~\ref{ASSUMP:Turing_instability} can hold true. For the computations displayed in the following, we set $k_0=1, k_f=3.5$ and vary $r$. The first spectral curve around bifurcation onset is shown in Figure \ref{fig:SH_spec_bif_per}.

\begin{figure}[ht]
\centering
    \begin{subfigure}[]{0.32\textwidth}
    \includegraphics[width=\textwidth]{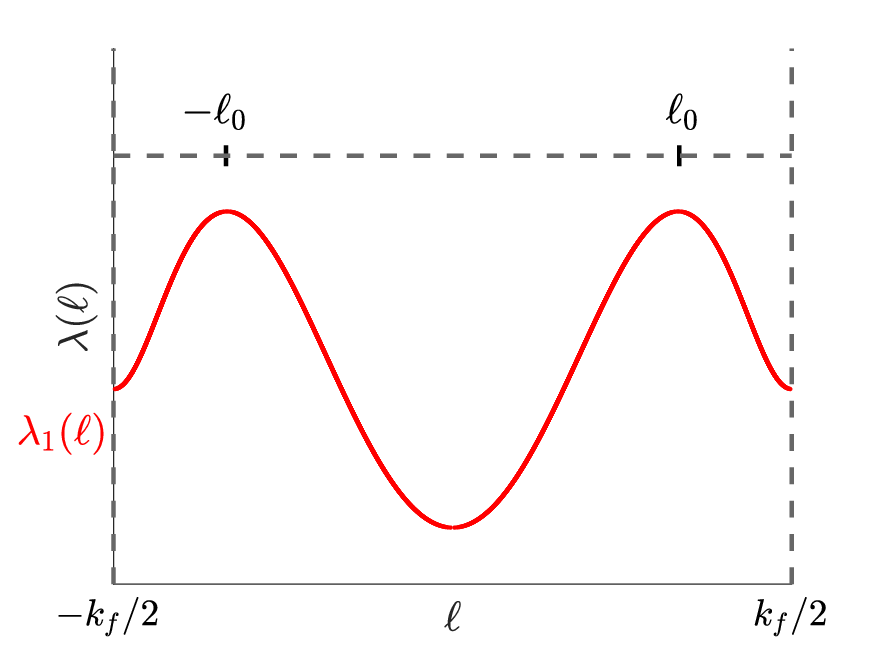}
    \caption{Spectrum for before onset}
  \end{subfigure}
    \begin{subfigure}[]{0.32\textwidth}
    \includegraphics[width=\textwidth]{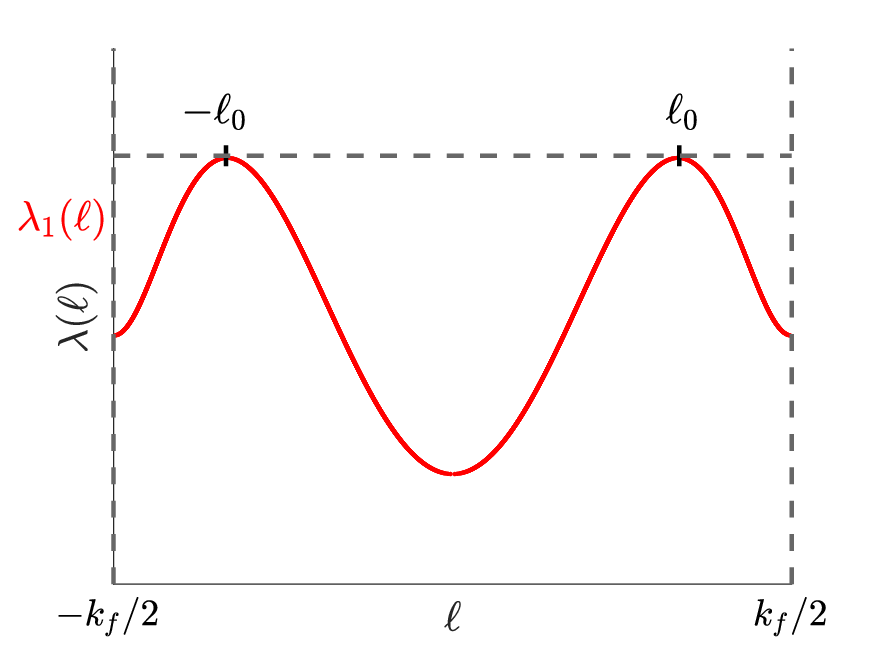}
    \caption{Spectrum at onset}
  \end{subfigure}
    \begin{subfigure}[]{0.32\textwidth}
    \includegraphics[width=\textwidth]{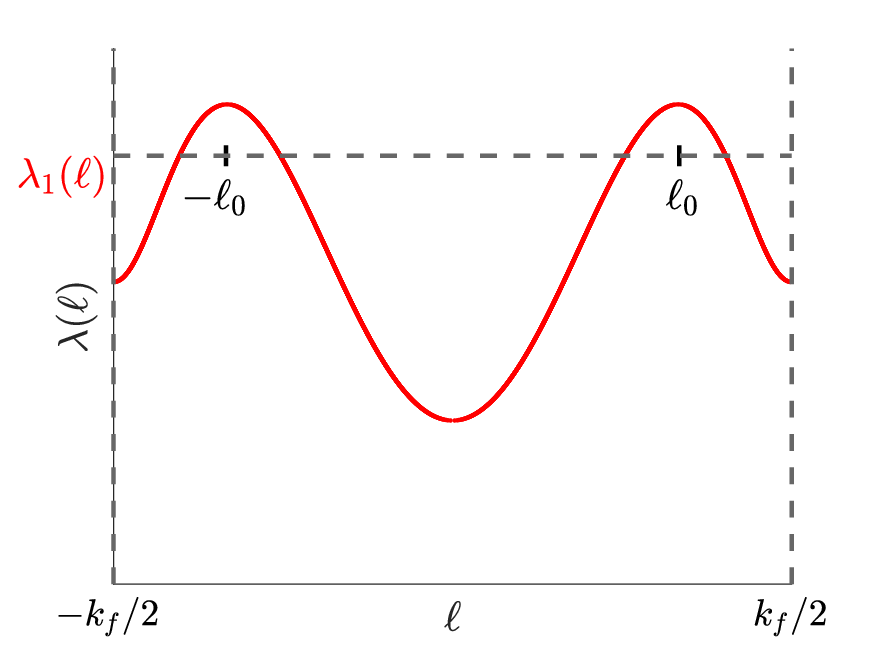}
    \caption{Spectrum after onset}
  \end{subfigure}
\caption{Spectrum when setting $p(x) = \mu_* + \gamma_* \cos(k_f x) + \varepsilon^2 r$ around before onset, $r = -1$ (left), at onset, $r = 0$, (middle) and after bifurcation, $r = 1$ (right). Other parameter values are $k_0=1$, $k_f=3.5$ and $\mu_*=-0.76$.}
\label{fig:SH_spec_bif_per}
\end{figure}

\begin{remark}(Finding $\mu_*$ and $\gamma_*$)
    Note that finding a $\mu_*$ and $\gamma_*$ such that Assumption~\ref{ASSUMP:Turing_instability}(d)-(f) hold true might not always be trivial. Once an initial guess for $\mu_*$ is made, an educated guess for $\gamma_*$ can be found by continuation of the background solution in the $\gamma$ direction. Using the Matlab package pde2path \cite{pde2path_HU, pde2path_HU_DW_JR} it is possible to find the first bifurcation point, which serves as an initial guess for (and is often very close to) $\gamma_*$. The value of $\mu_*$ has to be adapted if it turns out that at the found $\gamma_*$, Assumption~\ref{ASSUMP:Turing_instability}(f) is not satisfied yet.
\end{remark}

\begin{figure}[ht]
\centering
    \begin{subfigure}[b]{0.37\textwidth}
    \includegraphics[width=\textwidth]{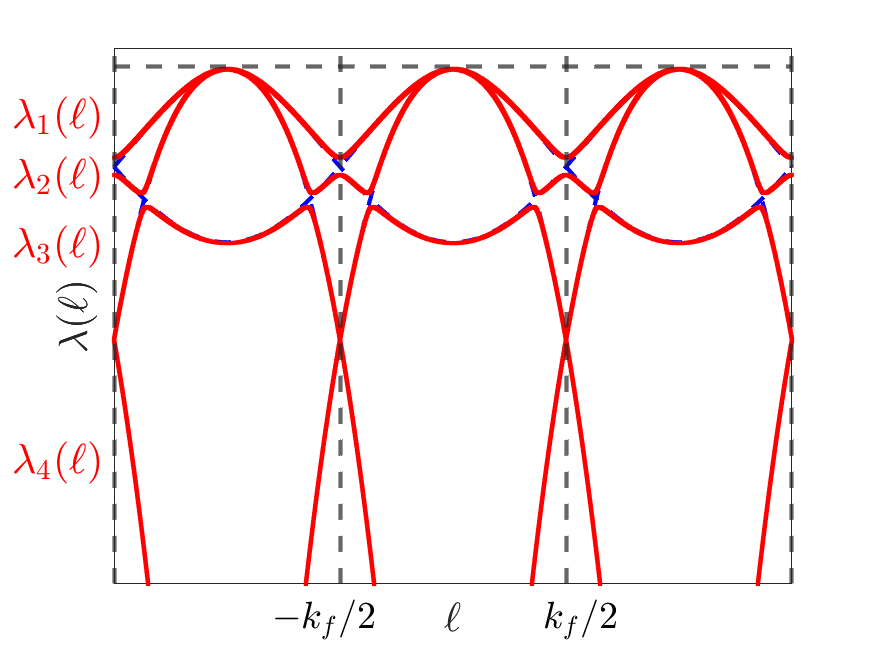}
    \caption{Spectrum at onset for $\gamma=\varepsilon$, $k_f=k_0$}
  \end{subfigure}
    \begin{subfigure}[b]{0.37\textwidth}
    \includegraphics[width=\textwidth]{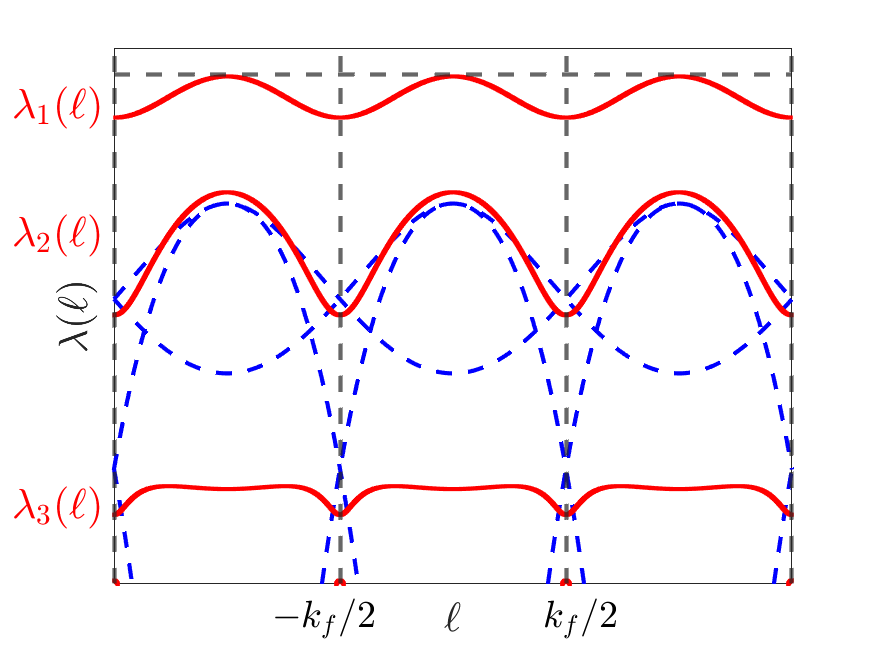}
    \caption{Spectrum at onset for $\gamma=\mathcal{O}(1)$, $k_f=k_0$}
  \end{subfigure}
  
    \begin{subfigure}[b]{0.37\textwidth}
    \includegraphics[width=\textwidth]{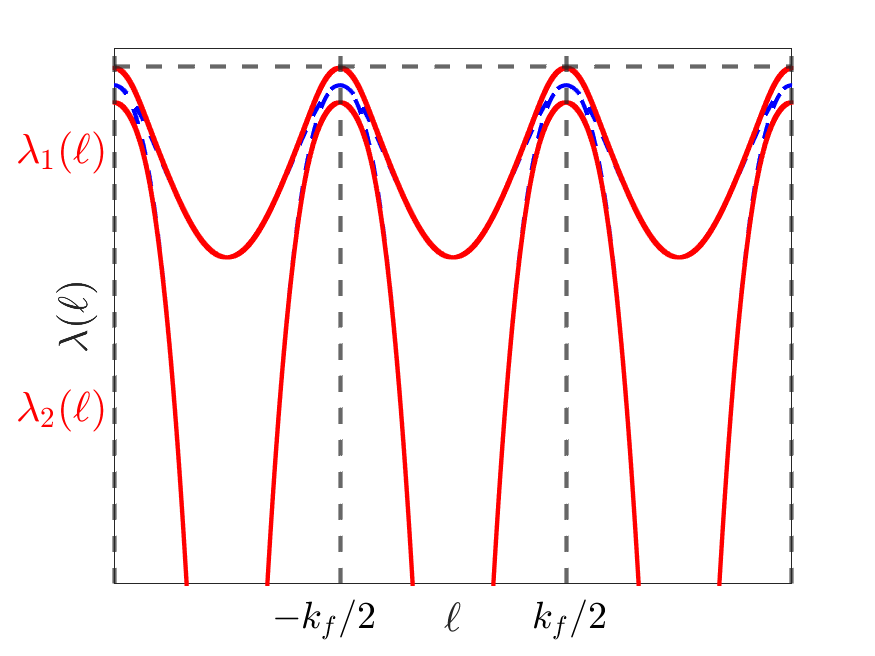}
    \caption{Spectrum at onset for $\gamma=\varepsilon$, $k_f=2k_0$}
  \end{subfigure}
    \begin{subfigure}[b]{0.37\textwidth}
    \includegraphics[width=\textwidth]{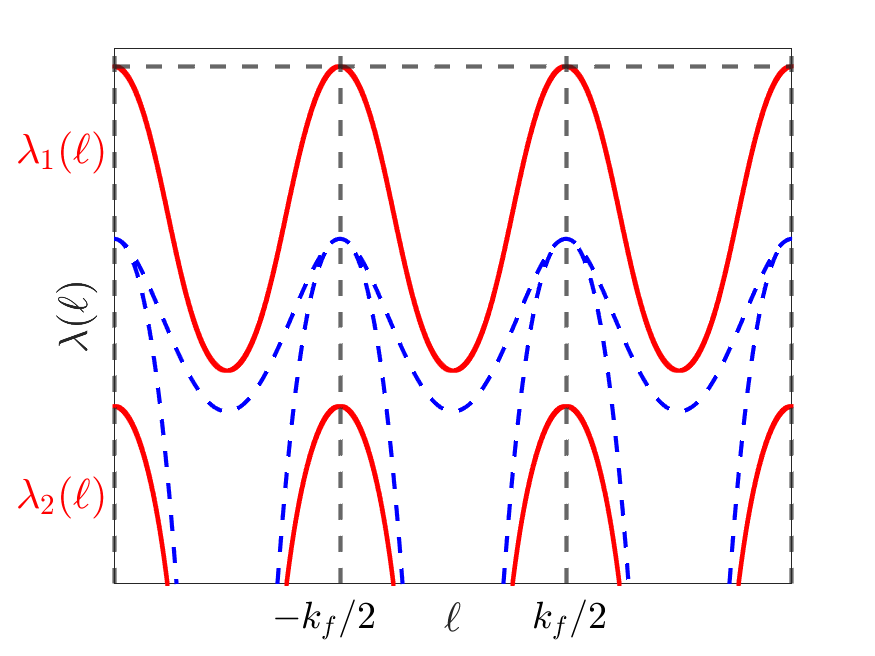}
    \caption{Spectrum at onset for $\gamma=\mathcal{O}(1)$, $k_f=2k_0$}
  \end{subfigure}
\caption{Spectrum of periodically forced Swift-Hohenberg system with forcing strength $\gamma= \mathcal{O}(\varepsilon)$ (left) or $\gamma=\mathcal{O}(1)$ (right) for resonant values $k_f=k_0$ (upper), representing the general $N$ is even case, and $k_f=2k_0$ (lower), for the $N$ is odd case. The dashed blue lines are the spectrum without forcing, red the spectral curves for $\gamma=\gamma_*>0$.}
\label{fig:SH_spec_res_cases}
\end{figure}

\subsubsection{Resonance phenomena}\label{sec:resonances}
The interplay between the intrinsic wavenumber $k_0$ and the forcing wavenumber $k_f$ can lead to resonance phenomena. These phenomena are studied in case of small forcing in \cite{Ehud_2008, Ehud_2014, Ehud_2012, Ehud_2015_1, Ehud_2015_2, Ehud_spatialperforcing, uecker2001, Doelman_Schielen}. The specific choice of $k_f$ relative to $k_0$ can lead to different behaviors in the system. The distinction between resonant and non-resonant cases becomes evident when studying the spectrum of the linear operator. We now give several cases of resonances, although we will focus for our analytical results on the case when the forcing is sufficiently large to have led to a spectral gap between the first two spectral curves.

{\bf Range of $\bm{\gamma}$.} From numerical observations, we know that for every value of $k_f$, there exists a $\gamma'\gg 1$ such that for any $(\mu_*, \gamma_*)$ with $\gamma_*\geq \gamma'$ the first spectral curve $\lambda_1(\ell)$ has deformed such that the maxima are located at $\pm \ell_0=\pm \frac{k_f}{2}$. 
Due to the periodicity of the spectrum, this means that $\ell_0=3\ell_0\pmod{k_f}$. Hence in the presence of for sufficiently large forcing, any wavenumber $k_f$ can be resonant.

\medskip

{\bf No spectral gap.} In case of small forcing, $\gamma =\mathcal{O}(\varepsilon)$, there are ranges of wavenumbers $k_f$ for which there is no spectral gap at onset between $\lambda_1$ and $\lambda_2$. In this case, an elaborate examination of the maximum of $\lambda_2$ is required. Denote the location of the maximum of $\lambda_2(\ell)$ by $\ell_1$. If $\lambda_2(\ell_1)=\mathcal{O}(\varepsilon^2)$, both $\lambda_1$ and $\lambda_2$ can cross the zero axis after bifurcation onset, yielding two instabilities directly after each other. The ansatz would have to also account for this second instability. On the other hand, if $\lambda_2(\ell_1)=\mathcal{O}(\varepsilon)$, only $\lambda_1$ crosses zero when increasing $\mu$ to $\mu_*+r\varepsilon^2$, hence an approximation can be derived using the analogous ansatz. However, in addition to the violation of Assumption~\ref{ASSUMP:Turing_instability}(f), Assumption~\ref{ASSUMP:Turing_instability}(g) might not be satisfied in this case. We leave it to later work to explore the nonlinear stability for these cases. 

\medskip

{\bf Large wavenumber forcing.} Even though a priori we did not put assumptions on the magnitude of $k_f$, we do exclude the cases $k_f\gg 1$. From a spectral point of view, $k_f\gg 1$ means that the period of the spectrum becomes large, meaning that the intersections of the first two spectral curves occur at very low values of $\lambda_1(\ell)$. As a consequence, an extremely large value of $\gamma$ is needed to see changes in the spectrum and to destabilize the background state. The case $k_f\gg 1$ is therefore part of future research.

\medskip

{\bf Resonant wavenumbers.} In case of small forcing $\gamma=\mathcal{O}(\varepsilon)$, resonance phenomena can occur for specific values of $k_f$ \cite{Ehud_2008, Ehud_2014, Ehud_2012, Ehud_2015_1, Ehud_2015_2, Ehud_spatialperforcing, uecker2001, Doelman_Schielen}. These resonant wavenumbers are $k_f\approx \frac{2k_0}{N}$ for $N>0$ \cite{Doelman_Schielen}. In the spectral picture, this means that, due to the $k_f$-periodicity, the artificial curves for $\gamma=0$ intersect at the maxima. For $N$ even, the first spectral curve has its maximum at $\ell_0=0$, while for $N$ odd, the spectral pictures show that $\ell_0 = \frac{k_f}{2}$. 
In case of small forcing, there is no spectral gap between $\lambda_1$ and $\lambda_2$.
For $\gamma = \mathcal{O}(1)$, the curves have deformed in such a way that there is a spectral gap, but $\ell_0$ does not need to change. Thus $k_f=2k_0/N$ can still be a resonant case (Definition \ref{def:resonances}) for large forcing. 
A Ginzburg-Landau equation has been derived using the classical ansatz (\ref{eq:ansatz_cc}) for the case that $k_f = k_0$ and $\gamma = \varepsilon$ in \cite{Ehud_spatialperforcing}. For general systems, Ginzburg-Landau equations are derived for $\gamma=\mathcal{\varepsilon}$ for both $N$ even and odd in \cite{Doelman_Schielen}.

\subsection{Formal derivation of the GL equation in Bloch space}\label{SEC:GL_derivation}
As alluded to, the classical ansatz as used in most works, including \cite{Harten_Validity_GL, SchneiderUecker, Ehud_2008}, does not work due to the size of the forcing. Instead, we use the Bloch transform to derive an ansatz. For the remainder of this section we will assume that Assumption~\ref{ASSUMP:Turing_instability} holds true. Applying the Bloch transform \eqref{eq:Bloch_T} to the Swift-Hohenberg system gives
\begin{align}\label{eq:SH_Bloch_space}
    \partial_t \widetilde{u}(\ell,x,t) = \widetilde{\mathcal{L}}_p\widetilde{u}(\ell,x,t) - \rho(x) (\widetilde{u}(\ell,x,t))^{\star^3} \,,
\end{align}
where the operation $\star$ is the Bloch convolution and $(.)^{\star^m}$ denotes the $(m-1)$-times iteration of the convolution. Using the representation
\begin{align}\label{eq:Bloch_expansion}
    \widetilde{u}(\ell,x,t) = \sum_{n \in \mathbb{N}} \widetilde{A}_n(\ell, t) \psi_n(\ell,x)\,,
\end{align}
the linear part of \eqref{eq:SH_Bloch_space} further simplifies (while the nonlinear terms become more intricate) to
\begin{align}\label{eq:SH_Bloch_space_using_ONB}
    \partial_t \widetilde{A}_n(\ell, t) = \lambda_n(\ell) \widetilde{A}_n(\ell, t) - \left\langle \rho(x) \left(\sum_{j \in \mathbb{N}} \widetilde{A}_j \psi_j \right)^{\star^3}(\ell, t, \cdot), ~\psi_n(\ell, \cdot) \right\rangle \,,
\end{align}
which we abstractly write as
\begin{align}
    \partial_t \widetilde{\mathcal{A}}(\ell, t) =  \widetilde{\mathbf{L}}(\ell)\widetilde{\mathcal{A}}(\ell, t) + \widetilde{\mathbf{N}}(\ell)(\widetilde{\mathcal{A}}(\ell, t))
\end{align}
for $\widetilde{\mathcal{A}} = \left(\widetilde{A}_n \right)_{n \in \mathbb{N}}$. In the following we will craft an ansatz that solves this system ``well enough" to allow a leading order error bound on long time scales. As a measure for this we introduce the residual
\begin{align}\label{eq:res_bloch_diag}
    \widetilde{\mathbf{Res}}\{ \widetilde{\mathcal{A}} \} := - \partial_t \widetilde{\mathcal{A}} + \widetilde{\mathbf{L}}(\ell)\widetilde{\mathcal{A}} + \widetilde{\mathbf{N}}(\ell)(\widetilde{\mathcal{A}}) \, .
\end{align}
The overall structure of the ansatz is
\begin{align}\label{eq:ansatz_bloch_diag}
   \widetilde{\mathcal{A}}_{\mathrm{ansatz}} = 
   \left( 
   \begin{array}{c}
        \widetilde{a}_1 + \varepsilon^2 \widetilde{B}_1  \\
        \left(\varepsilon^2 \widetilde{B}_n\right)_{n > 1}
   \end{array}
   \right)
   \, ,
\end{align}
where $\widetilde{a}_1$ is the ``core ansatz" and $\widetilde{B}_n$ serve to improve the accuracy beyond this ``core ansatz" accounting for nonlinear interactions between different branches of the band structure. The choice of $\varepsilon$-scaling will become clear from the analysis of the nonlinear terms. In order to further refine the ansatz we recall that the Assumptions~\ref{ASSUMP:Turing_instability} demand that $\lambda_n(\ell) < 0$ for $n >1$ and $\lambda_1(\pm \ell_0) = 0, ~ \partial_{\ell}\lambda_1(\pm \ell_0) = 0$, so Taylor expanding $\lambda_1$ around $\pm \ell_0$ results in
\begin{align}
    \lambda_1(\ell) = \varepsilon^2  \left[ r +  \frac12 \partial_{\ell}^2\lambda_1(\pm\ell_0) L_{\pm}^2 \right] + \mathcal{O}(\varepsilon^3)\, ,
\end{align}
with $L_{\pm} = \frac{\ell - (\pm \ell_0)}{\varepsilon}$. Hence, in order to capture the dynamics of linearly unstable wavenumbers beyond onset, we set
\begin{align}\label{eq:Bloch_expansion_ansatz}
    \widetilde{a}_1(\ell,t) &= \widetilde{A}_{1,+}\left( \frac{\ell-\ell_0}{\varepsilon}, ~\varepsilon^2 t\right) + \widetilde{A}_{1,-}\left( \frac{\ell+\ell_0}{\varepsilon}, ~\varepsilon^2  t\right)  \, ,
\end{align}
for which we note that $\widetilde{A}_{1,+}$ is centered around $\ell_0$, while $\widetilde{A}_{1,-}$ is centered around $-\ell_0$ (see Figure \ref{fig:SH_spec_shape_A1}).
\begin{figure}[t]
\centering
\vspace{-1cm}
\scalebox{1}{ 
    \begin{subfigure}[]{0.44\textwidth}
    \includegraphics[width=\textwidth]{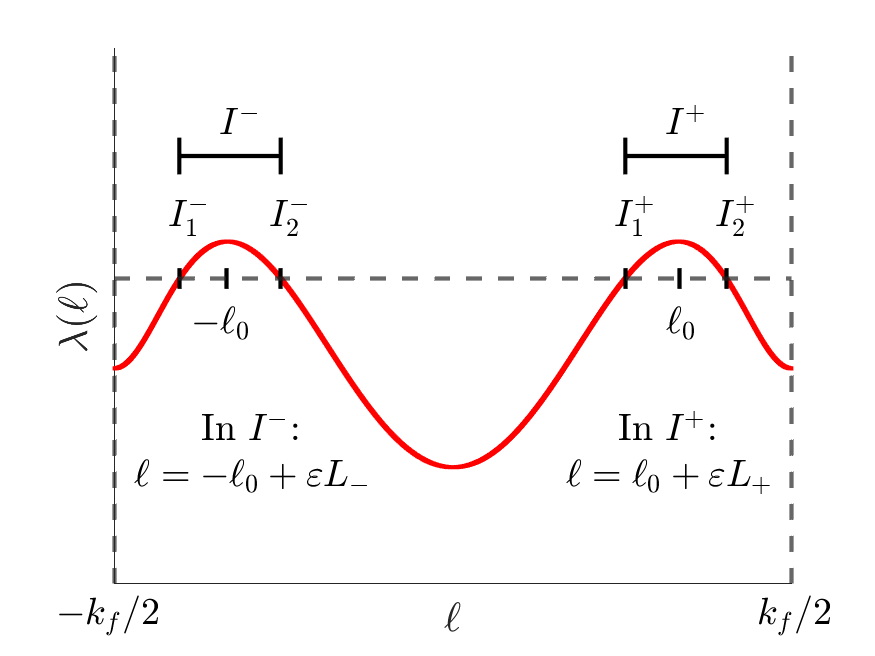}
    \caption{Spectral curve $\lambda_1$.}
   \label{fig:SH_zoom_lambda1}
  \end{subfigure}
    \begin{subfigure}[]{0.44\textwidth}
    \includegraphics[width=\textwidth]{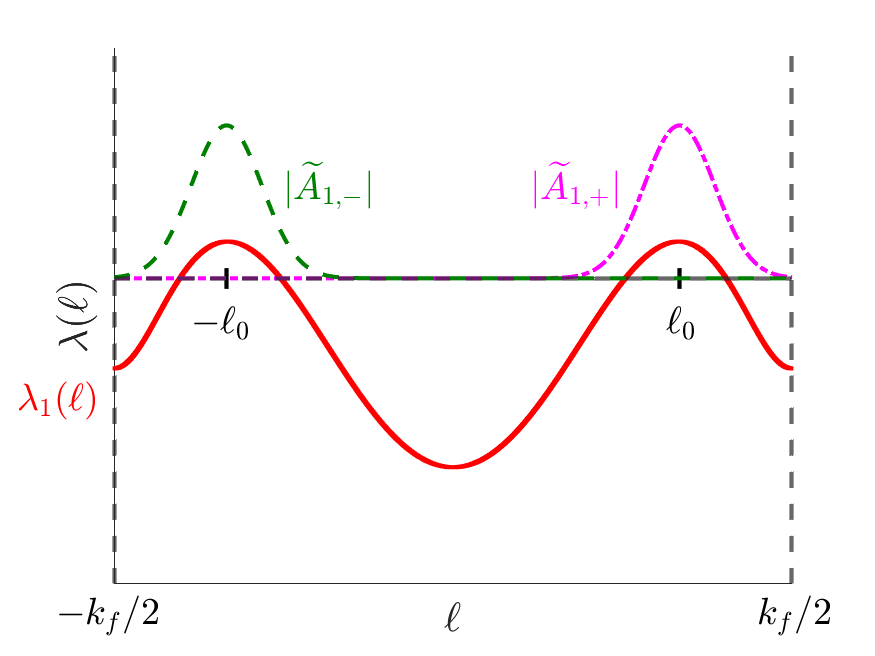}
    \caption{Sketch of the shape of $|\widetilde{A}_{1,+}|$ and $|\widetilde{A}_{1,-}|$ .}
   \label{fig:SH_spec_shape_A1}
  \end{subfigure}
}
\caption{Zoom in of $\lambda_1$, and sketches of $|\widetilde{A}_{1,+}|$ and $|\widetilde{A}_{1,-}|$ with information needed to derive ansatz (\ref{eq:Bloch_expansion_ansatz})}
\end{figure}

After substituting \eqref{eq:ansatz_bloch_diag} into \eqref{eq:res_bloch_diag}, we get
\begin{align*}
&\widetilde{\mathbf{Res}}_1\{\widetilde{\mathcal{A}}_{\mathrm{ansatz}} \} = \left[- \partial_t \widetilde{\mathcal{A}}_{\mathrm{ansatz}} + \widetilde{\mathbf{L}}(\ell)\widetilde{\mathcal{A}}_{\mathrm{ansatz}} + \widetilde{\mathbf{N}}\left(\widetilde{\mathcal{A}}_{\mathrm{ansatz}}\right) \right]_1\\[.2cm]
& = \varepsilon^2 \left[ -\partial_T \widetilde{A}_{1,+} + \left( r +  \frac12 \partial_{\ell}^2\lambda_1(\ell_0) L_{+}^2\right)\widetilde{A}_{1,+} - 3 \left(\int_{0}^{\frac{2 \pi}{k_f}} |\psi_1(\ell_0, x)|^4 \rho(x)~ dx\right) \widetilde{A}_{1,+} \star_{\varepsilon} \widetilde{A}_{1,+} \star_{\varepsilon} \widetilde{A}_{1,-} \right]\\[.2cm]
& + \varepsilon^2 \left[ -\partial_T \widetilde{A}_{1,-} + \left( r +  \frac12 \partial_{\ell}^2\lambda_1(-\ell_0) L_{-}^2\right)\widetilde{A}_{1,-} - 3 \left(\int_{0}^{\frac{2 \pi}{k_f}} |\psi_1(-\ell_0,x)|^4 \rho(x)~ dx \right) \widetilde{A}_{1,+} \star_{\varepsilon} \widetilde{A}_{1,-} \star_{\varepsilon} \widetilde{A}_{1,-} \right]\\[.2cm]
& + \varepsilon^2 \underbrace{\left[ -\left(\int_{0}^{\frac{2 \pi}{k_f}}\psi_1(\ell_0,x)^3 \overline{\psi_1(3\ell_0,x)} \rho(x)~dx\right) \widetilde{A}_{1,+} \star_{\varepsilon} \widetilde{A}_{1,+} \star_{\varepsilon} \widetilde{A}_{1,+} \right]}_{=:\widetilde{B}_{1,3}}\\[.2cm]
& + \varepsilon^2 \underbrace{\left[ -\left(\int_{0}^{\frac{2 \pi}{k_f}} \psi_1(-\ell_0,x)^3 \overline{\psi_1(-3\ell_0,x)} \rho(x)~dx\right) \widetilde{A}_{1,-} \star_{\varepsilon} \widetilde{A}_{1,-} \star_{\varepsilon} \widetilde{A}_{1,-} \right]}_{=:\widetilde{B}_{1,-3}} + \varepsilon^2 \lambda_1(\ell)\widetilde{B}_1 \, 
\end{align*}
for $n=1$, while for $n > 1$ we get
\begin{align*}
&\widetilde{\mathbf{Res}}_n\{\widetilde{\mathcal{A}}_{\mathrm{ansatz}} \} = \left[- \partial_t \widetilde{\mathcal{A}}_{\mathrm{ansatz}} + \widetilde{\mathbf{L}}(\ell)\widetilde{\mathcal{A}}_{\mathrm{ansatz}} + \widetilde{\mathbf{N}}\left(\widetilde{\mathcal{A}}_{\mathrm{ansatz}}\right) \right]_n\\[.2cm]
& = \varepsilon^2 \underbrace{\left[ -3\left(\int_{0}^{\frac{2 \pi}{k_f}} |\psi_1(\ell_0,x)|^2 \psi_1(\ell_0,x) \overline{\psi_n(\ell_0,x)} \rho(x)~dx\right) \widetilde{A}_{1,+} \star_{\varepsilon} \widetilde{A}_{1,+} \star_{\varepsilon} \widetilde{A}_{1,-} \right]}_{=:\widetilde{B}_{n,1}}\\[.2cm]
& + \varepsilon^2 \underbrace{\left[ -3\left(\int_{0}^{\frac{2 \pi}{k_f}} |\psi_1(-\ell_0,x)|^2 \psi_1(-\ell_0,x) \overline{\psi_n(-\ell_0,x)} \rho(x)~dx\right) \widetilde{A}_{1,+} \star_{\varepsilon} \widetilde{A}_{1,-} \star_{\varepsilon} \widetilde{A}_{1,-} \right]}_{=:\widetilde{B}_{n,-1}}\\[.2cm]
& + \varepsilon^2 \underbrace{\left[ -\left(\int_{0}^{\frac{2 \pi}{k_f}} \psi_1(\ell_0, x)^3 \overline{\psi_n(3\ell_0,x)}\rho(x)~ dx\right) \widetilde{A}_{1,+} \star_{\varepsilon} \widetilde{A}_{1,+} \star_{\varepsilon} \widetilde{A}_{1,+} \right]}_{=:\widetilde{B}_{n,3}}\\[.2cm]
& + \varepsilon^2 \underbrace{\left[ -\left(\int_{0}^{\frac{2 \pi}{k_f}} \psi_1(-\ell_0,x)^3\overline{\psi_n(-3\ell_0,x)}\rho(x)~ dx\right) \widetilde{A}_{1,-} \star_{\varepsilon} \widetilde{A}_{1,-} \star_{\varepsilon} \widetilde{A}_{1,-} \right]}_{=:\widetilde{B}_{n,-3}} + \varepsilon^2 \lambda_n(\ell)\widetilde{B}_n \, .
\end{align*}
where we used the notation
\begin{align}\label{eq:epsilon_convonlution}
    \left( \widetilde{A}_{1,+} \star_\varepsilon \widetilde{A}_{1,+} \star_\varepsilon \widetilde{A}_{1,-} \right)(\kappa) 
= \int_{-k_f/(2\varepsilon)}^{k_f/(2\varepsilon)} \int_{-k_f/(2\varepsilon)}^{k_f/(2\varepsilon)} 
\widetilde{A}_{1,+}(\kappa - \kappa_1) \widetilde{A}_{1,+}(\kappa_1 - \kappa_2) \widetilde{A}_{1,-}(\kappa_2) 
\, d\kappa_2 \, d\kappa_1 \,,
\end{align}
which allows to read off that one gains a $\varepsilon^2$ factor in front of the nonlinear terms when changing from $\ell$ to $L_{\pm}$. This justifies the initial structure of the ``core ansatz" \eqref{eq:ansatz_bloch_diag} in retrospect. Following \cite{justification_nonlin_schr} we choose $\widetilde{A}_{1,+} = \chi_{\varepsilon, + \ell_0} \hat{A}$ and $\widetilde{A}_{1,-} = \chi_{\varepsilon, - \ell_0}\hat{\overline{A}}$ with
\begin{align}\label{eq:cut-off}
\chi_{\varepsilon, k}(\ell)
 =
 \left\{
 \begin{array}{rcl}
      1 &,& \varepsilon^{-1}(\ell-k) \in (-\pi/k_f, \pi/k_f] \, ,  \\
      0 &,& \varepsilon^{-1}(\ell-k) \notin (-\pi/k_f, \pi/k_f] \, ,  
 \end{array}
 \right.
\end{align}
and $\widetilde{A}_{1,\pm}$ continued periodically.

\subsubsection{Ginzburg-Landau equation for the non-resonant case}\label{sec:non-resonant_GL}

Assume that $k_f$ is non-resonant, then $\ell_0\neq 3\ell_0\pmod{k_f}$. Demanding that $\widetilde{A}_{1,+}$ solves the Bloch transformed Ginzburg-Landau equation
\begin{align}\label{eq:GL_Bloch}
    \partial_T \widetilde{A}_{1,+} = \left( r +  \frac12 \partial_{\ell}^2\lambda_1(\ell_0) L_{+}^2\right)\widetilde{A}_{1,+} - 3 \left(\int_{0}^{\frac{2 \pi}{k_f}} |\psi_1(\ell_0,x)|^4\rho(x)~ dx\right) \widetilde{A}_{1,+} \star_{\varepsilon} \widetilde{A}_{1,+} \star_{\varepsilon} \widetilde{A}_{1,-} \,
\end{align}
and that $\widetilde{A}_{1,-}$ solves solves the Bloch transformed Ginzburg-Landau equation
\begin{align}\label{eq:GL_Bloch_cc}
    \partial_T \widetilde{A}_{1,-} = \left( r + \frac12 \partial_{\ell}^2\lambda_1(-\ell_0) L_{-}^2\right)\widetilde{A}_{1,-} - 3 \left(\int_{0}^{\frac{2 \pi}{k_f}} |\psi_1(-\ell_0,x)|^4 \rho(x)~dx\right) \widetilde{A}_{1,+} \star_{\varepsilon} \widetilde{A}_{1,-} \star_{\varepsilon} \widetilde{A}_{1,-}
\end{align}
cancels several terms in the residual. However, in order to formally achieve that the residual is of order $\mathcal{O}(\varepsilon^3)$ we further improve the ansatz to
\begin{align}\label{eq:B_n_equations}
\widetilde{B}_1 &= \frac{1}{\lambda_1(3\ell_0)}\widetilde{B}_{1,3} + \frac{1}{\lambda_1(-3\ell_0)}\widetilde{B}_{1,-3}\,, \\[.2cm]
\widetilde{B}_n &= \frac{1}{\lambda_n(\ell_0)}\widetilde{B}_{n,1} + \frac{1}{\lambda_n(-\ell_0)}\widetilde{B}_{n,-1} + \frac{1}{\lambda_n(3\ell_0)}\widetilde{B}_{n,3} + \frac{1}{\lambda_n(-3\ell_0)}\widetilde{B}_{n,-3}\,, \qquad n > 1 \, .
\end{align}
From \eqref{eq:GL_Bloch} we formally find by letting $\varepsilon\rightarrow 0$ the Fourier transformed Ginzburg-Landau equation
\begin{equation}
    \partial_T \hat{A}_{1,+} = \left( r + \frac12 \partial_{\ell}^2\lambda_1(\ell_0) L_{+}^2\right)\hat{A}_{1,+} - 3 \left(\int_{0}^{\frac{2 \pi}{k_f}} |\psi_1(\ell_0,x)|^4\rho(x)~ dx\right) \hat{A}_{1,+} \star_{\varepsilon} \hat{A}_{1,+} \star_{\varepsilon} \hat{A}_{1,-} \, ,
\end{equation}
since $\int_{-k_f/(2\varepsilon)}^{k_f/(2\varepsilon)}\rightarrow \int_\mathbb{R}$. Applying the inverse Fourier transform results in the Ginzburg-Landau equation in $x$-space, equation \eqref{eq:GL_forced_SH}. Combining \eqref{eq:Bloch_expansion}, \eqref{eq:ansatz_bloch_diag}, \eqref{eq:Bloch_expansion_ansatz} and transforming this ansatz back to $x$-space yields
\begin{equation}\label{eq:ansatz_xspace}
\begin{split}
    u_{\rm Ans}(x,t) =&~ u_{\rm GL}(x,t)+ \big( \varepsilon^2 \partial_X A_1 (X,T)\partial_\ell \psi_1(\ell_0,x)e^{i\ell_0x} +\rm c.c. \big)+ \, \mathcal{O}(\varepsilon^3),
\end{split}
\end{equation}
with $u_{\rm GL}$ given in \eqref{eq:u_GL} (conf. \cite{martina_detection}).

\subsubsection{Allen-Cahn derivation}
With some small adaptations, Sections \ref{SEC:derivation} and \ref{sec:justification} can be used to prove Corollary \ref{cor:SH_approx_l00}. In case that $k_f$ and $\mu_*,\gamma_*$ are such that $\ell_0=0$, we adjust the ansatz \eqref{eq:ansatz_bloch_diag} to 
\begin{align}
   \widetilde{\mathcal{A}}_{\mathrm{ansatz}} = 
   \left( 
   \begin{array}{c}
        \widetilde{a}_1  \\
        \left(\varepsilon^2 \widetilde{B}_n\right)_{n > 1}
   \end{array}
   \right)
   \, ,
\end{align}
with
\begin{equation}
    \widetilde{a}_1(\ell,t) = \widetilde{A}_{1}\left(\frac{\ell}{\varepsilon}, \varepsilon^2t\right) \psi_1(\ell,x)
\end{equation}
and demand that $\widetilde{A}_1$ solves the Bloch transformed Allen-Cahn equation
\begin{align}\label{eq:AC_Bloch}
\begin{split}
    \partial_T \widetilde{A}_{1} = &~\left( r +  \frac12 \partial_{\ell}^2\lambda_1(0) L^2\right)\widetilde{A}_{1}
    -\left(\int_{0}^{\frac{2 \pi}{k_f}}\psi_1(0,x)^4 \rho(x)~dx \right) \widetilde{A}_{1} \star_{\varepsilon} \widetilde{A}_{1} \star_{\varepsilon} \widetilde{A}_{1}\,.    
\end{split}
\end{align}
In order to formally achieve that the residual is of order $\mathcal{O}(\varepsilon^3)$ we further improve as before
\begin{equation}\label{eq:Bloch_expansion_ansatz_improved_AC}
\widetilde{B}_n = \frac{1}{\lambda_n(0)}\left(\int_{0}^{\frac{2 \pi}{k_f}}\psi_1(0,x)^3 \psi_n(0,x) \rho(x)dx\right) (\widetilde{A}_{1} \star_{\varepsilon} \widetilde{A}_{1} \star_{\varepsilon} \widetilde{A}_{1}) \,, \qquad n > 1 \, ,
\end{equation}
which is well-defined by Assumption~\ref{ASSUMP:Turing_instability}(g'). From \eqref{eq:AC_Bloch} we formally find the Fourier transformed Allen-Cahn equation
\begin{equation}
\begin{split}
    \partial_T \hat{A}_{1} = &~ \left( r + \frac12 \partial_{\ell}^2\lambda_1(0) L^2\right)\hat{A}_{1} -\left(\int_{0}^{\frac{2 \pi}{k_f}}\psi_1(0,x)^4 \rho(x)dx\right) (\hat{A}_{1} \star_{\varepsilon} \hat{A}_{1} \star_{\varepsilon} \hat{A}_{1})\, .
\end{split}
\end{equation}
This results in the Allen-Cahn equation in $x$-space, equation (\ref{eq:AC_forced_SH}).

\subsubsection{Non-phase invariant Ginzburg-Landau derivation}
With some small adaptations, Sections \ref{SEC:derivation} and \ref{sec:justification} can be used to prove Corollary \ref{cor:SH_approx_res}. Here, as $k_f$ is resonant, we have $\ell_0= 3\ell_0\pmod{k_f}$. We use ansatz \eqref{eq:ansatz_bloch_diag} and demand that $\widetilde{A}_{1,+}$ solves the Bloch transformed non-phase invariant Ginzburg-Landau equation
\begin{align}\label{eq:GL_Bloch_res}
\begin{split}
    \partial_T \widetilde{A}_{1,+} = &~\left( r +  \frac12 \partial_{\ell}^2\lambda_1(\ell_0) L_{+}^2\right)\widetilde{A}_{1,+} - 3 \left(\int_{0}^{\frac{2 \pi}{k_f}} |\psi_1(\ell_0,x)|^4 \rho(x)~dx\right) \widetilde{A}_{1,+} \star_{\varepsilon} \widetilde{A}_{1,+} \star_{\varepsilon} \widetilde{A}_{1,-} \\ & 
    -\left(\int_{0}^{\frac{2 \pi}{k_f}}\psi_1(\ell_0,x)^3 \overline{\psi_1(\ell_0,x)} \rho(x) dx\right) \widetilde{A}_{1,+} \star_{\varepsilon} \widetilde{A}_{1,+} \star_{\varepsilon} \widetilde{A}_{1,+}\, ,    
\end{split}
\end{align}
and that $\widetilde{A}_{1,-}$ solves a similar equation,
\begin{align}\label{eq:GL_Bloch_res_cc}
\begin{split}
        \partial_T \widetilde{A}_{1,-} = &~\left( r + \frac12 \partial_{\ell}^2\lambda_1(-\ell_0) L_{+}^2\right)\widetilde{A}_{1,-} - 3 \left(\int_{0}^{\frac{2 \pi}{k_f}} |\psi_1(-\ell_0,x)|^4 \rho(x)~dx\right) \widetilde{A}_{1,+} \star_{\varepsilon} \widetilde{A}_{1,-} \star_{\varepsilon} \widetilde{A}_{1,-} \\
        & -\left(\int_{0}^{\frac{2 \pi}{k_f}} \psi_1(-\ell_0,x)^3 \overline{\psi_1(-\ell_0,x)} \rho(x)~dx\right) \widetilde{A}_{1,-} \star_{\varepsilon} \widetilde{A}_{1,-} \star_{\varepsilon} \widetilde{A}_{1,-}\, .
\end{split}
\end{align}
In order to formally achieve that the residual is of order $\mathcal{O}(\varepsilon^3)$ we further improve the ansatz as before. Again, Assumption~\ref{ASSUMP:Turing_instability}(g') allows to derive equations for $\widetilde{B}_{n}$ such that the residual is formally of order $\mathcal{O}(\varepsilon^3)$. From \eqref{eq:GL_Bloch_res} we formally find the Fourier transformed non-phase invariant Ginzburg-Landau equation
\begin{equation}
\begin{split}
    \partial_T \hat{A}_{1,+} = &~ \left( r + \frac12 \partial_{\ell}^2\lambda_1(\ell_0) L_{+}^2\right)\hat{A}_{1,+} - 3 \left(\int_{0}^{\frac{2 \pi}{k_f}} |\psi_1(\ell_0,x)|^4 \rho(x)~dx\right) \hat{A}_{1,+} \star_{\varepsilon} \hat{A}_{1,+} \star_{\varepsilon} \hat{A}_{1,-}\\ & -\left(\int_{0}^{\frac{2 \pi}{k_f}}\psi_1(\ell_0,x)^3 \overline{\psi_1(\ell_0,x)} \rho(x)~dx\right) \hat{A}_{1,+} \star_{\varepsilon} \hat{A}_{1,+} \star_{\varepsilon} \hat{A}_{1,+}\, ,
\end{split}
\end{equation}
which in turn results in the non-phase invariant Ginzburg-Landau equation in $x$-space, equation \eqref{eq:GL_forced_SH_res}.

\newpage
\section{Justification of the modulation equations}\label{sec:justification}

To justify the Ginzburg-Landau approximation, it is required to prove error estimate (\ref{eq:thm_error_est}). To that end, we use the function space $H_{ul}^\vartheta$ as introduced in Section \ref{subsec:function_spaces}. In this function space, we give a rigorous estimate of the magnitude of the residual. We show that the semigroup generated by the linear operator is strongly continuous, such that the variation of constant formula can be applied. The remaining terms in the error equation have to be bounded, such that applying Grönwalls theorem gives the desired estimate. Throughout this section many different constants that are independent of $\varepsilon$ are denoted by C.\\

To give an estimate on how close the approximation $u_{\rm Ans}$ is to the real solution, we have to estimate the error
\begin{equation}\label{eq:error_S_def}
    \varepsilon^\beta S(x,t) = u(x,t) - u_{\rm Ans}(x,t).
\end{equation}
The error $S$ satisfies the PDE
\begin{equation}\label{eq:error_eq}
\begin{split}
    \partial_t S = & -(k_0^2+\partial_x^2)^2 S + (p_*+\varepsilon^2r) S -\varepsilon^{2\beta}\rho~ S^3\\ & -3 \varepsilon^\beta \rho S^2 u_{\rm Ans} - 3\rho S u_{\rm Ans}^2-\varepsilon^{-\beta} \text{Res}(u_{\rm Ans}).
\end{split}
\end{equation}
In order to prove Theorem \ref{thm:SH_approx}, we want to bound all the terms in this equation. Introduce
\begin{equation}\label{eq:def_F}
    F(S) : = \varepsilon^2 r S -\varepsilon^{2\beta} \rho S^3-3\varepsilon^\beta \rho S^2 u_{\rm Ans} - 3\rho S u_{\rm Ans}^2-\varepsilon^{-\beta} ~\text{Res}(u_{\rm Ans}).
\end{equation}
Then the error equation is given by
\begin{equation}\label{eq:error_eq_S}
    \partial_t S = \mathcal{L}_{p_*} S +F(S).
\end{equation}
By standard arguments it can be shown that $\mathcal{L}_{p_*}$ generates a strongly continuous semigroup of uniformly bounded linear operators. We can split $\mathcal{L}_{p_*}$ into $\mathcal{L}_{p_*} = \mathcal{L}_{\rm SH} + B$ with $\mathcal{L}_{\rm SH} = -(k_0^2+\partial_x^2)^2$ and $B=p_*(x)$. Then we know that $\mathcal{L}_{\rm SH}$ generates a strongly continuous semigroup. As $B$ is a bounded perturbation, it follows from \cite[Theorem 1.3]{Engel_Nagel} that $\mathcal{L}_{p_*}$ generates a strongly continuous semigroup.

In order to give a semigroup bound we consider the semigroup in Bloch space, generated by $\widetilde{\mathcal{L}}_{p_*}$, and examine its spectrum. Assumption~\ref{ASSUMP:Turing_instability}(f) allows a separation of the diffusive mode from the exponentially damped modes. Taking the Bloch transform of equation (\ref{eq:error_eq_S}) yields
\begin{equation}\label{eq:VOC_Bloch}
    \partial_t \widetilde{S} = \widetilde{\mathcal{L}}_{p_*} \widetilde{S} +\widetilde{F}(\widetilde{S}).
\end{equation}
Similarly to \cite{Schneider_Maier_CB}, as $\widetilde{\mathcal{L}}_{p_*}$ is self-adjoint, we define the orthogonal projection on the diffusive and the exponentially damped modes by
\begin{equation}
    \widetilde{P}_c(\ell)\widetilde{S}(\ell) = \left\langle \psi_1(\ell), \widetilde{S}(\ell)\right\rangle \psi_1(\ell), 
\end{equation}
\begin{equation}
    \widetilde{P}_s(\ell)\widetilde{S}(\ell) = \widetilde{S}(\ell)- \widetilde{P}_c(\ell)\widetilde{S}(\ell). 
\end{equation}
This separates (\ref{eq:VOC_Bloch}) into
\begin{equation}
    \partial_t \widetilde{S}_q (\ell,x,t) = \widetilde{\mathcal{L}}_{p_*,q}(\ell)\widetilde{S}_q(\ell,x,t) + \widetilde{P}_q(\ell)\widetilde{F}(\widetilde{S})(\ell,x,t), \quad \text{ for } q=c,s,
\end{equation}

where $\widetilde{\mathcal{L}}_{p_*,q}(\ell) = \widetilde{\mathcal{L}}_{p_*}(\ell) \widetilde{P}_q(\ell)$. As the Bloch transform is an isomorphism between $H_{ul}^\vartheta$ and \\ $L^2_{ul}([-\pi/\alpha, \pi/\alpha), H^\vartheta_{ul}((0,\alpha], \mathbb{C}))$, we consider the space $L^2_{ul}([-\pi/\alpha, \pi/\alpha), H^\vartheta_{ul}((0,\alpha], \mathbb{C}))$.  Note that the operator $\widetilde{\mathcal{L}}_{p_*,s}(\ell)$ is sectorial and has spectrum in the left half plane, bounded away from the imaginary axis. Hence the following estimate follows \cite[Theorem 1.5.3]{Henry}.

\begin{lemma}\label{lem:estimate_Ls}\emph{\textbf{(Estimate of the sectorial part)}}
    For the analytic semigroup generated by $\widetilde{\mathcal{L}}_{p_*,s}$ we have
    \begin{equation}
        \|e^{\widetilde{\mathcal{L}}_{p_*,s} t}\|_{L_{ul}^2} \leq Ce^{\sigma t/2}\leq C,
    \end{equation}
    for some $\sigma>0$, $C>0$ and all $t\geq 0$.
\end{lemma}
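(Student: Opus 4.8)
The plan is to prove the estimate fiberwise in the Bloch variable $\ell$ and then lift the resulting uniform bound to the uniformly local space. First I would identify, for each fixed $\ell \in [-k_f/2, k_f/2)$, the spectrum of the fiber operator $\widetilde{\mathcal{L}}_{p_*,s}(\ell) = \widetilde{\mathcal{L}}_{p_*}(\ell)\widetilde{P}_s(\ell)$. By Lemma~\ref{LEMMA:bandstructure}, $\widetilde{\mathcal{L}}_{p_*}(\ell)$ is self-adjoint on $\chi$ with discrete real spectrum $\{\lambda_n(\ell)\}_{n \in \mathbb{N}}$ and orthonormal eigenbasis $\{\psi_n(\ell,\cdot)\}$; since $\widetilde{P}_s(\ell)$ projects off the first band, the spectrum of the restricted operator is exactly $\{\lambda_n(\ell) : n \geq 2\}$. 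Using the ordering in Assumption~\ref{ASSUMP:Turing_instability}(b) together with the spectral gap in Assumption~\ref{ASSUMP:Turing_instability}(f), I would set $\sigma := -\max_\ell \lambda_2(\ell)$ and check that $\sigma>0$: the gap gives $\max_\ell \lambda_2(\ell) < \min_\ell \lambda_1(\ell) \leq \lambda_1(\ell_0) = 0$, so the entire stable spectrum lies in $(-\infty, -\sigma]$, uniformly in $\ell$ and bounded away from the imaginary axis.

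Next I would verify that $\widetilde{\mathcal{L}}_{p_*,s}(\ell)$ is sectorial with constants uniform in $\ell$. The principal part $-[(\partial_x+i\ell)^2+k_0^2]^2$ is a fourth-order elliptic operator generating an analytic semigroup on $\chi$, and $p_*$ is a bounded, relatively compact perturbation, so $\widetilde{\mathcal{L}}_{p_*}(\ell)$ is sectorial; restricting to the spectral subspace $\mathrm{ran}\,\widetilde{P}_s(\ell)$ preserves sectoriality and, by the previous paragraph, places the vertex of the sector to the left of $-\sigma$. The dependence of the sector constants on $\ell$ is controllable because $\ell$ ranges over a compact interval and the family $\widetilde{\mathcal{L}}_{p_*}(\ell)$ depends smoothly on $\ell$ by Assumption~\ref{ASSUMP:Turing_instability}(c). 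With uniform sectoriality in hand, the standard analytic-semigroup estimate \cite[Theorem 1.5.3]{Henry}, applied to each fiber with any decay rate strictly inside the gap (in particular $\sigma/2<\sigma$), yields $\|e^{\widetilde{\mathcal{L}}_{p_*,s}(\ell)t}\| \leq C e^{-\sigma t/2}$ with a finite constant $C$ independent of $\ell$ and of $t \geq 0$.

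Finally I would transfer this fiberwise bound to the $L^2_{ul}$ setting. Since the Bloch transform is an isomorphism between $H^\vartheta_{ul}$ and $L^2_{ul}([-\pi/\alpha,\pi/\alpha), H^\vartheta_{ul}((0,\alpha],\mathbb{C}))$, and the semigroup $e^{\widetilde{\mathcal{L}}_{p_*,s}t}$ acts as a Bloch multiplier (it is diagonal in $\ell$), the uniform-in-$\ell$ decay of the fiber semigroups together with a multiplier bound in the uniformly local norm in the spirit of \cite{SchneiderUecker, Schneider_Maier_CB} gives $\|e^{\widetilde{\mathcal{L}}_{p_*,s}t}\|_{L^2_{ul}} \leq C e^{-\sigma t/2} \leq C$ for all $t\geq 0$, which is the claimed estimate.

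I expect the main obstacle to be precisely this last transfer step. In the uniformly local spaces a pointwise-in-$\ell$ exponential bound does not automatically produce a bound on the operator norm over $L^2_{ul}$, because $L^2_{ul}$ is not a Hilbert space on which a Plancherel identity reduces the operator norm to the essential supremum of the fiber norms. One must instead invoke the Bloch-multiplier calculus, verifying that the $\ell$-regularity of the fiber resolvents (guaranteed by Assumption~\ref{ASSUMP:Turing_instability}(c)) suffices to bound the associated multiplier operator uniformly in $t$ while retaining the decay rate inherited from $\sigma$. Establishing this multiplier estimate is the technically delicate part; the sectoriality and the spectral-gap bookkeeping are comparatively routine.
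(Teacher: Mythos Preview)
Your approach is essentially the same as the paper's: the paper simply notes that $\widetilde{\mathcal{L}}_{p_*,s}(\ell)$ is sectorial with spectrum in the left half plane bounded away from the imaginary axis, and then invokes \cite[Theorem 1.5.3]{Henry} without further detail. Your proposal spells out the same argument more carefully---in particular, you correctly flag the transfer to $L^2_{ul}$ via a Bloch-multiplier estimate as the nontrivial step, a point the paper leaves implicit.
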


For the diffusive mode we find the following estimate.
\begin{lemma}\label{lem:estimate_Lc}\emph{\textbf{(Estimate of them   diffusive part)}}
    For the semigroup generated by $\widetilde{\mathcal{L}}_{p_*,c}$ we have
    \begin{equation}
    \sup_{t\in [0,~T_0/\varepsilon^2]} \|e^{\widetilde{\mathcal{L}}_{p_*,c} t } \|_{L_{ul}^\vartheta} \leq C,
    \end{equation}
    for some $C>0$ and all $t\geq 0$.
\end{lemma}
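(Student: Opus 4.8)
The plan is to exploit that, after the Bloch transform and the spectral projection $\widetilde{P}_c$, the semigroup generated by $\widetilde{\mathcal{L}}_{p_*,c}$ reduces to a scalar multiplier. Writing $\widetilde{P}_c(\ell)\widetilde{S}(\ell) = \widetilde{a}(\ell)\,\psi_1(\ell,\cdot)$ with $\widetilde{a}(\ell) = \langle \psi_1(\ell),\widetilde{S}(\ell)\rangle$, the semigroup acts by $\widetilde{a}(\ell)\mapsto e^{\lambda_1(\ell)t}\widetilde{a}(\ell)$, so that
\[
\big(e^{\widetilde{\mathcal{L}}_{p_*,c}t}\widetilde{S}\big)(\ell,\cdot) = e^{\lambda_1(\ell)t}\,\langle \psi_1(\ell),\widetilde{S}(\ell)\rangle\,\psi_1(\ell,\cdot).
\]
Everything thus reduces to controlling the scalar multiplier $M(\ell,t):=e^{\lambda_1(\ell)t}$ on the Bloch image $L^2_{ul}([-\pi/\alpha,\pi/\alpha),H^\vartheta_{ul})$, while the $H^\vartheta$-norms of the eigenfunctions $\psi_1(\ell,\cdot)$ and their $\ell$-derivatives are uniformly bounded by the regularity in Assumption~\ref{ASSUMP:Turing_instability}(c).

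First I would record the pointwise bound. Since the operator is taken at onset ($p=p_*$, i.e.\ $r=0$), Assumption~\ref{ASSUMP:Turing_instability}(d)--(f) give $\lambda_1(\ell)\le 0$ for all $\ell$, with $\lambda_1(\pm\ell_0)=0$, the parabolic bound $\lambda_1(\ell)\le -C(\ell-\ell_0)^2$ near $\pm\ell_0$, and a uniform bound $\lambda_1(\ell)\le -\delta<0$ off a neighbourhood of $\pm\ell_0$ (by compactness of the Bloch torus and uniqueness of the maximiser in Assumption~\ref{ASSUMP:Turing_instability}(d)). Hence $|M(\ell,t)|\le 1$ uniformly in $\ell$ and $t\ge 0$. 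Via Bloch--Parseval this already yields a uniform bound on $L^2(\mathbb{R})$; the content of the lemma is the corresponding bound on the uniformly local space.

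For the $ul$-estimate I would invoke a Bloch-multiplier theorem in the spirit of \cite{SchneiderUecker} (see also \cite{Schneider_Maier_CB}), which controls the operator norm on $H^\vartheta_{ul}$ by finitely many $\ell$-derivatives of the multiplier. The obstruction is that $\partial_\ell M = t\,\partial_\ell\lambda_1\,e^{\lambda_1 t}$ grows: near $\ell_0$ one has $\partial_\ell\lambda_1\sim \partial_\ell^2\lambda_1(\ell_0)(\ell-\ell_0)$, so $\partial_\ell M\sim t(\ell-\ell_0)e^{-c(\ell-\ell_0)^2 t}=O(\sqrt{t})$, which over $t\le T_0/\varepsilon^2$ is $O(\varepsilon^{-1})$, so the naive Mikhlin bound fails. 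The resolution is the diffusive scaling: using a smooth cut-off I would split $M$ into a piece supported near $\pm\ell_0$ and a piece on the spectral-gap region, the latter decaying exponentially and harmless. For the former I would compare $M$ to the Gaussian multiplier $e^{\frac12\partial_\ell^2\lambda_1(\ell_0)(\ell-\ell_0)^2 t}$, which in physical space is convolution with a Gaussian kernel of unit $L^1$-mass, hence bounded on $L^2_{ul}$ uniformly in $t$ by a Young-type inequality; the higher-order remainder $\lambda_1(\ell)-\frac12\partial_\ell^2\lambda_1(\ell_0)(\ell-\ell_0)^2=O((\ell-\ell_0)^3)$ produces, after the rescaling $\ell-\ell_0=L/\sqrt{t}$, contributions with $\ell$-derivatives bounded uniformly in $t$ (cf.\ \cite{Schneider_Diffusive_stab}).

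The main obstacle is precisely this uniform-in-$t$ control of the $ul$-norm over the long time scale $t\sim\varepsilon^{-2}$: unlike the sectorial part in Lemma~\ref{lem:estimate_Ls}, here there is no exponential gain, so boundedness must be extracted from the diffusive (Gaussian) structure of $M$ near the critical wavenumbers rather than from decay. I would make the scaling argument rigorous through the change of variables $\ell-\ell_0=L/\sqrt{t}$ (equivalently, tracking the slow variable $X=\varepsilon x$), under which the heat kernel has a fixed profile and the $ul$-norm transforms controllably, rendering the bound independent of $t$ and hence of $\varepsilon$. Summing the finitely many near-$\pm\ell_0$ and gap contributions and folding in the uniformly bounded $\psi_1(\ell,\cdot)$ then yields the stated constant $C$.
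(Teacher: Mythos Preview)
Your proposal is correct and follows essentially the same approach as the paper: both reduce the semigroup to multiplication by the scalar $e^{\lambda_1(\ell)t}$ and control it via the diffusive rescaling near $\pm\ell_0$, under which the multiplier becomes a Gaussian profile $e^{-cL^2T}$ with $T=\varepsilon^2 t\le T_0$ whose derivatives are uniformly bounded. The paper's proof is a terse sketch---it writes $\|e^{\lambda_1(\ell)t}\|_{L^\infty}$ and then remarks that the $C_b^2$-norm in the rescaled variable $L=(\ell-\ell_0)/\varepsilon$ stays bounded on the $\mathcal{O}(1/\varepsilon^2)$ timescale---so your explicit discussion of the $ul$-multiplier subtlety (derivative growth $O(\sqrt{t})$, Gaussian comparison, Young's inequality) fills in precisely what the paper leaves implicit.
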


\noindent\textbf{Proof.} Note that 
\begin{equation}
    \|e^{\widetilde{\mathcal{L}}_{p_*,c} t } \widetilde{S}_c\|_{H_{ul}^\vartheta} = \|e^{\lambda_1(\ell) t }\widetilde{S}_c \|_{H_{ul}^\vartheta} \leq \|e^{\lambda_1(\ell) t}\|_{L^\infty} \|\widetilde{S}_c \|_{H_{ul}^\vartheta}.
\end{equation}

Furthermore, note that $\lambda_1(\ell)<-\sigma_0$ for some $\sigma_0>0$ and $\ell\notin I^-\cup I^+$. For $\ell \in I^- \cup I^+$ we can use $\lambda_1(\ell)\leq -C(\ell\pm \ell_0)^2 = - C(\varepsilon L_\pm)^2$ for some $C>0$ (Assumption~\ref{ASSUMP:Turing_instability}(e)). 
Since $\lambda_1(\ell)\approx -C(\varepsilon L)^2$ for small $L$, the term $e^{\lambda_1(\ell)t}$ behaves as $e^{-L^2T}$ with $T=\varepsilon^2t$ so its $C_b^2$-norm is uniformly bounded for $t$ on an $\mathcal{O}(1/\varepsilon^2)$ timescale. Thus for $t\geq 0$ we find
\begin{equation}
     \sup_{t\in [0,~T_0/\varepsilon^2]}\|e^{\widetilde{\mathcal{L}}_{p_*,c} t }\widetilde{S}_c \|_{H_{ul}^\vartheta} \leq \sup_{t\in [0,~T_0/\varepsilon^2]} \| e^{\lambda_1(\ell) t}\|_{L^\infty} \|\widetilde{S}_c \|_{H_{ul}^\vartheta}\leq C  \|\widetilde{S}_c \|_{H_{ul}^\vartheta}.
\end{equation}
\qed

\noindent Together, this yields the following lemma.

\begin{lemma}\label{lem:semigroup_estimate} \emph{\textbf{(Semigroup estimate)}}
    For every $\vartheta>1/2$ there exists a $C>0$ such that for all $\varepsilon\in (0,1]$ the semigroup $(e^{t\mathcal{L}_{p_*}})_{t\geq 0}$ generated by $\mathcal{L}_{p_*}$ satisfies
    \begin{equation}
        \sup_{t\in [0,~T_0/\varepsilon^2]} \|e^{t\mathcal{L}_{p_*}}\|_{H_{ul}^\vartheta\rightarrow H_{ul}^\vartheta}\leq C.
    \end{equation}
\end{lemma}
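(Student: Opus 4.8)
The plan is to combine the two preceding lemmas via the orthogonal splitting $\widetilde{S} = \widetilde{P}_c\widetilde{S} + \widetilde{P}_s\widetilde{S}$ and then pull the estimate back to $x$-space using the fact that the Bloch transform is an isomorphism between $H_{ul}^\vartheta$ and $L^2_{ul}([-\pi/\alpha,\pi/\alpha),H^\vartheta_{ul}((0,\alpha],\mathbb{C}))$. First I would observe that, because $\widetilde{\mathcal{L}}_{p_*}$ is self-adjoint and the projections are orthogonal, the two generated semigroups act on mutually orthogonal subspaces and commute with the projections, so for any $\widetilde{S}$ one has
\begin{equation}
    e^{\widetilde{\mathcal{L}}_{p_*}t}\widetilde{S} = e^{\widetilde{\mathcal{L}}_{p_*,c}t}\widetilde{P}_c\widetilde{S} + e^{\widetilde{\mathcal{L}}_{p_*,s}t}\widetilde{P}_s\widetilde{S}.
\end{equation}
Applying Lemma~\ref{lem:estimate_Lc} to the first summand and Lemma~\ref{lem:estimate_Ls} to the second, together with the fact that orthogonal projections have norm bounded by $1$, yields a bound of the form $\|e^{\widetilde{\mathcal{L}}_{p_*}t}\widetilde{S}\|\leq C(\|\widetilde{P}_c\widetilde{S}\|+\|\widetilde{P}_s\widetilde{S}\|)\leq C\|\widetilde{S}\|$ uniformly for $t\in[0,T_0/\varepsilon^2]$, where the constant $C$ is independent of $\varepsilon$ precisely because both lemmas furnish $\varepsilon$-independent constants.

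Next I would transfer this to the unscaled operator. Since the Bloch transform $\mathcal{B}$ is an isomorphism and intertwines $\mathcal{L}_{p_*}$ with the fiber operator $\widetilde{\mathcal{L}}_{p_*}$, so that $e^{t\mathcal{L}_{p_*}} = \mathcal{B}^{-1}e^{t\widetilde{\mathcal{L}}_{p_*}}\mathcal{B}$, the operator-norm bound on $H_{ul}^\vartheta\to H_{ul}^\vartheta$ follows from the bound on the Bloch side multiplied by the (fixed, $\varepsilon$-independent) norms of $\mathcal{B}$ and $\mathcal{B}^{-1}$. The restriction $\vartheta>1/2$ enters here to guarantee that $H_{ul}^\vartheta$ embeds into the algebra structure needed for the isomorphism statement and the multiplication-to-convolution property to behave well, so I would invoke the function-space framework of Section~\ref{subsec:function_spaces} at this point rather than re-derive it.

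The main obstacle, and the step deserving the most care, is ensuring that the constant is genuinely uniform in $\varepsilon$ for times up to $T_0/\varepsilon^2$. The sectorial estimate of Lemma~\ref{lem:estimate_Ls} decays exponentially and so is harmless, but the diffusive estimate of Lemma~\ref{lem:estimate_Lc} is only bounded (not decaying) and its uniformity hinges delicately on the parabolic shape of $\lambda_1$ near $\pm\ell_0$ from Assumption~\ref{ASSUMP:Turing_instability}(e): one must verify that $e^{\lambda_1(\ell)t}$, rewritten in the slow variable $T=\varepsilon^2 t$ and the rescaled wavenumber $L_\pm$, has $C_b^\vartheta$-norm (in the $H^\vartheta_{ul}$ topology on the Bloch fiber) bounded independently of $\varepsilon$. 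Because differentiating the multiplier $e^{\lambda_1(\ell)t}$ in $\ell$ brings down factors of $t\,\partial_\ell\lambda_1 = \mathcal{O}(\varepsilon^{-2})\cdot\mathcal{O}(\ell\mp\ell_0)$, one must check that these are compensated by the Gaussian decay $e^{-CL_\pm^2 T}$ so that no inverse powers of $\varepsilon$ survive; this is exactly where the regularity up to three derivatives in Assumption~\ref{ASSUMP:Turing_instability}(c) is needed. Once this uniform control of the diffusive multiplier is secured, assembling the three lemmas gives the stated bound.
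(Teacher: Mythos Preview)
Your proposal is correct and follows essentially the same route as the paper: pass to Bloch space via the isomorphism $e^{t\mathcal{L}_{p_*}}=\mathcal{B}^{-1}e^{t\widetilde{\mathcal{L}}_{p_*}}\mathcal{B}$, split into the diffusive and exponentially damped parts, and invoke Lemmas~\ref{lem:estimate_Ls} and~\ref{lem:estimate_Lc}. Your additional discussion of why the diffusive multiplier stays uniformly bounded in $\varepsilon$ really belongs to the proof of Lemma~\ref{lem:estimate_Lc} rather than here, but it does no harm.
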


\noindent\textbf{Proof.}  
Let $Z(t): H^\vartheta_{ul}\rightarrow H^\vartheta_{ul}$ be the semigroup generated by $\mathcal{L}_{p_*}$. This semigroup is defined by $Z(t)= \mathcal{B}\widetilde{Z}(t)\mathcal{B}^{-1}$, where $\widetilde{Z}(t) = e^{\widetilde{\mathcal{L}}_{p_ *}t}$ for $t\geq 0$.

Applying Lemmas \ref{lem:estimate_Ls} and \ref{lem:estimate_Lc} yields the estimate
\begin{equation}
\begin{split}
     \sup_{t\in [0,~T_0/\varepsilon^2]}  \|Z(t)u\|_{H^\vartheta_{ul}} \leq & \sup_{t\in [0,~T_0/\varepsilon^2]} C \|\widetilde{Z}(t)\widetilde{u}\|_{L^2_{ul}}\\  \leq & \sup_{t\in [0,~T_0/\varepsilon^2]} C\|\widetilde{u}\|_{L_{ul}^2} \leq  \sup_{t\in [0,~T_0/\varepsilon^2]} C \|u\|_{H_{ul}^\vartheta}.  
\end{split}
\end{equation}
\qed

Hence the variations of constant formula is applicable \cite{Sell2002} and we find
\begin{equation}\label{eq:error_var_const}
    S(t) = e^{t\mathcal{L}_{p_*}} S(0) + \int_0^t e^{(t-\tau)\mathcal{L}_{p_*}} F(S)(\tau)~d\tau.
\end{equation}

\noindent To bound $S(t)$, we bound the different terms in $F$. For that we in particular need an estimate on the residual. We will formulate our results in the following for $u_{\rm Ans}$ built on $u_{\rm GL}$, but the analogous holds true for $u_{\rm AC}$.

\begin{lemma}\label{lem:res_est}\emph{\textbf{(Residual estimate)}}
Let $\vartheta\geq 1$ and let $A\in C([0,T_0], H_{ul}^{\vartheta_A})$ with $\vartheta_A = 3+\vartheta $ be a solution of the Ginzburg-Landau equation (\ref{eq:GL_forced_SH}) for $k_f$ non-resonant, or of (\ref{eq:GL_forced_SH_res}) for $k_f$ resonant. Then for all $\varepsilon_0\in (0,1]$, there exists a $C>0$ such that for all $\varepsilon\in (0,\varepsilon_0)$ we have 
    \begin{equation}
        \sup_{t\in[0,~T_0/\varepsilon^2]}\|{\rm Res}(u_{\rm Ans}(\cdot, t))\|_{H_{ul}^\vartheta}\leq C \varepsilon^4,   
    \end{equation}
    and 
    \begin{equation}
        \sup_{t\in[0,~T_0/\varepsilon^2]}\| u_{\rm Ans}(\cdot, t) - u_{\rm GL}(\cdot, t)\|_{H_{ul}^\vartheta}\leq C \varepsilon^{2}.
    \end{equation}
\end{lemma}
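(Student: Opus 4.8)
The plan is to prove both estimates by working in the Bloch-diagonalized variables, where the ansatz was tailored, and then transferring the bounds back to $x$-space through the Bloch isomorphism between $H_{ul}^\vartheta$ and $L^2_{ul}([-k_f/2,k_f/2);H^\vartheta_{\rm per})$. Since $u_{\rm Ans}$ is by construction the inverse Bloch transform of the diagonalized ansatz \eqref{eq:ansatz_bloch_diag}, its physical residual is, up to the $\psi_n$-projection, the inverse Bloch transform of the diagonalized residual $\widetilde{\mathbf{Res}}_n$ computed in Section~\ref{SEC:GL_derivation}. I would first record that, after imposing the Ginzburg--Landau equations \eqref{eq:GL_Bloch}--\eqref{eq:GL_Bloch_cc} (resp.\ \eqref{eq:GL_Bloch_res}--\eqref{eq:GL_Bloch_res_cc} in the resonant case, using Assumption~\ref{ASSUMP:Turing_instability}(g$'$)) and fixing the corrections $\widetilde{B}_n$ as in \eqref{eq:B_n_equations}, every explicit $\mathcal{O}(\varepsilon^2)$ contribution to $\widetilde{\mathbf{Res}}_n$ cancels, leaving a remainder that is formally $\mathcal{O}(\varepsilon^3)$ pointwise in $\ell$.

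The core of the first estimate is to show that this remainder is genuinely $\mathcal{O}(\varepsilon^3)$ in the Bloch norm and that it becomes $\mathcal{O}(\varepsilon^4)$ in $x$-space. The surviving terms are of three types, each carrying one extra power of $\varepsilon$: (i) the Taylor remainder $[\lambda_1(\ell)-\varepsilon^2(r+\tfrac12\partial_\ell^2\lambda_1(\ell_0)L_+^2)]\widetilde{A}_{1,+}$, controlled by Assumption~\ref{ASSUMP:Turing_instability}(e) and the third $\ell$-derivative of $\lambda_1$, which is where the weight $L_+^3$ (hence three extra $X$-derivatives, $\vartheta_A=3+\vartheta$) and Assumption~\ref{ASSUMP:Turing_instability}(c) enter; (ii) the mismatch $[1-\lambda_1(\ell)/\lambda_1(3\ell_0)]\widetilde{B}_{1,3}$ and its band analogues, which is $\mathcal{O}(\varepsilon)$ because $\widetilde{B}_{1,3}$ is concentrated in an $\mathcal{O}(\varepsilon)$ neighborhood of $3\ell_0$, with the denominators bounded away from zero by the non-resonance Assumption~\ref{ASSUMP:Turing_instability}(g) and the gap/ordering Assumptions~\ref{ASSUMP:Turing_instability}(b),(f); and (iii) the error from evaluating the cubic coefficients $\int|\psi_1(\ell,x)|^4\rho\,dx$ at $\ell$ rather than $\ell_0$, which is $\mathcal{O}(\varepsilon)$ by the $\ell$-differentiability of $\psi_1$ and continuity of $\rho$. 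Terms involving $\varepsilon^2\widetilde{B}_n$ in the cubic interaction are of even higher order. Crucially, since all these contributions are concentrated in Bloch support of width $\mathcal{O}(\varepsilon)$ around $\pm\ell_0,\pm3\ell_0$, returning to the uniformly local $x$-space norm supplies one additional factor of $\varepsilon$ (the Jacobian $d\ell=\varepsilon\,dL$, which does not accumulate in $\|\cdot\|_{L^2_{ul}}$ as it would in $\|\cdot\|_{L^2}$), upgrading $\mathcal{O}(\varepsilon^3)$ to the claimed $\mathcal{O}(\varepsilon^4)$.

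The second estimate is comparatively direct. From the $x$-space expansion \eqref{eq:ansatz_xspace}, $u_{\rm Ans}-u_{\rm GL}$ consists of the explicit correction $\varepsilon^2\partial_X A_1\,\partial_\ell\psi_1(\ell_0,x)e^{i\ell_0x}+\mathrm{c.c.}$ together with the inverse Bloch transform of the $\varepsilon^2\widetilde{B}_n$. Each carries an explicit $\varepsilon^2$, so I would bound $\|\partial_X A_1\|$ using $A\in C([0,T_0],H_{ul}^{3+\vartheta})$ and $\|\partial_\ell\psi_1(\ell_0,\cdot)\|_{L^2}$ via Assumption~\ref{ASSUMP:Turing_instability}(c), and estimate the $\widetilde{B}_n$ contribution in $H_{ul}^\vartheta$ by summing over the bands: Bessel's inequality (the ONB property, Assumption~\ref{ASSUMP:Turing_instability}(a)) bounds $\sum_n|\int\psi_1^3\overline{\psi_n}\rho|^2$, while $|\lambda_n|\to\infty$ (Assumption~\ref{ASSUMP:Turing_instability}(b)) and the gap make the series $\sum_n\lambda_n^{-1}(\cdots)\psi_n$ convergent, with $\ell$-regularity transferring to $H_{ul}^\vartheta$-regularity in $x$ through the isomorphism.

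I expect the main obstacle to be the rigorous treatment of the infinitely many correction bands $\widetilde{B}_n$ together with the mode-filter cut-off $\chi_\varepsilon$: one must show that the truncated $\star_\varepsilon$-convolutions differ from the full convolutions only by higher-order, rapidly decaying tails, that the series defining the $x$-space corrections converges in $H_{ul}^\vartheta$ uniformly in $\varepsilon$, and that the three extra derivatives encoded in $\vartheta_A=3+\vartheta$ exactly compensate the derivative losses from Taylor-expanding $\lambda_1$ and $\psi_1$ in $\ell$. Justifying the interchange of the infinite $n$-sum with the inverse Bloch transform, under only the qualitative spectral Assumptions~\ref{ASSUMP:Turing_instability}(a)--(g), is the delicate point.
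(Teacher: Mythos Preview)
Your proposal is correct and follows essentially the same route as the paper's own proof: estimate the diagonalized residual $\widetilde{\mathbf{Res}}_n$ in Bloch space using the cancellations built into the ansatz in Section~\ref{SEC:GL_derivation}, then transfer the bound to $H_{ul}^\vartheta$ via the Bloch isomorphism and the boundedness of the diagonalization map, and read off the second estimate from the $\psi_1$-expansion \eqref{eq:ansatz_xspace}. The paper's proof is considerably terser --- it delegates precisely the three mechanisms you identify (Taylor remainder of $\lambda_1$, the $\lambda_n(\ell)/\lambda_n(3\ell_0)$ mismatch, and the $\ell$-dependence of the cubic coefficients) to Lemma~3.3 and Lemma~4.3 of \cite{justification_nonlin_schr} and to \cite[Section~5]{martina_detection} --- whereas you spell them out explicitly; but the architecture is the same, including the extra factor of $\varepsilon$ gained from the Jacobian $d\ell=\varepsilon\,dL$ when passing from the $\mathcal{O}(\varepsilon^3)$ Bloch remainder to the $\mathcal{O}(\varepsilon^4)$ bound in $H_{ul}^\vartheta$. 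The obstacles you flag at the end (convergence of the $n$-sum, compatibility of the cut-off $\chi_\varepsilon$, the derivative count $\vartheta_A=3+\vartheta$) are exactly what those referenced lemmas handle.
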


\begin{proof}
Following \cite{justification_nonlin_schr}, Lemma 3.3 and Lemma 4.3 allows to bound the residual \eqref{eq:res_bloch_diag}. Furthermore, tracking the estimates for the various transformations using the reasoning in \cite[Section 5]{martina_detection} allows to port the estimate to the residual of the original equation \eqref{eq:error_eq}. In particular, note that the Bloch transform is an isomorphism between $H_{ul}^\vartheta$ and $L^2_{ul}([-k_f/2, k_f/2), H^\vartheta_{ul}((0,2\pi/k_f], \mathbb{C}))$ \cite{Scarpellini, Simon_Reed2} and that the diagonalization operator formalizing the expansion w.r.t. the ONB $(\psi_n)_{n \in \mathbb{N}}$ is bounded \cite[Lemma 4.3]{justification_nonlin_schr}. Finally, the estimate between $u_{\rm Ans}$ and $u_{\rm GL}$ can be deduced from the expansion of the eigenfunction $\psi_1$, as given in \eqref{eq:ansatz_xspace}.
\end{proof}

The remaining terms in (\ref{eq:def_F}) can be estimated as follows.

\begin{lemma}\label{lem:estimate_F}\emph{\textbf{(Estimate of $\bm{F}$)}}
    Let $\vartheta\geq 1$, $\vartheta_A=\vartheta+3$ and $A\in C([0,T_0], H_{ul}^{\vartheta_A})$ be a solution of the Ginzburg-Landau equation (\ref{eq:GL_forced_SH}) for $k_f$ non-resonant, or of (\ref{eq:GL_forced_SH_res}) for $k_f$ resonant. Then there is a $C>0$ such that for all $\varepsilon\in(0,1]$ we have
    \begin{equation}
        \|F\|_{H_{ul}^\vartheta}\leq C(\varepsilon^2 \|S\|_{H_{ul}^\vartheta} +\varepsilon^{2\beta}\|S\|_{H_{ul}^\vartheta}^3+  \varepsilon^{\beta +1} \| S\|_{H_{ul}^\vartheta}^2 + \varepsilon^2).
    \end{equation}
\end{lemma}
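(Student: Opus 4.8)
The plan is to bound the five contributions to $F$ in \eqref{eq:def_F} separately, using two facts repeatedly. First, for $\vartheta\ge 1>1/2$ the space $H_{ul}^\vartheta$ embeds continuously into $L^\infty$ and is a Banach algebra, so that $\|fg\|_{H_{ul}^\vartheta}\le C\|f\|_{H_{ul}^\vartheta}\|g\|_{H_{ul}^\vartheta}$, and multiplication by the periodic coefficient $\rho$ is a bounded operator on $H_{ul}^\vartheta$ (here boundedness, periodicity and the requisite regularity of $\rho$ enter; in the examples $\rho\equiv 1$ this is trivial). Second, I would establish the a priori bound
\[
\sup_{t\in[0,T_0/\varepsilon^2]}\|u_{\rm Ans}(\cdot,t)\|_{H_{ul}^\vartheta}\le C\varepsilon ,
\]
which follows from the explicit structure of $u_{\rm Ans}$: its leading part $\varepsilon A_1(\varepsilon\cdot,\varepsilon^2 t)\psi_1(\ell_0,\cdot)e^{i\ell_0\cdot}+\mathrm{c.c.}$ carries one factor of $\varepsilon$ out front, the slow profile obeys the scaling estimate $\|A_1(\varepsilon\cdot,T)\|_{H_{ul}^\vartheta}\le C\|A_1(\cdot,T)\|_{H_{ul}^{\vartheta_A}}$ uniformly in $\varepsilon\in(0,1]$, the carrier $\psi_1(\ell_0,\cdot)e^{i\ell_0\cdot}$ and its $\ell$-derivatives are bounded in the relevant norm by Assumption~\ref{ASSUMP:Turing_instability}(c), and $A_1\in C([0,T_0],H_{ul}^{\vartheta_A})$ with $\vartheta_A=\vartheta+3$ yields uniform control over $T=\varepsilon^2 t\in[0,T_0]$. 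The higher-order corrections in $u_{\rm Ans}=u_{\rm GL}+\mathcal O(\varepsilon^2)$ are of the same or smaller order, cf. Lemma~\ref{lem:res_est}.

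With these two tools the estimate becomes bookkeeping. The linear term gives $\|\varepsilon^2 rS\|_{H_{ul}^\vartheta}\le C\varepsilon^2\|S\|_{H_{ul}^\vartheta}$. The cubic term gives, by the algebra property and boundedness of multiplication by $\rho$, the bound $\|\varepsilon^{2\beta}\rho S^3\|_{H_{ul}^\vartheta}\le C\varepsilon^{2\beta}\|S\|_{H_{ul}^\vartheta}^3$. For the mixed quadratic term I would use the a priori bound on $u_{\rm Ans}$ to get $\|3\varepsilon^\beta\rho S^2 u_{\rm Ans}\|_{H_{ul}^\vartheta}\le C\varepsilon^\beta\|S\|_{H_{ul}^\vartheta}^2\|u_{\rm Ans}\|_{H_{ul}^\vartheta}\le C\varepsilon^{\beta+1}\|S\|_{H_{ul}^\vartheta}^2$, and similarly $\|3\rho S u_{\rm Ans}^2\|_{H_{ul}^\vartheta}\le C\|S\|_{H_{ul}^\vartheta}\|u_{\rm Ans}\|_{H_{ul}^\vartheta}^2\le C\varepsilon^2\|S\|_{H_{ul}^\vartheta}$, the latter being absorbed into the first contribution. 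Finally, the residual term is controlled directly by Lemma~\ref{lem:res_est}, giving $\|\varepsilon^{-\beta}\mathrm{Res}(u_{\rm Ans})\|_{H_{ul}^\vartheta}\le C\varepsilon^{4-\beta}$, which under the scaling $\beta=2$ used in \eqref{eq:error_S_def}–\eqref{eq:thm_error_est} equals the advertised $C\varepsilon^2$. Summing the five bounds produces exactly the claimed inequality.

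The routine parts are the algebra estimates; the two genuinely delicate points are the uniform-in-$\varepsilon$ scaling bound for $A_1(\varepsilon\cdot)$ in the uniformly local norm and the uniform-in-time control of $\|u_{\rm Ans}\|_{H_{ul}^\vartheta}$ on $[0,T_0/\varepsilon^2]$. The former I would carry out by dominating the $L_{ul}^2$ norm of $A_1(\varepsilon\cdot)$ by its scale-invariant $L^\infty$ norm (legitimate since $\vartheta>1/2$) and observing that each spatial derivative of $A_1(\varepsilon\cdot)$ produces an extra factor $\varepsilon$ while costing one derivative of $A_1$; this is precisely why the margin $\vartheta_A=\vartheta+3$ is needed and why \emph{no} negative power of $\varepsilon$ ever appears. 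The latter rests on $A_1$ being continuous into the high-regularity space $H_{ul}^{\vartheta_A}$, so that $\sup_{T\in[0,T_0]}\|A_1(\cdot,T)\|_{H_{ul}^{\vartheta_A}}<\infty$. I expect the main obstacle to be the treatment of the factor $\rho$: for $\vartheta\ge 1$ one needs $\rho$ to act as a multiplier on $H_{ul}^\vartheta$, so either $\rho$ must be taken sufficiently regular (rather than merely continuous) or the products $S^3,\,S^2u_{\rm Ans},\,Su_{\rm Ans}^2$ must first be formed in $H_{ul}^\vartheta$ and $\rho$ incorporated only where the Sobolev structure permits.
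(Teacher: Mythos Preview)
Your proposal is correct and follows essentially the same term-by-term strategy as the paper: bound each of the five contributions in \eqref{eq:def_F} separately, control $u_{\rm Ans}$ by $C\varepsilon$, and invoke Lemma~\ref{lem:res_est} for the residual. The only technical variation is that the paper estimates the factors $\rho$ and $u_{\rm Ans}$ in the $C_b^\vartheta$ norm via the multiplier bound $\|fg\|_{H_{ul}^\vartheta}\le C\|f\|_{C_b^\vartheta}\|g\|_{H_{ul}^\vartheta}$ (then Sobolev-embeds $\|A_1\|_{C_b^\vartheta}\le C\|A_1\|_{H_{ul}^{\vartheta+1}}$), whereas you place everything in $H_{ul}^\vartheta$ and use the algebra property; both routes are standard and equivalent here. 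Your caveat about the regularity of $\rho$ is well taken---the paper tacitly uses $\|\rho\|_{C_b^\vartheta}<\infty$ despite only stating continuity in the theorem.
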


\noindent\textbf{Proof.} First note that we can bound
\begin{equation}
    \|\rho S\|_{H_{ul}^\vartheta}\leq C\|\rho\|_{C_b^\vartheta} \|S\|_{H_{ul}^\vartheta}\leq C \|S\|_{H_{ul}^\vartheta},
\end{equation}
since $\rho$ is continuous and periodic. The different terms in $F$ can be bounded by
\begin{equation}
    \begin{split}
        \|\rho S^3\|_{H_{ul}^\vartheta} \leq & ~C\| S \|_{H_{ul}^\vartheta}^3,\\
        \|\rho ~u_{\rm Ans} S^2\|_{H_{ul}^\vartheta} \leq & ~C\|u_{\rm Ans}\|_{C_{b}^\vartheta} \|S\|_{H_{ul}^\vartheta}^2\\
        \|\rho ~u_{\rm Ans}^2 S\|_{H_{ul}^\vartheta}\leq & ~C\|u_{\rm Ans}\|_{C_{b}^\vartheta}^2\|S\|_{H_{ul}^\vartheta},\\
        \|\varepsilon^{-\beta} \text{Res}(u_{\rm Ans})\|_{H_{ul}^\vartheta}\leq & ~C\varepsilon^{4-\beta}\leq C\varepsilon^2, \text{ for $\beta$ chosen such that $\beta\leq 2$}.
    \end{split}
\end{equation}
The last estimate follows from Lemma \ref{lem:res_est}. We moreover have
\begin{equation}
    \|u_{\rm Ans}\|_{C_{b}^\vartheta}=\|x\mapsto \varepsilon(A_1(\varepsilon x)\psi_1(\ell_0,x) e^{i\ell_0x}+\rm{c.c.}+\rm{h.o.t.})\|_{C_{b}^\vartheta}\leq C\varepsilon\|A_1\|_{C_{b}^\vartheta}.
\end{equation}
From Sobolev's embedding theorem we have
\begin{equation}
    \|A\|_{C_{b}^\vartheta}\leq C\|A \|_{H_{ul}^{\vartheta+1}},
\end{equation}
which concludes the proof.\qed\\

We now have all the ingredients to prove Theorem \ref{thm:SH_approx}.\\

\noindent\textbf{Proof of Theorem \ref{thm:SH_approx}.} We will bound the mild solutions of the equations for the error $S$ in (\ref{eq:error_var_const}). Local existence and uniqueness of solutions in $H_{ul}^\vartheta$ for $\vartheta>\frac{1}{2}$ now follows from Lemma \ref{lem:semigroup_estimate} together with \cite[Theorem 5.2.22]{SchneiderUecker}. We have to ensure the norm of the solution stays finite for $t\in [0,~T_0/\varepsilon^2]$. Using the estimates from Lemma \ref{lem:semigroup_estimate} and \ref{lem:estimate_F} together with equation (\ref{eq:error_var_const}) we obtain
\begin{equation}
\begin{split}
    \|S(t)\|_{H_{ul}^\vartheta} & \leq \|e^{t\mathcal{L}_{p_*}} S(0)\|_{H_{ul}^\vartheta} + \\ & C \varepsilon^2 \int_0^t \|e^{(t-\tau)\mathcal{L}_{p_*}}\|_{H_{ul}^\vartheta} \left(\|S(\tau)\|_{H_{ul}^\vartheta} +\varepsilon^{2\beta-2} \|S(\tau)\|_{H_{ul}^\vartheta}^3 +\varepsilon^{\beta-1} \|S(\tau)\|^2_{H_{ul}^\vartheta}+1
    \right)~d\tau \\
    \leq & C \varepsilon^2 \int_0^t \left(\|S(\tau)\|_{H_{ul}^\vartheta} +\varepsilon^{2\beta-2} \|S(\tau)\|_{H_{ul}^\vartheta}^3 +\varepsilon^{\beta-1} \|S(\tau)\|^2_{H_{ul}^\vartheta}+1
    \right)~d\tau.
\end{split}
\end{equation}
Imposing
\begin{equation}\label{eq:proof_condition1}
    \varepsilon^{2\beta-2} \|S(\tau)\|_{H_{ul}^\vartheta}^3 +\varepsilon^{\beta-1} \|S(\tau)\|^2_{H_{ul}^\vartheta}\leq 1,
\end{equation}
yields
\begin{equation}
    \|S(t)\|_{H_{ul}^\vartheta} \leq C\varepsilon^2\int_0^t (\|S(\tau)\|_{H_{ul}^\vartheta}+2)~d\tau\leq 2CT_0+C\varepsilon^2\int_0^t\|S(\tau)\|_{H_{ul}^\vartheta}~d\tau.
\end{equation}

\noindent Applying Gronwall's inequality for integrals gives
\begin{equation}
    \|S(t)\|_{H_{ul}^\vartheta}\leq 2CT_0e^{C\varepsilon^2t}\leq 2CT_0e^{CT_0}=:M 
\end{equation}
for all $t\in[0,~T_0/\varepsilon^2]$. In order to fulfill requirement (\ref{eq:proof_condition1}), we choose $\varepsilon_0>0$ such that $\varepsilon_0^{2\beta-2}M^3+\varepsilon_0^{\beta-1}M^2\leq 1$. As we want $\beta$ to be as large as possible, but $\beta\leq 2$ (Lemma \ref{lem:estimate_F}), we take $\beta = 2$. This gives the estimate
\begin{equation}\label{eq:thm_eq_1}
        \sup_{t\in [0,~T_0/\varepsilon^2]} \|u(t)-u_{\rm Ans}(t)\|_{H_{ul}^\vartheta} \leq C\varepsilon^{2}.
    \end{equation}

\noindent The triangle inequality now gives
\begin{equation}
\begin{split}
    \sup_{t\in [0,~T_0/\varepsilon^2]}& \|u(t)-u_{\rm GL}(t)\|_{H_{ul}^\vartheta} \\ &\leq \sup_{t\in[0,~T_0/\varepsilon^2]} \|u(t)-u_{\rm Ans}(t)\|_{H_{ul}^\vartheta} +\sup_{t\in[0,~T_0/\varepsilon^2]} \|u_{\rm Ans}(t)-u_{\rm GL}(t)\|_{H_{ul}^\vartheta}\\ & \leq C\varepsilon^2 + C\varepsilon^{2} \leq C \varepsilon^{2}
\end{split}
\end{equation}
This proves the approximation theorem (Theorem \ref{thm:SH_approx}) with function space $Y=H_{ul}^\vartheta$ and $A_1\in C([0,T_0], H_{ul}^{\vartheta_A})$, $\vartheta_A=\vartheta+3$ and $\vartheta\geq 1$. \qed

\section{Related work}\label{sec:related_work}
There is a wealth of literature on the derivation of modulation equations for many different systems. While formal derivations have a long tradition in the physics and beyond, a rigorous treatment adding validity proofs with error estimates started in the early 1990s (e.g. \cite{SH_GL_val_cubic}). 

\medskip

We refrain from citing all different systems for which the Ginzburg-Landau equation arises (see, e.g. \cite{SchneiderUecker} for a thorough treatment of the subject). Staying in the context of Swift-Hohenberg equations, we direct attention to the following: In addition to modifying the linear part of the equation by applying a forcing term, non-local effects can be incorporated into other parts of the equation, leading to different alterations in the derived amplitude equation. An amplitude equation for the Swift-Hohenberg system with fractional Laplacian has been derived and justified in \cite{Kuehn_SH_diff}, while a Swift-Hohenberg system with non-local nonlinearity is considered in \cite{Kuehn_SH_nonlocal_nonlin}. In contrast to these works, which utilise the classical ansatz (\ref{eq:ansatz_cc}), we showed that adding a large forcing to the linearity results in a modification of the ansatz.  

\medskip

In addition to the Ginzburg-Landau equation, other types of modulation equations can be derived for e.g. dispersive systems with periodic coefficients. The Nonlinear Schr\"odinger Equation has been derived and justified for a $\phi^4$-model with spatially periodic coefficients in \cite{justification_nonlin_schr}. Some of the ideas from that work are applied here.
Heterogeneous media can also influence pulse propagation, depending on the size of the period of the heterogeneity, as shown in \cite{Nishiura_etal}.  Modulation equations for spatially periodic wavetrains in nonlinear partial differential equations are derived and analysed in \cite{DSSS_Dynamics_ModulatedWT}.

\medskip

This work is partially motivated by the study of dryland ecosystems, where vegetation patterns arise due to scarcity of water. In many dryland ecosystem models, such as the Klausmeier model \cite{Klausmeier1999}, it is assumed that the underlying terrain is flat. However, aerial images show that stripe patterns are prevalent on slopes, while hexagonal pattern structures are most commonly found on flat ground. Furthermore, numerical studies show differences in the movement of vegetation patches on slopes compared to flat terrain. These observations underpin the importance of topography to the possible vegetation patterns. Incorporating topography into a vegetation model results in a system with spatially varying coefficients. Keeping this application in mind, one can think of the periodic forcing in \eqref{EQ:SH_periodic_coefficients} as a periodic topography. 

\subsection{Modulation equations for small forcing}
As alluded to, the Ginzburg-Landau equation and variants thereof have already been derived for systems with spatially periodic forcing. However, to the best of our knowledge, these derivations have been limited to systems with small forcing strength. Furthermore, the validity of the derived Ginzburg-Landau equation in systems with spatial forcing has not been established for many of such systems. Ginzburg-Landau equations are derived (but not validated) for \eqref{EQ:SH_periodic_coefficients} with $p(x)=\varepsilon^2r+\gamma \cos(k_fx)$ and $\gamma$ of order $\mathcal{O}(\varepsilon^k)$ for $k=1,2$ in \cite{Ehud_2008, Ehud_2012, Ehud_spatialperforcing, Ehud_2014, Ehud_2015_1, uecker2001}. They have also been derived for a class of reaction-diffusion equations with quadratic nonlinearity and small varying terrain in \cite{Eckhaus_Kuske, Doelman_Schielen}.

\medskip

In \cite{Ehud_spatialperforcing}, the Swift-Hohenberg equation \eqref{EQ:SH_periodic_coefficients} with $\gamma = \sum_{i=1}^\infty |\varepsilon|^i \gamma_i$, $\gamma_i\sim\mathcal{O}(1)$ is considered. Without studying the linear problem, the classical multiple scaling ansatz (\ref{eq:ansatz_cc}) is used. For the non-resonant case, this leads to the Ginzburg-Landau equation
\begin{equation}\label{eq:GL_Ehud_nonres}
    \partial_T A = \left(r+\left(\frac{\gamma_1}{2}\right)^2(d_++d_-)\right)A + 4k_0^2 \partial_X^2 A - 3|A|^2A, \quad \text{with }d_\pm = \frac{1}{k_f^2(k_f\pm 2k_0^2)}. 
\end{equation}
Since our method does not exclude $\gamma$ small, but only uses assumptions on the spectrum, both methods should give similar results for $k_f$ non-resonant and $\gamma\ll 1$. In fact, using a Bloch wave ansatz and setting $\psi_1= 1 + \varepsilon \widetilde{\psi_1}$ should give the analogous result.

\medskip

In \cite{Ehud_spatialperforcing}, the main focus lies on resonances $k_f\approx nk_0$. A Ginzburg-Landau equation has been derived using the classical ansatz (\ref{eq:ansatz_cc}) for $k_f\approx k_0$ and $\gamma\sim \mathcal{O}(\varepsilon)$ and $k_f\approx 2k_0$ and $\gamma\sim\mathcal{O}(\varepsilon^2)$. In addition, the 2D case is covered for weak forcing $\gamma\ll1$. 
This indicates that it is possible to do the multidimensional case for larger forcing as well (see, e.g. \cite{Hoyle2006Pattern}, \cite{Gaff2023}).
\medskip

In \cite{Eckhaus_Kuske}, a class of equations that describe the evolution of patterns in problems with a slightly varying geometry is studied. Compared to our paradigm model, this relates to \eqref{EQ:SH_periodic_coefficients} with $p(x) = \varepsilon^2 r +\varepsilon^2 \gamma_2 f(k_fx)$ with $\gamma_2\sim\mathcal{O}(1)$, $k_f \sim o(1)$ and $f$ a function that admits a Fourier expansion $f(k_fx) = \sum_{n=-\infty}^{\infty} F_n e^{ink_f x}$. Similar as in \cite{Ehud_spatialperforcing}, the classical multiple scaling ansatz is used without analyzing the linear problem. The latter can be justified since the forcing only has influence at order $\mathcal{O}(\varepsilon^3)$ in the Ginzburg-Landau derivation. As $k_f=o(1)$, this yields the Ginzburg-Landau equation with spatially varying coefficients
\begin{equation}\label{eq:GL_Kuske}
    \partial_T A= \left(r +\gamma_2\sum_{n=-\infty}^\infty F_n e^{ink_f X/\varepsilon} \right)A + 4k_0^2 \partial_X^2A-3|A|^2A,
\end{equation}
which exists if $\sum_{n=-\infty}^\infty F_n e^{ink_f X/\varepsilon}$ converges in the appropriate function space. It is important to note that (\ref{eq:GL_Kuske}) has an explicit $\varepsilon$-dependency, which can be compensated by rewriting $k_f = \varepsilon \sigma_0$. .

\medskip

In contrast to the works mentioned above, linear analysis is carried out and used to derive an amplitude equation in \cite{Doelman_Schielen}. In the non-resonant cases, with forcing $\gamma\ll1$, this yields an approximation
\begin{equation}
    u_{\rm Ans}(x,t) = \varepsilon A(X,T) e^{i\ell_0x}\sum a_n e^{ink_fx} +c.c. + h.o.t.,  
\end{equation}
where the coefficients $a_n$ are yet unknown and where $A$ satisfies the Ginzburg-Landau equation
\begin{equation}\label{eq:GL_Schielen}
    \partial_TA = r\partial_R\lambda_1(\ell_0) A -\frac{1}{2}\partial_\ell^2 \lambda_1(\ell_0) \partial_X^2 A -3(1+\gamma^2\beta_2)|A|^2A,
\end{equation}
with $\beta_2$ a complex number. As the Bloch wave ansatz does not exclude small wavenumbers, both approaches should give a similar result for small forcing.

\medskip

\medskip
   
In \cite{Ehud_spatialperforcing, Eckhaus_Kuske}, no justification of the Ginzburg-Landau approximation has been done. To show validity of the approximations for the systems with forcings of order $\gamma\sim\mathcal{O}(\varepsilon^2)$, for a great part the proof of the constant coefficient case can be followed. We introduce an error $S$ as in (\ref{eq:error_S_def}), which satisfies the PDE (\ref{eq:error_eq}) with $\mu_*=0$ and $\gamma_*=\varepsilon^2\gamma_2$. In case of $p(x)=\varepsilon^2 r + \varepsilon^2\gamma_2 f(k_fx)$ as above, the $\cos(k_fx)$ term is replaced by $f(k_fx)$. Upon taking $\mathcal{L}:= -(k_0^2+\partial_x^2)^2$, the semigroup generated by $\mathcal{L}$ is strongly continuous and can be bounded \cite{SchneiderUecker}. The residual can be decreased to order $\mathcal{O}(\varepsilon^4)$ using standard techniques. Lastly, the forcing term $\varepsilon^2\gamma_2f(k_f x)$ is included into the equation for $F(S)$, which satisfies Lemma \ref{lem:estimate_F}. Following the rest of the proof in Section \ref{sec:justification} leads to validity of the approximations in \cite{Ehud_spatialperforcing, Eckhaus_Kuske}.

The validity of the approximation for forcing $\gamma\sim\mathcal{O}(\varepsilon)$, done in \cite{Ehud_spatialperforcing}, requires more work. We again introduce an error $S$, defined in (\ref{eq:error_S_def}), satisfying (\ref{eq:error_eq}) with $\mu_*=0$ and $\gamma_*=\varepsilon\gamma_1$. Due to the size of the forcing, $\varepsilon\gamma_1\cos(k_fx)S$ cannot be part of $F(S)$ anymore. Instead, we can consider $\mathcal{L}_{p_*} := -(k_0^2+\partial_x^2)^2 + \varepsilon\gamma_1\cos(k_fx)$. Similarly as before, we now have to show that the semigroup generated by $\mathcal{L}_{p_*}$ is strongly continuous and can be bounded for $t\in[0, T_0/\varepsilon]$. To that end, we need to understand the spectrum, hence proving validity without doing spectral analysis is not possible anymore. For $k_f$ non-resonant, Lemma \ref{lem:semigroup_estimate} still holds true. However, for $k_f$ resonant, due to the absence of the spectral gap between the first two spectral curves, the splitting argument in the proof of Lemma \ref{lem:semigroup_estimate} needs to be adjusted.

\subsection{Connection to stability analysis of periodic solutions of the Swift-Hohenberg equation}

The Swift-Hohenberg equation with constant coefficients possesses stationary periodic solutions that bifurcate for $r=0$ from the background state $u=0$ \cite{Schneider_Diffusive_stab, eckmann1997geometric}. For $r=1$, these solutions are of the form
\begin{equation}
    u_{\rm per}(\omega, \varphi, \varepsilon)[x]=\varepsilon(\sqrt{1-4\omega^2}/\sqrt{3})e^{i(k_0+\varepsilon\omega)x}e^{i\varphi}+c.c.+\mathcal{O}(\varepsilon^2),
\end{equation}
with $\omega,\phi\in\mathbb{R}$.
Linearising around $u_{\rm per}$ results in the linear operator $\mathcal{L}_{\rm per}$, given by
\begin{equation}
    \mathcal{L}_{\rm per} = -(k_0^2+\partial_x^2)^2 +\varepsilon^2 -3u_{\rm per}^2.
\end{equation}
For $\varphi=0$, we find
\begin{equation}
    \mathcal{L}_{\rm per} = -(k_0^2+\partial_x^2)^2 +\varepsilon^2\left[ 1-2(1-4\omega^2) -2(1-4\omega^2)\cos(2k_0+2\varepsilon\omega)\right]+\mathcal{O}(\varepsilon^3)\, .
\end{equation}
Note that $\mathcal{L}_{\rm per} = \mathcal{L}_p$ for 
\begin{equation}
    p(x)= \varepsilon^2\left(1-2(1-4\omega^2)\right) -2\varepsilon^2(1-4\omega^2)\cos(2k_0+2\varepsilon\omega)+\mathcal{O}(\varepsilon^3),
\end{equation}
meaning that the linearisation around $u_{\rm per}$ gives the same spectrum as the forced Swift-Hohenberg system where both $\mu$ and $\gamma$ are of order $\mathcal{O}(\varepsilon^2)$ and $k_f\approx 2k_0$ is resonant. Hence the spectrum is similar to the spectrum in Figure \ref{fig:SH_spec_res_cases} (bottom left). In this case, it is known that the derivative of $u_{\rm per}$ is the eigenfunction corresponding to eigenvalue $0$. Therefore, it is possible to compute $\psi_1$ analytically for this specific case.

\newpage

\section{Discussion and Outlook}\label{sec:discussion}

We have proven that, even in the presence of large forcing, solutions of a periodically forced Swift-Hohenberg system can be approximated on long finite timescale by solutions of a Ginzburg-Landau equation. Compared to the constant coefficient or small forcing case, the multiple scaling ansatz must be adjusted to a Bloch wave structure. At leading order, this approximation depends on the first Bloch eigenfunction, and the coefficients of the Ginzburg-Landau equation therein depend on the shape of the first spectral curve. We anticipate many different extensions of the present work.

\subsection{Resonances without spectral gap}\label{sec:resonant_extension}
Throughout this work, we assume that the spectrum satisfies Assumption~\ref{ASSUMP:Turing_instability}. As explained Section~\ref{sec:resonances}, further research is required to derive and justify the Ginzburg-Landau equations in case there is no spectral gap between the first two spectral curves (Assumption~\ref{ASSUMP:Turing_instability}(f)), which often occurs for resonant wavenumbers. 
Other potential extensions of this work are considering $k_f\gg 1$, or using $\widetilde{p}$ to create the bifurcation (see Remark \ref{rem:p_tilde}). While the latter primarily increases the complexity of the derivation, we do not expect it to introduce new phenomena. 

\medskip

\subsection{Piece-wise constant forcing}\label{sec:piecewise} 

For most forcing functions $p$, the spectral curves and the first Bloch eigenfunction can only be computed numerically. However, for a specific choice of $p$, they can be computed analytically. Examples of such a $p$ are the piece-wise constant functions. Following the method in \cite{MCB_Breather_2011, breathers_stab}, we can use Floquet-Bloch analysis to compute the band structure $(\ell,\lambda_n(\ell))_{n \in \mathbb{N}}$ and $\psi_1$ explicitly for this class of functions. 
Such an explicit calculation will be part of future work. Numerical computations for \eqref{EQ:SH_periodic_coefficients} with $p$ constructed periodically as $p(x+2\pi/k_f) =p(x)$, $x\in\mathbb{R}$, and $p(x) =  \mu + \gamma\chi_{[2\pi/3k_f, ~4\pi/3k_f)}$ for $x\in[0, 2\pi/k_f)$ show that the spectrum splits at the intersections again. Compared to the $\cos(k_fx)$ forcing, the shape of $\lambda_n(\ell)$ changes much slower upon increasing $\gamma$. The spectrum for the non-resonant case $k_f=3.5$ and $\gamma=2.2$ is shown in Figure \ref{fig:SH_spec_ppw_kf3p5}.

\begin{figure}[ht]
    \centering
    \includegraphics[width=0.45\linewidth]{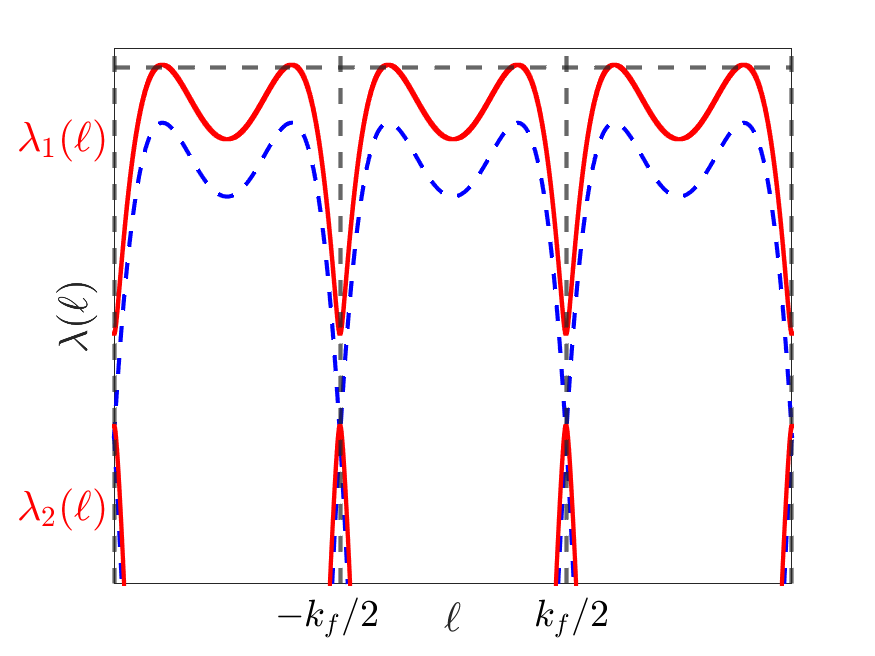}
    \caption{Spectrum of Swift-Hohenberg equation, forced with a piecewise constant forcing function with parameters $k_f=3.5$, $\mu=-0.75$ and $\gamma=2.2$.}
    \label{fig:SH_spec_ppw_kf3p5}
\end{figure}

\subsection{Non-periodic forcing}
Even when forcing with a non-periodic function $p$, numerical studies show that the zero state bifurcates into a pattern solution, where patterns arise at places where $p(x)>0$. This phenomenon is shown in Figure \ref{fig:SH_time_evol_ter1}, for an artificially constructed terrain function $p(x)= \mu + \gamma \widetilde{p}(x)$ with $\widetilde{p}$ non-periodic. 

\begin{figure}[ht]
\centering
    \begin{subfigure}[b]{0.4\textwidth}
    \includegraphics[width=\textwidth]{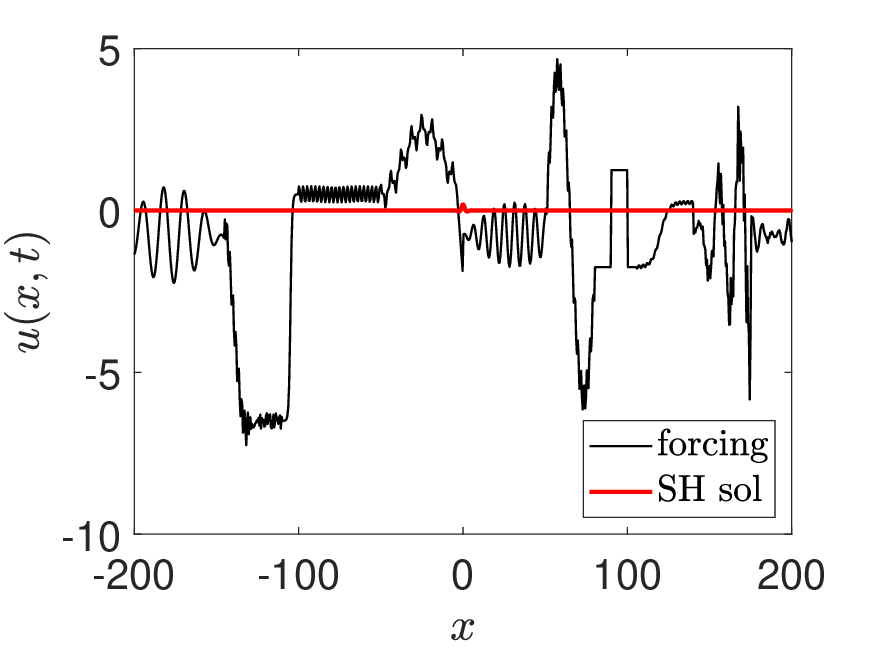}
    \caption{Initial condition ($t=0$)}
  \end{subfigure}
      \begin{subfigure}[b]{0.4\textwidth}
    \includegraphics[width=\textwidth]{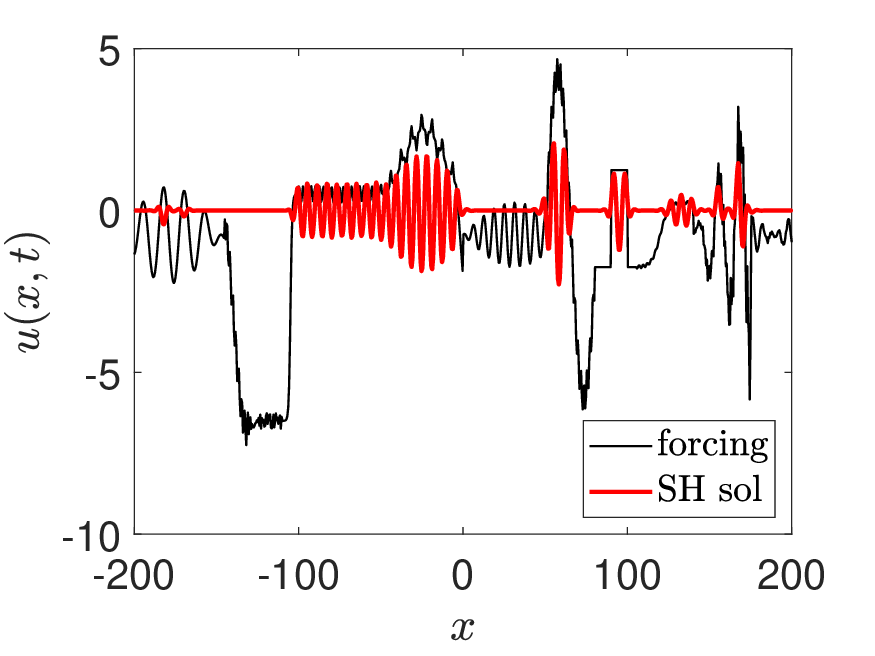}
    \caption{Solution at time $t=500$}
  \end{subfigure}
  \caption{Solution in time of system \eqref{EQ:SH_periodic_coefficients} with $p(x)$ non-periodic in red, forcing $p(x)$ plotted in black.}
    \label{fig:SH_time_evol_ter1}
\end{figure}

\subsection{Multidimensional case}\label{sec:2d}
Pattern formation in higher dimensions is an active area of research with many open questions. Directly related to our research is \cite{Ehud_spatialperforcing}, which took small forcing (specifically $\gamma = \mathcal{O}(\varepsilon^2)$) in a two-dimensional periodically forced Swift-Hohenberg model. The large forcing case has not yet been investigated, and would be a natural next step. We refrain from giving an extensive list of articles investigating pattern formation in the two-dimensional case and simply mention the classic sources \cite{Hoyle2006Pattern} and \cite{book_Ehud} and recent advances such as \cite{Gaff2023} and \cite{TzouTzou2020CurvedTerrain}, \cite{TzouTzou2019CurvedTorus} to which we might find immediate connections.

\subsection{Pattern formation in reaction-diffusion systems}

We have studied the Swift-Hohenberg equation as a paradigm model for pattern formation. Since the applicability of the obtained results to real-world systems is an important consideration, ongoing research addresses this by applying the techniques in this paper to vegetation models with a spatially periodic terrain. To this end, we use the one-dimensional extended Klausmeier model \cite{bastiaansen_doelman}, given by 
\begin{equation}\label{eq:ext_Klausmeier}
\begin{split}
        \partial_t U(x,t) &= D\partial_x^2 U(x,t) + h'(x)\partial_xU(x,t) + h''(x)U(x,t)+a-U(x,t)-U(x,t)V(x,t)^2,\\
        \partial_t V(x,t) &= \partial_x^2 V(x,t) - mV(x,t) +U(x,t)V(x,t)^2, 
\end{split}
\end{equation}
with $x\in\mathbb{R}$, $t\geq 0$, $U=U(x,t), ~V=V(x,t)\in\mathbb{R}$, parameters $D, a, m\in\mathbb{R}_{>0}$ and $h\in C_b^3(\mathbb{R})$. Here, $U$ represents water and $V$ represents vegetation. The amount of rainfall is represented by parameter $a$ and the mortality rate of the vegetation by $m$. The topography is represented by $h(x)$, yielding the spatially varying coefficients. In the constant coefficient case, the system has three background states: one stable bare soil state and two vegetated states $(U_\pm, V_\pm)$, with $(U_-, V_-)$ always being unstable. The spectral picture of the linear operator for the constant coefficient case around $(U_+, V_+)$ bears similarities with the spectral picture of the constant coefficient Swift-Hohenberg system and solutions after the Turing bifurcation can be approximated by solutions of a Ginzburg-Landau equation
\cite{GKGS_SDHR}. Although the spectrum of the constant coefficient case closely resembles that of our paradigm model, it is more complex. In the constant coefficient case, the location of the maximum $k_0$ shifts as the system passes through onset. After introducing $h$, the background states change. In particular, the new background state is no longer constant. For $h(x) = \gamma \cos(k_f x)$, with forcing strength $\gamma$ and different wavenumbers $k_f$, the new background state is shown in Figure \ref{fig:KM_BS}. Additionally, the influence of the term $\partial_x U$ is new compared to our paradigm model, resulting in an imaginary component in the spectrum. Addressing these changes will be the focus of future research.

\begin{figure}[ht]
\centering
    \begin{subfigure}[]{0.4\textwidth}
    \includegraphics[width=\textwidth]{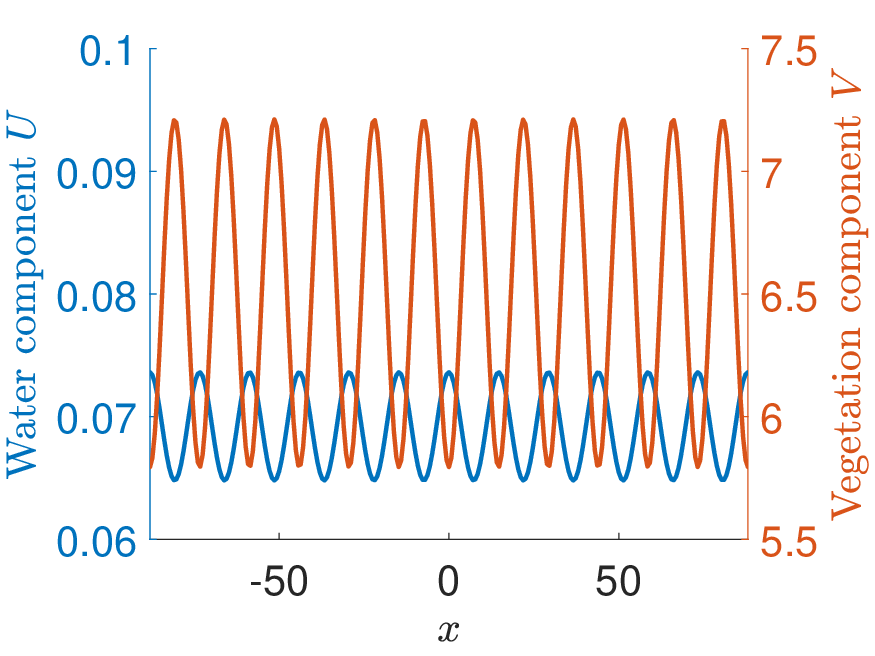}
    \caption{Background state for $k_f=k_0$}
  \end{subfigure}
    \begin{subfigure}[]{0.4\textwidth}
    \includegraphics[width=\textwidth]{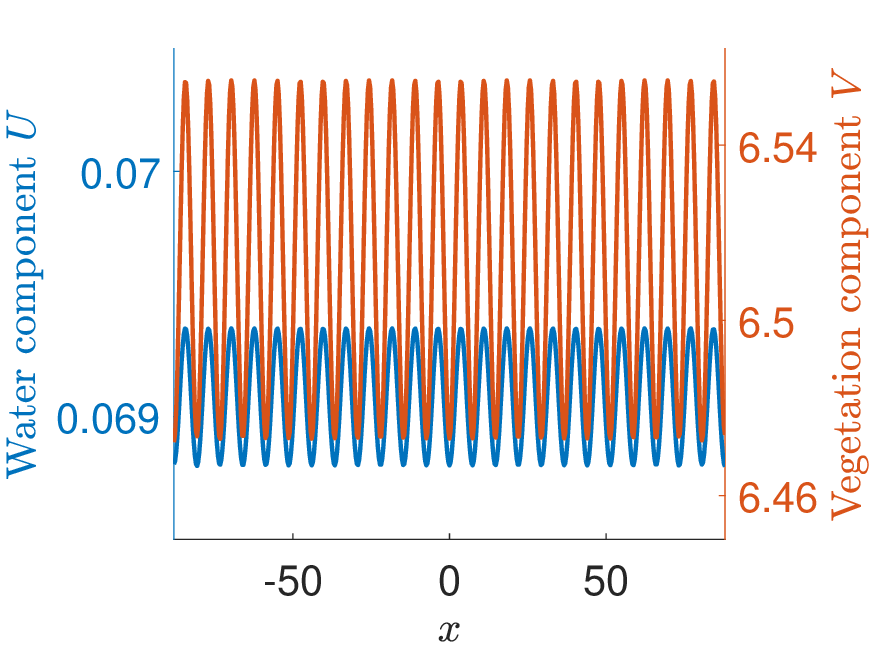}
    \caption{Background state for $k_f=2k_0$}
    \end{subfigure}
  \caption{Background state of extended Klausmeier model with $h(x) = \gamma \cos(k_fx)$, for $a=3$, $m=0.45$, $D=500$, $\gamma = 3$ and different values of $k_f$ relative to $k_0$.}
  \label{fig:KM_BS}
\end{figure}

\textbf{Acknowledgement.} The authors acknowledge the support of the research through NWO grant 613.009.154.

\newpage
\setlength\bibitemsep{0.4\baselineskip}
\printbibliography[heading=bibintoc,title={Bibliography}]

\end{document}